\newcommand\vldbpagestyle{plain}
\newcommand{\nonl}{\renewcommand{\nl}{\let\nl\oldnl}}
\newtheorem{example}{Example}[section]
\newtheorem{theorem}{Theorem}[section]
\newtheorem{lemma}{Lemma}[section]
\newtheorem{definition}{Definition}[section]
\newcommand{\myparagraph}[1]{\vspace{1mm} \noindent \textbf{#1}.}
\newcommand{\exppara}[1]{\noindent \textbf{#1}.}
\newcommand{\ie}{{i.e.,}\xspace}
\newcommand{\eg}{{e.g.,}\xspace}
\newcommand{\etal}{et al.\xspace}
\newcommand{\mrg}{\textsf{MFG}\xspace}
\newcommand{\mrgs}{\textsf{MFG}s\xspace}
\newcommand{\rg}{$\lambda$-frequency group\xspace}
\newcommand{\sstrneighbor}{{s-neighbor}\xspace}
\newcommand{\sstrneighbors}{{s-neighbors}\xspace}
\newcommand{\strneighborset}[1]{{N(#1, {G})}\xspace}
\newcommand{\strneighbordeg}[1]{{d(#1, {G})}\xspace}
\newcommand{\fstrdegree}{{structural degree}\xspace}
\newcommand{\sstrdegree}{{s-degree}\xspace}
\newcommand{\smomeneighbor}{{m-neighbor}\xspace}
\newcommand{\smomeneighbors}{{m-neighbors}\xspace}
\newcommand{\fmomedegree}{{momentary degree}\xspace}
\newcommand{\smomedegree}{{m-degree}\xspace}
\newcommand{\momeneighborset}[2]{{\Gamma(#1, #2)}\xspace}
\newcommand{\momeneighbordeg}[2]{{\delta(#1, #2)}\xspace}
\newcommand{\frequent}{{frequency}\xspace}
\newcommand{\nflag}{notRepeat\xspace}
\newcommand{\bkalg}{\textsf{BK-ALG}\xspace}
\newcommand{\bkalgp}{\textsf{BK-ALG$+$}\xspace}
\newcommand{\filtervr}{\textsf{FilterV-}\xspace} 
\newcommand{\filterv}{\textsf{FilterV}\xspace}
\newcommand{\filtervfr}{\textsf{FilterV-FR}\xspace} 
\newcommand{\filtervvm}{\textsf{FilterV-VM}\xspace} 
\newcommand{\filtervcm}{\textsf{FilterV-CM}\xspace}
\newcommand{\vfree}{\textsf{VFree}\xspace}
\newcommand{\vfreer}{\textsf{VFree-}\xspace} 
\newcommand{\vfreecm}{\textsf{VFree-CM}\xspace}
\newcommand{\twoarr}{array\xspace}
\newcommand{\abcore}[1]{\textsf{Core}(#1)\xspace}
\begin{document}
\title{Efficient Maximal Frequent Group Enumeration in Temporal Bipartite Graphs}

\author{Yanping Wu}
\affiliation{%
  \institution{University of Technology Sydney}
}
\email{yanping.wu@student.uts.edu.au}

\author{Renjie Sun}
\affiliation{%
  \institution{East China Normal University}
}
\email{renjie.sun@stu.ecnu.edu.cn}

\author{Xiaoyang Wang}
\affiliation{%
  \institution{The University of New South Wales}
}
\email{xiaoyang.wang1@unsw.edu.au}

\author{Dong Wen}
\affiliation{%
  \institution{The University of New South Wales}
}
\email{dong.wen@unsw.edu.au}

\author{Ying Zhang}
\affiliation{%
  \institution{University of Technology Sydney}
}
\email{ying.zhang@uts.edu.au}

\author{Lu Qin}
\affiliation{%
  \institution{University of Technology Sydney}
}
\email{lu.qin@uts.edu.au}


\author{Xuemin Lin}
\affiliation{%
  \institution{Shanghai Jiaotong University}
}
\email{xuemin.lin@sjtu.edu.cn}


\begin{abstract}
Cohesive subgraph mining is a fundamental problem in bipartite graph analysis. In reality, relationships between two types of entities often occur at some specific timestamps, which can be modeled as a temporal bipartite graph. However, the temporal information is widely neglected by previous studies. Moreover, directly extending the existing models may fail to find some critical groups in temporal bipartite graphs, which appear in a unilateral (i.e., one-layer) form. To fill the gap, in this paper, we propose a novel model, called maximal $\lambda$-\frequent group (\mrg). Given a temporal bipartite graph $\mathcal{G}=(U,V,\mathcal{E})$, a vertex set $V_S \subseteq V$ is an \mrg if $i)$ there are no less than $\lambda$ timestamps, at each of which $V_S$ can form a $(\tau_U,\tau_V)$-biclique with some vertices in $U$ at the corresponding snapshot, and $ii)$ it is maximal. To solve the problem, a filter-and-verification (\filterv) method is proposed based on the Bron\text{-}Kerbosch framework, incorporating novel filtering techniques to reduce the search space and \twoarr-based strategy to accelerate the frequency and maximality verification. Nevertheless, the cost of frequency verification in each valid candidate set computation and maximality check could limit the scalability of \filterv to larger graphs. Therefore, we further develop a novel verification-free (\vfree) approach by leveraging the advanced dynamic counting structure proposed. Theoretically, we prove that \vfree can reduce the cost of each valid candidate set computation in \filterv by a factor of $\mathcal{O}(|V|)$. Furthermore, \vfree can avoid the explicit maximality verification because of the developed search paradigm. Finally, comprehensive experiments on 15 real-world graphs are conducted to demonstrate the efficiency and effectiveness of the proposed techniques and model. 
\end{abstract}

\maketitle

\pagestyle{\vldbpagestyle}


\section{Introduction}
\label{sec:intro}

Bipartite graphs are widely used to model the complex relationships between two types of entities, e.g.,
author-paper networks~\cite{wang2022towards,wang2020efficient}, 
customer-product networks~\cite{lyu2020maximum,yang2021efficient,beutel2013copycatch,kim2022abc,zhou2021butterfly}, 
and patient-disease networks~\cite{aziz2021multimorbidity,zhao2019integrating,krishnagopal2020identifying}.
As a fundamental problem in graph analysis, cohesive subgraph mining is broadly investigated. 
To analyze the properties of bipartite graphs, many cohesive subgraph models have been proposed, such as $(\alpha,\beta)$-core~\cite{DBLP:conf/www/LiuYLQZZ19}, bitruss~\cite{wang2022towards} and biclique~\cite{lyu2020maximum,sun2022maximal,yao2022identifying,DBLP:journals/tkde/SunWWCZL24}.
Among these models, biclique, which requires every pair of vertices from different vertex sets to be mutually connected, has gained widespread popularity due to its unique features and diverse applications.
However, existing models on bipartite graphs primarily focus on static graphs, disregarding the temporal aspect of relationships in real-world applications.
For instance, Figure~\ref{fig:intro} shows a customer-product network. The timestamps on an edge between two vertices indicate when a customer purchased the corresponding product.
In a patient-disease network, patients and diseases can be represented by two disjoint vertex sets, and the timestamps on an edge represent when the patient suffered from the disease. 
The above cases can be modeled as a temporal bipartite graph $\mathcal{G}=(U,V,\mathcal{E})$, where each edge $e\in \mathcal{E}$ can be represented as a tuple $(u,v,t)$, indicating that the interaction between vertex $u \in U$ and vertex $v\in V$ occurs at timestamp $t$ (e.g., \cite{aziz2021multimorbidity,chen2021efficiently}).
The bipartite graph with all the edges at timestamp $t$ is called a snapshot of $\mathcal{G}$, denoted by $G_t$.

Analyzing the properties of temporal bipartite graphs is essential to reveal more sophisticated semantics.
In the literature, many subgraph models are defined for unipartite scenarios.
For instance, Li~\etal~\cite{li2018persistent} propose a $k$-core based model to capture the persistence of a community within the time interval.
Qin~\etal~\cite{qin2019mining} design a clique based model to find the subgraphs periodically occurring in the temporal graphs.
Qin~\etal~\cite{qin2022mining} 
propose the $(l,\delta)$-maximal bursting core, where each vertex has an average degree no less than $\delta$ during a period of length no less than $l$.
Although temporal subgraph search and enumeration problems have been extensively studied on temporal unipartite graphs, the temporal bipartite graph case is still under-explored.
Due to the involvement of two distinct types of entities,
existing models on temporal unipartite graphs are not suitable for capturing important patterns in temporal bipartite graphs.
Only a few recent works consider temporal bipartite graphs, such as $(\alpha,\beta)$-core based persistent community search~\cite{li2023persistent} and temporal butterfly counting and enumeration~\cite{cai2023efficient}.
Unfortunately, the aforementioned temporal models are mainly established on interval-based or periodic-based constraints, which fail to capture the real scenarios occurring irregularly at non-consecutive timestamps.
Additionally, there is no existing work considering the \textit{unilateral} (i.e., one-layer) frequency of the occurred group in temporal bipartite graphs, which is an important factor in identifying practical groups.
For example, in a temporal customer-product bipartite graph, a group of users who frequently act together (e.g., purchase the same items at different timestamps) may show strong connections. 
Existing works about temporal subgraph mining are inadequate for the above scenario, highlighting the need to define the exclusive model tailored for temporal bipartite graphs.


To fill the gap, in this paper, we propose a novel model, called \textit{\underline{M}aximal $\lambda$-\underline{F}requency \underline{G}roup} (\mrg), to characterize the unilateral patterns in temporal bipartite graphs.  
In our model, we leverage \textit{biclique} to measure the cohesiveness of bipartite subgraphs, due to its diverse applications such as social recommendation~\cite{liu2006efficient} and anomaly detection~\cite{lyu2020maximum}.
Specifically, given two size constraints $\tau_U$, $\tau_V$ and a frequency constraint $\lambda$, a vertex set $V_S \subseteq V$ in a temporal bipartite graph $\mathcal{G}=(U,V,\mathcal{E})$ is an \mrg if 
$i)$ there are no less than~$\lambda$ timestamps, at each of which $V_S$ can form a $(\tau_U,\tau_V)$-biclique with some vertices in $U$ in the corresponding snapshot graph, 
and $ii)$ it is maximal. 
A $(\tau_U, \tau_V)$-biclique $S=(U_S,$ $V_S,E_S)$ is a biclique with $|U_S|\geq\tau_U$ and $|V_S|\geq\tau_V$.
Note that, an \mrg $V_S$ is unilateral, i.e., only consists of vertices from $V$.
In addition, the vertices in $U$ that form $(\tau_U,\tau_V)$-bicliques with $V_S$ can be different for different timestamps.

\begin{figure}[t] 
	\centering
	\includegraphics[width=0.9\linewidth]{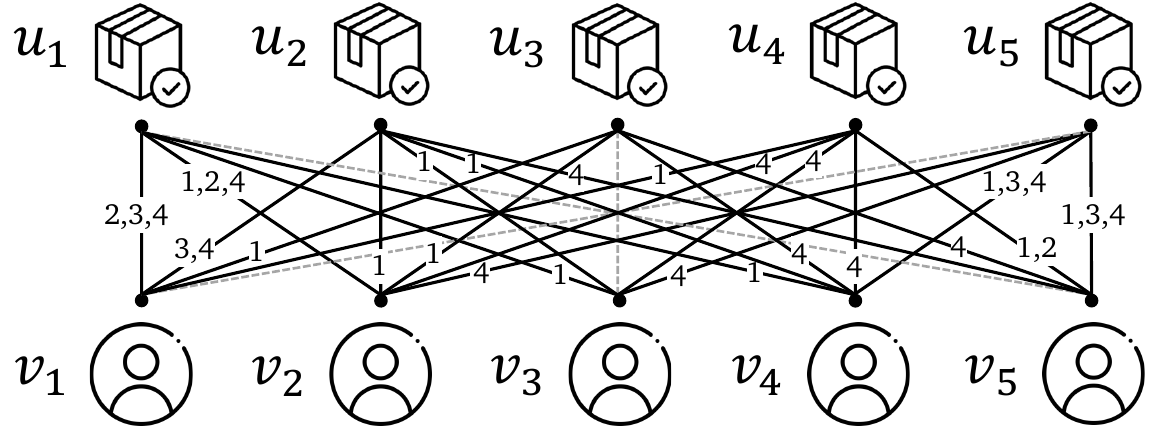}
        \caption{Customer-product temporal bipartite graph (dotted lines denote the edges at $t=5$ for presentation simplicity)}
	\label{fig:intro}
\end{figure}

\begin{example}\label{example:intro}
    Reconsider the temporal customer-product bipartite graph in Figure~\ref{fig:intro} with $\tau_U=2$, $\tau_V=2$ and $\lambda=2$. 
    Suppose a company plans to promote a new product to a group of customers with similar interests. 
    Directly extending the static graph model (i.e., biclique) to temporal bipartite graphs may fail to retrieve useful patterns.
    For instance, if we treat it as a static graph, i.e., ignore the timestamp information, the whole graph itself is a biclique. All five customers will be grouped together and treated equally since they all purchased all the products.
    If we consider it as a temporal bipartite graph, we still cannot find practical results by applying the frequent $(\tau_U, \tau_V)$-biclique model, which is the $(\tau_U, \tau_V)$-biclique occurring in at least $\lambda$ snapshots. 
    For the \mrg model, $\{v_2,v_3,v_4\}$ is the returned result, since these customers frequently act together, i.e., it exists in biclique $\{u_1, u_2,{v_2},$ ${v_3},{v_4}\}$ at timestamp $t=1$ and in biclique $\{u_4,u_5,{v_2},{v_3},{v_4}\}$ at timestamp $t=4$.
    As discussed before, customers within a group, who frequently act together, are more likely to share similar preferences or behavioral patterns.
    Identifying these customer-specified communities is crucial to improve the performance of downstream tasks such as product recommendation and enhance customer engagement.
\end{example}    

    Besides biclique, many other cohesive subgraph models are proposed in the literature for bipartite graph analysis, such as degree-based model $(\alpha,\beta)$-core and butterfly-based model bitruss. 
    All these models have various applications in different domains~\cite{DBLP:conf/www/LiuYLQZZ19,wang2022towards}. In this paper, we focus more on the strong connections among entities (e.g., co-purchase) in the target layer. Thus, we choose biclique, which is the most cohesive model. Recall the example in Figure~\ref{fig:intro}, if we directly replace the biclique constraint with (2,2)-core in our model, all users $\{v_1,v_2,v_3,v_4,v_5\}$ will be returned as a group. 
    
In this paper, we employ the frequency constraint (i.e., $\lambda$), since frequently appearing patterns are usually worthy of attention and may represent the important concept in the environment~\cite{zhang2023discovering,aslay2018mining,yang2016diversified,yang2004complexity,yang2006computational}, such as frequent co-purchasing behavior discussed above. 
Besides, compared with the existing temporal models that mainly focus on interval-based or periodic-based timestamp constraints, 
our model can better capture the real-world patterns occurring frequently.
Moreover, we employ the maximality constraint, since any subset of an \mrg with size no less than $\tau_V$ is also a $\lambda$-\frequent group. Without the maximality constraint, it may generate many redundant results. 
In this paper, we aim to enumerate all \mrgs from a temporal bipartite graph.

In addition to the application of customer analysis mentioned above, \mrg can find many other applications in different domains. 
For instance, a temporal bipartite graph is a suitable data structure to model the data of patients' diagnostic records, where the vertices correspond to patients (i.e., $U$) and health conditions (i.e., $V$), and the links indicate the presence of a diagnosis at the corresponding timestamp~\cite{aziz2021multimorbidity,zhao2019integrating,krishnagopal2020identifying}.  
By mining \mrgs from the patient-condition temporal bipartite graph, we can find the combinations of health conditions that frequently and simultaneously appear in multiple patients. The results can provide data support for the study of multimorbidity, facilitating diagnosis and prevention \cite{schafer2014reducing,vetrano2020twelve}.
In Section~\ref{exp}, we present two case studies on real-world datasets to illustrate the effectiveness of our model.

\myparagraph{Challenges and our approaches}
To the best of our knowledge, we are the first to propose and investigate the maximal $\lambda$-frequency group (\mrg) enumeration problem in temporal bipartite graphs.
We prove the hardness of counting \mrgs.     
    In the literature, maximal biclique enumeration is the most relevant problem to ours (\eg~\cite{abidi2020pivot,chen2022efficient,zhang2014finding,yao2022identifying}).
    However, the introduction of temporal and unilateral aspects significantly complicates the problem.
    Naively, we can enumerate all bicliques over each snapshot and post-process the intermediate results. 
    Due to the hardness of the biclique enumeration problem in static bipartite graphs, treating each timestamp separately is time-consuming.
    Moreover, since the correlation among vertices varies over time, considering temporal and cohesive aspects simultaneously in algorithm design is nontrivial.
In previous studies, the Bron\text{-}Kerbosch (BK) framework is widely used for biclique enumeration (\eg~\cite{abidi2020pivot,chen2022efficient}). It iteratively adds vertices from the candidate set to expand the current result in a DFS manner for biclique enumeration. 
    By extending the BK framework, in our problem, we need to further check the frequency and maximality constraints for each candidate group.
    Since \mrg focuses on the unilateral vertex set and the number of bicliques in each timestamp is large, it means numerous candidate groups will be generated while only very few of them will belong to the final results. 
    Thus, the huge time cost to apply the naive frequency verification on each candidate group and eliminate the non-maximal results presents a unique challenge for our problem.

To address the challenges, in this paper, we first propose a filter-and-verification (\filterv) approach by leveraging the BK framework. 
Generally, \filterv maintains a recursive search tree and traverses in a depth-first manner.
In each iteration, \filterv iterates over all the candidate vertices (i.e., vertex-oriented search paradigm), verifies the frequency constraint and obtains the \textit{valid candidate set}.
Note that, for each vertex in the valid candidate set, the group composed of it and the current processing set still meets the frequency requirement.
If no more vertices can be added to the current processing set to form a new frequent group, \filterv terminates the current search branch and checks the maximality of the vertex set.
Since there could be many vertices that cannot be involved in any \mrgs, a novel structure 
$(\tau_V,\tau_U,\lambda)$-core is first designed to reduce the search space.
Then, a filter strategy is proposed to first efficiently prune the candidate set before the examination, which can reduce the unnecessary call of frequency verification.
To further accelerate the frequency verification of a given vertex set, we present an elaborate \twoarr-based verification strategy.
The frequency check method is also employed to accelerate the maximality verification by avoiding the numerous set comparisons.

Even though \filterv can remarkably accelerate the \mrg enumeration procedure, we need to compute the valid candidate set for each current processing result during the search. 
The overall cost significantly increases with the size of candidate set and the workload of maximality verification, which may hinder its scalability to larger graphs.
As shown in Table~\ref{tab:timeofgcs}, whose details can be found in Section~\ref{sec:method2}, the components of the valid candidate set computation and maximality verification take up a majority of the overall execution time.
Therefore, if we can reduce or even avoid the cost of frequency and maximality verification to some extent, the overall performance can be significantly improved.
Motivated by this,
we further develop a novel verification-free (\vfree) approach. 
Instead of iterating over vertices during the candidate set computation and maximality verification, 
we develop  
a timestamp-oriented search paradigm.
That is, \vfree iterates through the timestamps to obtain the valid candidate set using the advanced dynamic counting structures proposed, 
where the unpromising timestamp can be skipped and common neighbor information can be carried forward in the subsequent search process.
Theoretically, we prove that \vfree can significantly
reduce each valid candidate set computation cost in \filterv by a factor of $\mathcal{O}(|V|)$.
Additionally, by integrating the developed search paradigm and dynamic counting techniques, \vfree can avoid explicit maximality verification.

\myparagraph{Contributions} The main contributions of the paper are summarized as follows.


\begin{itemize}[leftmargin=*,topsep=0pt]

\item  To capture the properties of temporal bipartite graphs, we conduct the first research to propose and investigate the maximal $\lambda$-\frequent group enumeration problem.  \hfill (Section~\ref{sec:pre})

\item To solve the problem, we introduce a filter-and-verification framework. Novel $(\tau_V,\tau_U,\lambda)$-core structure and candidate filtering rule are developed to shrink the search space. Advanced \twoarr-based method is proposed to accelerate the computation of valid candidate set and maximality verification. \hfill (Section~\ref{sec:method1})

\item To overcome the frequency verification cost and scale for larger networks, we further develop a verification-free framework by leveraging the dynamic counting structure proposed. The framework can also avoid the explicit verification of maximality based on the propounded search paradigm. \hfill (Section~\ref{sec:method2})

\item Extensive experiments are conducted on 15 real-world graphs to demonstrate the performance of proposed techniques and model. Compared with the baseline, the optimized method can achieve up to three orders of magnitude speedup. 
\hfill (Section~\ref{exp})

\end{itemize}

\section{{Preliminary and Problem Definition}}
\label{sec:pre}


\begin{figure}
	\centering
	\includegraphics[width=\linewidth]{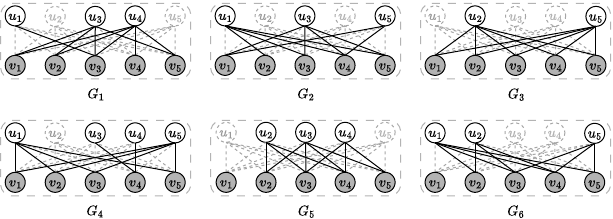}
	\caption{{A temporal bipartite graph $\mathcal{G}$ with six timestamps ($G_1$-$G_6$ are the corresponding snapshots and solid lines denote the edges in each snapshot)}}
	\label{fig:tbg}
\end{figure}

\subsection{Preliminary} 
Let $\mathcal{G}=(U,V,\mathcal{E})$ denote an undirected temporal bipartite graph, 
where $U$ and $V$ are two disjoint vertex sets, \ie $U \cap V = \emptyset$, 
and $\mathcal{E}$ is the set of temporal edges.
$(u,v,t)$ denotes a temporal edge between $u\in U$ and $v\in V$, where 
$t$ is the interaction timestamp between $u$ and $v$. 
Without loss of generality, 
we use $\mathcal{T}=\{t_1,t_2,\dots,t_{|\mathcal{T}|}\}$ to represent the set of timestamps, i.e., 
$\mathcal{T}=\{t|(u,v,t)\in \mathcal{E}\}$\footnote{We use the same setting as the previous studies for the timestamp, which is the integer, since the UNIX timestamps are integers in practice \cite{qin2022mining,zhang2023discovering}.}. 
Given a temporal bipartite graph $\mathcal{G}$, its corresponding \textit{static bipartite graph}, i.e., by ignoring all the timestamps on edges, is denoted by $G=(U,V,E)$, where $E=\{(u,v)|(u,v,t) \in \mathcal{E}\}$.
We can extract a series of snapshots $\{G_1,G_2,\dots,G_{\mathcal{|T|}}\}$ from $\mathcal{G}$ based on the timestamps.
Specifically, given a timestamp $t\in \mathcal{T}$, its corresponding snapshot is a bipartite graph $G_t =(U_t, V_t, E_t)$, where $U_t =\{ u|(u,v,t)\in \mathcal{E}\}$, $V_t =\{ v|(u,v,t)\in \mathcal{E}\}$, and $E_t =\{(u,v)| (u,v,t) \in \mathcal{E}\}$.

\begin{definition} [Structural neighbor (\sstrneighbor)] 
Given a vertex $u\in \mathcal{G}$, the \textit{\sstrneighbor} set of $u$ is the set of vertices connected to $u$ in 
$G$, denoted by $\strneighborset{u}$, i.e., $\strneighborset{u} = \{v| (u, v) \in E\}$. 
$\strneighbordeg{u}$ denotes its \textit{\fstrdegree} (\textit{\sstrdegree}), i.e., $\strneighbordeg{u}$ $= |\strneighborset{u}|$.
\end{definition}

\begin{definition}[Momentary neighbor (\smomeneighbor)]
\label{mn}
Given a vertex $u \in \mathcal{G}$ and a timestamp $t \in \mathcal{T}$, the \textit{\smomeneighbor} set of $u$ at $t$ is the set of vertices connected to $u$ in $G_t$, denoted by $\momeneighborset{u}{t}$, \ie $\momeneighborset{u}{t} = \{v|(u,v) \in E_t\}$. We use $\momeneighbordeg{u}{t}$ to denote its \textit{\fmomedegree} (\textit{\smomedegree}) at $t$, \ie $\momeneighbordeg{u}{t} = |\momeneighborset{u}{t}|$.
\end{definition}

\begin{example}
    {Figure \ref{fig:tbg} shows a temporal bipartite graph with six snapshots $G_1$ to $G_6$.}
    The \sstrneighbors of $u_1$ are $\{v_1, v_2, v_3, v_4, v_5\}$ and the \sstrdegree $\strneighbordeg{u_1}$ is 5. The \smomeneighbor of $u_1$ at $t=1$ is $\{v_3\}$ and the corresponding \smomedegree $\momeneighbordeg{u_1}{1}$ is 1.
\end{example}

Given a static bipartite graph $G = (U,V, E)$, a biclique $S=(U_S, V_S, E_S)$ is a complete subgraph of $G$, where $U_S \subseteq U$, $V_S \subseteq V$, and for each pair of vertices $u\in U_S$ and $v\in V_S$, we have $(u,v) \in E_S$.

\begin{definition}[$(\tau_U, \tau_V)$-biclique]
Given a static bipartite graph $G$ and two positive integers $\tau_U$ and $\tau_V$, a $(\tau_U, \tau_V)$-biclique $S=(U_S,$ $V_S,E_S)$ is a biclique of $G$ with $|U_S|\geq\tau_U$ and $|V_S|\geq\tau_V$.
\end{definition}

\subsection{Problem Definition}

In this paper,
we aim to retrieve the frequent vertex set in unilateral layer, \eg the customer set in Figure~\ref{fig:intro}.
\textbf{For presentation simplicity, we assume the frequent vertex set is from $V$.}
{Before introducing the \rg, we first define the support timestamp for the unilateral vertex set below.}

\begin{definition} [Support timestamp]
\label{def:st}
Given a temporal bipartite graph $\mathcal{G}=(U,V,\mathcal{E})$, a subset $V_S \subseteq V$, a snapshot $G_t =(U_t, V_t, E_t)$ of the timestamp $t \in \mathcal{T}$, and two positive integers $\tau_U$, $\tau_V$, we say $t$ is a support timestamp of $V_S$, 
if $i)$ $V_S \subseteq V_t$, and $ii)$ $V_S$ can form a $(\tau_U, \tau_V)$-biclique with a subset of vertices in $U_t$, 
\ie $V_S$ is included in a $(\tau_U, \tau_V)$-biclique of $G_t$.
\end{definition}

\begin{definition}[\rg]\label{def:rg}
Given a temporal bipartite graph $\mathcal{G}=(U,V,\mathcal{E})$, and three positive integers $\tau_U$, $\tau_V$ and $\lambda$, a \rg is a subset of vertices $V_S \subseteq V$ where there are at least $\lambda$ support timestamps in $\mathcal{T}$ for $V_S$.
\end{definition}

\begin{definition}[Maximal \rg (\textbf{\mrg})]\label{def:mrg}
Given a temporal bipartite graph $\mathcal{G}=(U,V,\mathcal{E})$ and three positive integers $\tau_U$, $\tau_V$ and $\lambda$, a \rg $V_S$ is maximal if there is no other \rg $V'_S$ that is a superset of $V_S$.
\end{definition}

\noindent \textbf{Problem definition.} 
Given a temporal bipartite graph $\mathcal{G}$ and 
three positive integers $\tau_U$, $\tau_V$ and $\lambda$, 
in this paper, we aim to find all the maximal $\lambda$-frequency groups (\mrgs) in $\mathcal{G}$.

\begin{example}
\label{fig:pre}
Considering the temporal bipartite graph shown in Figure \ref{fig:tbg}, suppose $\tau_U=\tau_V=2$ and $\lambda=3$. 
There are three \mrgs in the graph,
\ie $V_{S1}=\{v_1,v_2,v_3,v_5\}$ with 3 support timestamps $\{t_1,t_3,t_4\}$, $V_{S2}=\{v_2, v_3,v_4\}$ with 3 support timestamps $\{t_3,t_5,t_6\}$ and $V_{S3}=\{v_3,v_4,v_5\}$ with 4 support timestamps $\{t_2,t_3,t_5,t_6\}$.
\end{example}

\noindent \textbf{Problem properties.}
Based on the definition of \mrg, Lemmas~\ref{l1} and \ref{l2} can be immediately obtained. The proofs are omitted. 
Then, we show the hardness of our problem in Theorem~\ref{np}.

\begin{lemma}[Structural property]
\label{l1}
Given a temporal bipartite graph $\mathcal{G}=(U,V,\mathcal{E})$ and three positive integers $\tau_U$, $\tau_V$ and $\lambda$, any \mrg
$V_S \subseteq V$ must be contained in a maximal $(\tau_U,\tau_V)$-biclique of the static bipartite graph $G$ of $\mathcal{G}$.
\end{lemma}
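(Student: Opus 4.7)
The plan is to exhibit an explicit $(\tau_U,\tau_V)$-biclique of the static graph $G$ whose $V$-side contains $V_S$, and then enlarge it greedily until it is maximal. Since $V_S$ is an \mrg, it admits at least $\lambda \geq 1$ support timestamps; pick any one, say $t^*$. By Definition~\ref{def:st}, $V_S$ is included in some $(\tau_U,\tau_V)$-biclique $(U^*,V^*)$ of the snapshot $G_{t^*}$, so $|U^*| \geq \tau_U$, $|V^*| \geq \tau_V$, $V_S \subseteq V^*$, and every pair in $U^* \times V^*$ is an edge of $E_{t^*}$.

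Because $E_{t^*} \subseteq E$ by construction of $G=(U,V,E)$, the complete bipartite structure survives unchanged, so $(U^*,V^*)$ is already a $(\tau_U,\tau_V)$-biclique of $G$. Starting from it, I would repeatedly append any vertex $u \in U$ adjacent in $G$ to every current $V$-side vertex, or any vertex $v \in V$ adjacent in $G$ to every current $U$-side vertex, as long as such a vertex exists. Each step preserves the complete bipartite structure and strictly grows one side, so the procedure terminates after finitely many iterations; the terminal object admits no further extension and is, by definition, a maximal $(\tau_U,\tau_V)$-biclique of $G$ whose $V$-side contains $V^* \supseteq V_S$.

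The only step that requires any thought is lifting the momentary biclique at $t^*$ into $G$ via the monotonicity $E_{t^*} \subseteq E$; the greedy extension is standard, and the mild observation that ``non-extendable'' and ``maximal in the $(\tau_U,\tau_V)$-sense'' coincide follows because adding vertices can only enlarge each side and hence preserves both lower-bound thresholds. Consequently the lemma is a direct structural consequence of the definitions together with this monotonicity, with no essential obstacle.
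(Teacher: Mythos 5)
Your proof is correct: picking any support timestamp $t^*$, lifting the snapshot biclique into $G$ via the containment $E_{t^*}\subseteq E$, and greedily extending to a maximal $(\tau_U,\tau_V)$-biclique whose $V$-side still contains $V_S$ is exactly the argument this lemma rests on. The paper omits the proof entirely (it declares Lemmas~\ref{l1} and~\ref{l2} immediate from the definitions), so your write-up simply supplies the details the authors considered obvious, and it does so without any gap.
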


\begin{lemma}[Antimonotone property]
\label{l2}
Given a vertex set $V_S \subseteq V$, $i)$ if $V_S$ satisfies the frequency constraint, any subset of $V_S$ also satisfies the constraint; 
$ii)$ if $V_S$ does not meet the frequency constraint, any superset of $V_S$ is not frequent. 
\end{lemma}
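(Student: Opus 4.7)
The plan is to prove part $(i)$ directly from the definitions of support timestamp and $\lambda$-frequency group, and then obtain part $(ii)$ as its logical contrapositive, so essentially all the work is concentrated in the first half.

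For part $(i)$, fix any subset $V_S' \subseteq V_S$ and let $T_S$ denote the set of support timestamps of $V_S$, so $|T_S| \geq \lambda$. I would pick an arbitrary $t \in T_S$ and unfold Definition~\ref{def:st}: there exists a $(\tau_U,\tau_V)$-biclique $(U_B, V_B, E_B)$ of the snapshot $G_t$ with $V_S \subseteq V_B$. Since $V_S' \subseteq V_S \subseteq V_B$, the very same biclique witnesses that $t$ is also a support timestamp of $V_S'$, because the two size constraints $|U_B|\ge \tau_U$ and $|V_B|\ge\tau_V$ are inherited verbatim from the reused biclique and require no re-checking. Hence every support timestamp of $V_S$ is also a support timestamp of $V_S'$, giving $V_S'$ at least $\lambda$ of them by Definition~\ref{def:rg}.

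Part $(ii)$ is then immediate by contraposition: if some superset $V_S'' \supseteq V_S$ were $\lambda$-frequent, applying part $(i)$ to the pair $(V_S, V_S'')$ would force $V_S$ to be $\lambda$-frequent as well, contradicting the hypothesis.

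There is no serious obstacle here; the lemma is essentially bookkeeping on top of Definitions~\ref{def:st} and~\ref{def:rg}. The only subtle point worth flagging, and the one step I would write out carefully, is that in passing from $V_S$ to $V_S'$ one must keep the witnessing biclique $(U_B, V_B, E_B)$ unchanged rather than trying to shrink $V_B$ down to $V_S'$: if the witness were pruned, the $(\tau_U,\tau_V)$ size requirement on the $V$-side of the biclique could be violated when $|V_S'|<\tau_V$. Reusing the original witness sidesteps this issue and makes the argument a one-liner once the definitions are unpacked.
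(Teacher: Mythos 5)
Your proof is correct; the paper omits this proof as immediate, and your argument is exactly the intended one: each support timestamp of $V_S$ is witnessed by a $(\tau_U,\tau_V)$-biclique whose $V$-side contains $V_S$ and hence every subset of $V_S$, so support timestamps are inherited downward, and part $(ii)$ follows by contraposition. Your remark about reusing the original witness biclique rather than shrinking its $V$-side to $V_S'$ is the one genuinely non-trivial point (it is why the claim holds even when $|V_S'|<\tau_V$), and it is handled correctly.
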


\begin{theorem}\label{np}
The problem of counting all \mrgs is \#P-complete.
\end{theorem}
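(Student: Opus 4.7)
The plan is to establish both directions of \#P-completeness. For membership in \#P, given a candidate vertex set $V_S \subseteq V$, I would verify the MFG property in polynomial time by: (i) checking $|V_S| \geq \tau_V$; (ii) for each timestamp $t \in \mathcal{T}$ with $V_S \subseteq V_t$, computing the common momentary neighborhood $\bigcap_{v \in V_S} \momeneighborset{v}{t}$ and recording whether its size is at least $\tau_U$; (iii) checking that at least $\lambda$ such timestamps exist; and (iv) verifying maximality by checking, for each $v \in V \setminus V_S$, that $V_S \cup \{v\}$ fails the frequency condition---the antimonotone property (Lemma~\ref{l2}) guarantees that single-vertex extensions suffice. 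Each step runs in polynomial time in $|\mathcal{E}|$ and $|V|$, so the counting problem lies in \#P.

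For \#P-hardness, I would give a parsimonious reduction from the problem of counting maximal independent sets in a simple graph, which is \#P-complete by standard arguments (equivalently, counting maximal cliques or formal concepts). Given $H = (W, F)$, construct a temporal bipartite graph $\mathcal{G} = (U, V, \mathcal{E})$ by setting $V = W$ and creating one dedicated timestamp $t_f$ for each edge $f = \{w_i, w_j\} \in F$; at $t_f$ introduce two auxiliary $U$-vertices $u_f^i, u_f^j$, appearing only at $t_f$, with neighborhoods $\momeneighborset{u_f^i}{t_f} = V \setminus \{w_j\}$ and $\momeneighborset{u_f^j}{t_f} = V \setminus \{w_i\}$. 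Set $\tau_U = \tau_V = 1$ and $\lambda = |F|$. The critical local observation is that at $t_f$, a set $V_S$ admits a common momentary neighbor in $\{u_f^i, u_f^j\}$ iff at least one of $w_i, w_j$ is missing from $V_S$, i.e., iff $|V_S \cap f| \leq 1$. Since $\lambda = |F|$ forces every snapshot to support $V_S$, the $\lambda$-frequency groups are exactly the independent sets of $H$, and imposing maximality on both sides yields a bijection between the MFGs of $\mathcal{G}$ and the maximal independent sets of $H$.

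The main obstacle will be verifying this bijection under the maximality constraint: one direction follows from antimonotonicity, but the other direction requires showing that whenever $V_S \cup \{w\}$ fails independence in $H$, the corresponding edge $f \subseteq V_S \cup \{w\}$ witnesses a failure at snapshot $t_f$, which is immediate from the local observation above. Minor care is needed for corner cases such as isolated vertices of $H$ or $V$-vertices absent from every snapshot, patchable by introducing a dummy universal snapshot that trivially supports every non-empty subset while preserving the bijection. Together with \#P membership, these steps yield the claimed \#P-completeness.
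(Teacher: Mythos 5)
Your proof is correct, but it takes a genuinely different route from the paper's. The paper reduces from counting maximal $\sigma$-frequency itemsets~\cite{yang2004complexity,yang2006computational}: each transaction becomes one timestamp, each item becomes a vertex of $V$, and a fixed set of $\tau_U$ vertices in $U$ is fully joined to the transaction's items at that timestamp, so that $\lambda$-frequency groups coincide with $\sigma$-frequent itemsets and the correspondence is parsimonious. You instead reduce from counting maximal independent sets, encoding each edge $f=\{w_i,w_j\}$ of the source graph as its own timestamp whose two $U$-vertices each ``exclude'' one endpoint, so that with $\tau_U=\tau_V=1$ and $\lambda=|F|$ a set is $\lambda$-frequent iff it contains no edge; the local observation that snapshot $t_f$ supports $V_S$ iff $|V_S\cap f|\le 1$ is correct, the bijection between \mrgs and maximal independent sets follows exactly as you argue via antimonotonicity, and your dummy-snapshot patch handles $F=\emptyset$ (where $\lambda=|F|$ would not be a positive integer). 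Both reductions are valid; the paper's is arguably more immediate because the source problem is already phrased in terms of maximal frequent set systems, whereas yours is more self-contained graph-theoretically but leans on the (standard, though uncited) \#P-completeness of counting maximal independent sets. A genuine plus of your write-up is the explicit \#P-membership argument --- polynomial-time verification of frequency plus maximality via single-vertex extensions justified by Lemma~\ref{l2} --- which the paper's proof omits even though completeness formally requires it; just note that your verifier is only correct for candidates with $|V_S|\ge\tau_V$, which your step (i) ensures.
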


\begin{proof}
We reduce a well-known {\#P-complete} 
problem of {counting}  
maximal $\sigma$-frequency itemsets~\cite{yang2004complexity,yang2006computational} to the \mrgs~{counting}  
problem with $\tau_U \geq 1$, $\tau_V \geq 1$ and $\lambda = \sigma$. 
Let $X = \{x_1, x_2, \dots, x_n\}$ be a set of items. 
Given a database $\mathcal{D}$ consisting of $|\mathcal{T}|$ transactions, \ie $\mathcal{D}=\{d_1, d_2, \dots, d_{|\mathcal{T}|}\}$. 
Each transaction $d_i$ comprises several items in $X$, \ie $d_i \subseteq X$. 
The maximal $\sigma$-frequency itemsets problem is to enumerate all subsets $\{I_1, I_2, \dots, I_k, \dots\}$, where each subset $I_k$ satisfies that $i)$ $I_k \subseteq X$, $ii)$ $|\{d_i \in \mathcal{D}| I_k \subseteq d_i\}| \geq \sigma$ and $iii)$ any superset of $I_k$ does not meet $i)$ and $ii)$.

We construct an instance of a temporal bipartite graph $\mathcal{G} = (U, V, \mathcal{E})$ from $X$ and $\mathcal{D}$ to prove the hardness.
First, we generate $\tau_U$ vertices to form set $U$.
Second, for each item $x_j \in X$, a vertex $v_j$ is correspondingly created to form set $V$, i.e., $|V| = |X|$. 
Let $V_i \subseteq V$ be the set of vertices, where each vertex $v_j$ corresponds to each item $x_j$ in the transaction $d_i \in \mathcal{D}$, $|V_i|=|d_i|$. 
Third, for each transaction $d_i$, we generate a timestamp $t_i$ such that each $v_j \in V_i$ connects to all vertices in $U$ at this timestamp. Then, there will be $|\mathcal{T}|$ timestamps in $\mathcal{G}$, and $(U, V_i)$ is the only one biclique at the timestamp $t_i$.

We then show that this transformation of $X$ and $\mathcal{D}$ into $\mathcal{G}$ is a reduction. Suppose $\{I_1, I_2, \dots, I_k, \dots\}$ {is} the set of all the maximal $\sigma$-frequency itemsets w.r.t $\mathcal{D}$ in $X$. 
We claim that the corresponding vertex sets $\{V_1, V_2, \dots, V_k, \dots\}$ is the set of {all \mrgs in $\mathcal{G}$}. 
Take the vertex set $V_k$ as an example.
We first prove that $V_k$ is a \rg. 
According to the above construction, $V_k$ can form a biclique with $U$ at no less than $\lambda$ timestamps, since $I_k$ must be the subset of at least $\sigma$ transactions in $\mathcal{D}$ and $\lambda = \sigma$. 
Second, we prove the maximality of the $V_k$.
Suppose to the contrary that $V_k$ is not maximal (\ie there is a vertex $v_j$ that can join $V_k$ to form a new \mrg), $I_k$ is not maximal, since $I_k \cup \{x_j\}$ will be a $\sigma$-frequency itemset.
This contradicts the condition that $I_k$ is maximal.
Third, if there exists an \mrg $V_z$ that is not included in $\{V_1, V_2, \dots, V_k, \dots\}$, there must exist the other itemset $I_z \notin \{I_1, I_2, \dots, I_k, \dots\}$ that is $\sigma$-frequency.
Therefore, $\{V_1, V_2, \dots, V_k, \dots\}$ is the complete set of all \mrgs in $\mathcal{G}$.
Conversely, assume that $\{V'_1, V'_2, \dots, V'_k, \dots\}$ are the set of all \mrgs in $\mathcal{G}$. 
Based on the definition of \mrg and the above construction, the one-to-one correspondence between the maximal $\sigma$-frequency itemsets of $\mathcal{D}$ and the \mrg in $\mathcal{G}$ is established.
Therefore, if we count the number of all $\sigma$-frequency itemsets of $\mathcal{D}$, we can obtain the number of all \mrgs. The reduction is realized.
Therefore, the problem of counting the number of \mrgs is \#P-complete.
\end{proof}

\vspace{-1.3mm}
Note that, our enumeration problem is at least as hard as the counting problem, because if we can enumerate all the results, then we can easily count the total number.

\section{filter-and-verification approach}
\label{sec:method1}

\begin{algorithm}[t]
{
    \SetVline
    \footnotesize
    \caption{\textsf{Filter-and-Verification (\filterv)}}
    \label{alg:mfgl}
    \Input{$\mathcal{G}=(U,V,\mathcal{E})$: a temporal bipartite graph, \\ $\tau_U$, $\tau_V$: size constraints, $\lambda$: frequency constraint}
    \Output{$\mathcal{R}$: all the \mrgs}
  
    \StateCmt{$\mathcal{G} \gets$ \textsf{GFCore}$(\mathcal{G},\tau_U,\tau_V,\lambda)$}{graph filter: Algorithm \ref{alg:abocore}}
    \State{$\mathcal{R} \gets \emptyset$}
    \State{\textsf{EnumMFG}$(U, \emptyset, V)$}     
    \vspace{1mm}
  
    {\textbf{Procedure} \textsf{EnumMFG}$(U_S, V_S, C_V)$}\\
    \StateCmt{\textsf{$C_V \gets$ \text{filter candidate set} $C_V$}}{candidate filter: Lemma \ref{lemma-a2-filter}}
    \StateCmt{$C^*_V \gets \emptyset$}{{compute valid candidate set}}
    \ForEach{$v \in C_V$}
    {
        \If(\tcc*[f]{Algorithm \ref{alg:fc}}){\textup{\textsf{CheckFRE}(}$U_S\cap \strneighborset{v}, V_S\cup\{v\},\lambda$\textup{)}}
        {
            \State{$C^*_V \gets C^*_V \cup \{v\}$}
        }
    }
    \If{$|U_S| < \tau_U$ $\lor$ $|V_S|+|C^*_V| < \tau_V$}
    {
        \State{\textbf{return}}
    }
    \If{$C^*_V = \emptyset$} 
    { 
        \StateCmt{\text{Check the maximality for $V_S$}}{{Section~\ref{subsec:mal}}}
                
        \State{\textbf{if} $V_S$ \text{is maximal} \textbf{then} $\mathcal{R} \gets \mathcal{R} \cup \{V_S\}$}             
    }
    \ForEach{$v\in C^*_V$}
    {
        \State{$C^*_V \gets C^*_V \backslash \{v\}$}
        \State{\textsf{EnumMFG}$(U_S \cap N(v,G), V_S \cup \{v\}, C^*_V)$}
    }             

}
\end{algorithm}

In the literature, the closest problem to ours is the maximal biclique enumeration problem (e.g.,~\cite{abidi2020pivot,chen2022efficient,zhang2014finding}),
where most studies are based on the Bron\text{-}Kerbosch (BK) framework.
It maintains a recursion search tree and traverses in a depth-first manner. 

\myparagraph{Baseline method}
Motivated by Lemmas \ref{l1} and \ref{l2}, a reasonable approach for our problem is to employ the BK framework by jointly considering the frequency and maximality constraints. 
Specifically, we operate on three dynamically changing vertex sets $(U_S,V_S,C_V)$.
$V_S$ is the current result.
$U_S$ is the common \sstrneighbors of all the vertices in $V_S$. 
$C_V$ is the candidate set. 
In each iteration, we select a vertex from the candidate set $C_V$ to expand $V_S$, and update the corresponding $U_S$.
If $V_S$ satisfies the frequency constraint, we continue to expand it.
If no other vertex can be added into $V_S$ to form a new frequent group,
we terminate the current search branch and check the 
maximality of $V_S$ by comparing it with the existing found results.
After enumerating through each search branch, all the \mrgs are returned. 
This algorithm is referred to as \bkalg.


\myparagraph{Limitations}
Although \bkalg can correctly return all the \mrgs for a given temporal bipartite graph, we find that directly extending the BK framework is inefficient due to the following two drawbacks.
The first drawback is the huge search space.
The search space of \bkalg is the whole graph $\mathcal{G}$, and it needs to iterate through all the vertices in $C_V$ in each branch, which may involve many unpromising vertices that cannot exist in any \mrg. 
The second drawback is the cumbersome frequency {constraint} check. 
Similar to biclique enumeration in static graphs, in \bkalg, $C_V$ maintains the candidate vertices and we need to ensure the frequency constraint during the search, which is computationally expensive.

To address these limitations, in this section, 
we propose a novel filter-and-verification (\filterv) algorithm.
In the following, we first introduce the search framework of \filterv (Section \ref{subsec:framework}).
For \textit{drawback 1}, we develop novel graph and candidate set filtering techniques to dramatically shrink search space (Section \ref{subsec:fliter}). 
For \textit{drawback 2}, an advanced \twoarr-based algorithm is presented to facilitate the frequency (Section \ref{subsec:verify}) and maximality (Section~\ref{subsec:mal}) verification.  
\vspace{-5mm}

\subsection{Framework Overview}
\label{subsec:framework}

Hereafter we present an overview of our filter-and-verification (\textbf{\filterv}) framework.
We call each search branch that fails to find an \mrg an invalid branch.
Recall the search branch $(U_S,V_S,C_V)$.
To reduce even avoid the search cost on invalid search branches, instead of directly adding each candidate vertex from $C_V$ into $V_S$, we first perform a verification procedure on $C_V$ to generate the \textbf{valid candidate set} $C^*_V \subseteq C_V$ for $V_S$.
That is, in subsequent branches, the new set obtained by adding {any} candidate vertex {from $C_V^*$} to the current processing set $V_S$ still can meet the frequency requirement.


\myparagraph{\filterv framework}
Motivated by the above idea, the pseudocode of \filterv framework is presented in Algorithm~\ref{alg:mfgl}.
It first applies the graph filtering technique (Algorithm \ref{alg:abocore} in Section \ref{subsec:fliter}) to shrink the given temporal bipartite graph.
In the following, \filterv invokes the procedure \textsf{EnumMFG} to enumerate all the \mrgs.
Similar as the BK method, \textsf{EnumMFG} maintains three sets $U_S$, $V_S$ and $C_V$, which are initialized as $U$, $\emptyset$ and $V$.
In \textsf{EnumMFG}, we first try to filter the candidate set $C_V$ (line 5) and then compute the valid candidate set $C^*_V$ for $V_S$ (lines 7-9). 
The branch is terminated if it violates the $(\tau_U,\tau_V)$-biclique size constraints.
We check the maximality of $V_S$ when the {valid} candidate set is empty (lines 12-14).
If $C^*_V$ is not empty, we process each vertex $v\in C^*_V$ to expand $V_S$, and continue search on the updated $V_S$ and $U_S$  (lines 15-17). 


\myparagraph{Discussion}
To compute the valid candidate set $C^*_V$ for $V_S$ in lines 7-9, we need to check the frequency of each vertex set $V_S \cup \{v\}$ for $v\in C_V$.
Given the vertex set $V_S$ and the checking vertex $v\in C_V$, a naive method to check the frequency of $V_S \cup \{v\}$ is to compute the common \smomeneighbors of $V_S\cup\{v\}$ at each timestamp.
Specifically, for each timestamp $t\in \mathcal{T}$, {we check whether there exists no less than $\tau_U$ common \smomeneighbors of all the vertices in $V_S\cup\{v\}$.}
If it satisfies the constraint, $t$ can contribute to the frequency for $V_S\cup\{v\}$.
If the number of such timestamps for $V_S\cup\{v\}$ is no less than $\lambda$, $v$ is the valid candidate vertex for $V_S$ and it can be added into $C^*_V$.
After checking all vertices in $C_V$, we can return $C^*_V$.
Due to the large scale of candidate vertices and the inefficiency of the naive frequency checking method, the above process is very time-consuming.
To speed up the computation of the valid candidate set, we designed novel filtering strategies and efficient verification techniques.

\subsection{Filtering Rules}
\label{subsec:fliter}



\myparagraph{Graph filter} Given a temporal bipartite graph $\mathcal{G}$, many unpromising vertices cannot exist in any \mrgs. 
Therefore, we propose a novel graph structure to filter the search space.
Before presenting the details, we first introduce the concept of $(\alpha,\beta)$-core~\cite{DBLP:conf/www/LiuYLQZZ19}.

\begin{definition}[$(\alpha,\beta)$-core]
Given a static bipartite graph $G$ and two positive integers $\alpha$ and $\beta$, the subgraph $S=(U_S,V_S,E_S)$ is the $(\alpha,\beta)$-core of $G$, denoted by $\abcore{G}$, if it satisfies: $i)$ the degree of each vertex in $U_S$ is at least $\alpha$ and the degree of each vertex in $V_S$ is at least $\beta$ in $S$, and $ii)$ any supergraph of $S$ cannot satisfy $i)$.
\end{definition}

To compute the $(\alpha,\beta)$-core, we can iteratively remove the vertices that violate the degree constraint with time complexity $\mathcal{O}(|E|)$.
Based on the definition and properties of \mrg, we can obtain that every vertex of \mrg must be contained in the temporal bipartite graph $\{\abcore{G_1},\abcore{G_2},\dots,\abcore{G_{|\mathcal{T}|}}\}$.
To further model the frequency property of the vertex, we present a new frequent cohesive subgraph model, called $(\tau_V, \tau_U, \lambda)$-core, which will be applied to prune unpromising vertices before enumerating all the \mrgs.
\begin{definition}[$(\tau_V, \tau_U, \lambda)$-core]
\label{def:newcore}
Given a temporal bipartite graph $\mathcal{G}$ and three positive integers $\tau_U$, $\tau_V$ and $\lambda$, 
the induced subgraph $(U_S,V_S, \mathcal{E}_S)$ of $\mathcal{G}$ is the $(\tau_V, \tau_U, \lambda)$-core if it meets the following conditions: 
$i)$ each vertex $u\in U_S$ can be included in the $(\tau_V, \tau_U)$-core of at least one snapshot,
$ii)$ each vertex $v\in V_S$ can be included in the $(\tau_V, \tau_U)$-core of at least $\lambda$ snapshots, and
$iii)$ there is no supergraph of $(U_S,V_S,\mathcal{E}_S)$ that satisfies $i)$ and $ii)$.
\end{definition}

{Note that, in $(\tau_U, \tau_V)$-biclique, $\tau_U$ and $\tau_V$ restrict the number of vertices in $U$ and $V$. But, in Definition \ref{def:newcore}, the parameters restrict the number of neighbors, so the order is changed.}
Based on the analysis, we can directly obtain the connection between the $(\tau_V, \tau_U, \lambda)$-core and the proposed \mrg model, detailed in the following lemma.

\begin{lemma} 
\label{lemma:core}
Given a temporal bipartite graph $\mathcal{G}=(U,V,\mathcal{E})$ and three positive integers $\tau_U$, $\tau_V$ and $\lambda$, 
if a subset $V_S \subseteq V$ is an \mrg, then $V_S$ must be contained in the $(\tau_V, \tau_U, \lambda)$-core of $\mathcal{G}$.
\end{lemma}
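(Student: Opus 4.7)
The plan is to verify the condition for membership in the $(\tau_V, \tau_U, \lambda)$-core one vertex at a time, using the $\lambda$ support timestamps guaranteed by Definition~\ref{def:rg}. Fix an arbitrary \mrg $V_S \subseteq V$, and let $T_S \subseteq \mathcal{T}$ be its set of support timestamps, so that $|T_S| \geq \lambda$. For every $t \in T_S$, the definition of a support timestamp supplies a $(\tau_U, \tau_V)$-biclique $(U_t', V_t')$ inside the snapshot $G_t$ with $V_S \subseteq V_t'$; I would use these bicliques as the witnessing structures throughout the argument.

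Next, I would show that each such biclique already sits inside the $(\tau_V, \tau_U)$-core of $G_t$. Within $(U_t', V_t')$, every $u \in U_t'$ has degree $|V_t'| \geq \tau_V$ and every $v \in V_t'$ has degree $|U_t'| \geq \tau_U$, so the biclique itself is a subgraph meeting the degree thresholds of the $(\alpha, \beta)$-core with $\alpha = \tau_V$ and $\beta = \tau_U$. Because $\abcore{G_t}$ is the unique maximal such subgraph, the whole biclique, and in particular every $v \in V_S$, is contained in it. Aggregating this observation over $T_S$: each $v \in V_S$ belongs to the $(\tau_V, \tau_U)$-core of at least $|T_S| \geq \lambda$ distinct snapshots of $\mathcal{G}$, which is exactly condition $ii)$ of Definition~\ref{def:newcore}. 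The maximality clause $iii)$ then forces every such $v$ to appear on the $V$-side of the unique $(\tau_V, \tau_U, \lambda)$-core, which is what the lemma asks.

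The main obstacle is not mathematical depth but bookkeeping with the transposed parameter convention emphasized in the note preceding the lemma: the biclique constraint is $(\tau_U, \tau_V)$ while the core parameters are $(\tau_V, \tau_U)$, so I would need to be careful that $\tau_V$ bounds $U$-side degrees and $\tau_U$ bounds $V$-side degrees when invoking the $(\alpha, \beta)$-core. Once the parameter matching is correct, the argument reduces to one subgraph inclusion per support timestamp followed by an appeal to core maximality.
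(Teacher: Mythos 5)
Your argument is correct and is essentially the reasoning the paper leaves implicit (the lemma is presented as following "directly" from the preceding discussion, with no written proof): each support timestamp's witnessing $(\tau_U,\tau_V)$-biclique meets the degree thresholds of the $(\tau_V,\tau_U)$-core of that snapshot and therefore lies inside that core by its maximality, so every $v\in V_S$ satisfies the frequency condition $ii)$ of Definition~\ref{def:newcore} over at least $\lambda$ snapshots and hence belongs to the maximal subgraph satisfying $i)$ and $ii)$. You also handle the transposed $(\tau_V,\tau_U)$ parameter convention correctly, which is the only place this argument could realistically go wrong.
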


\begin{algorithm}[t]
\SetKwInOut{Input}{Input}
\SetKwInOut{Output}{Output}
{
    \SetVline
    \footnotesize
    \caption{\textsf{GFCore}$(\mathcal{G},\tau_U,\tau_V,\lambda)$}
    \label{alg:abocore}
    \Input{$\mathcal{G}=(U,V,\mathcal{E})$: a temporal bipartite graph, \\ $\tau_U$, $\tau_V$: size constraints, $\lambda$: frequency constraint}
    \Output{$(\tau_V, \tau_U, \lambda)$-core of $\mathcal{G}$}
    
    \State{\textbf{for each} $w\in \mathcal{G}$ 
 \textbf{do} $s[w]=0$}
    \ForEach{$t\in \mathcal{T}$}{
        \ForEach{$w\in \mathcal{G}$}{
            \State{\textbf{if} $w\in U \land \delta(w,t)>0$ \textbf{then} $s[w]++$}
            \State{\textbf{if} $w\in V \land \delta(w,t)>0$ \textbf{then} $s[w]++$}
        }
    }
    \ForEach{$t\in \mathcal{T}$}{
        \ForEach{$w\in \mathcal{G}$}{
            \If{$w\in U \land 0<\delta(w,t)<\tau_V$}{
                \State{\textsf{CorePrune}$(w,t)$}
            }
            \If{$w\in V \land (0<\delta(w,t)<\tau_U \lor 0<s[w]<\lambda)$}{
                \State{\textsf{CorePrune}$(w,t)$}
            }
        }
    }
    \ForEach{$t\in \mathcal{T}$}{
        \ForEach{$w\in \mathcal{G}$}{
            \State{\textbf{if} $\delta(w,t)=0$ \textbf{then} $G_t\gets G_t \backslash \{w\}$}
        }
    }
    \State{\textbf{return} $\{G_1,G_2,\dots,G_{|\mathcal{T}|}\}$}
    
    \vspace{1mm}
    {\textbf{Procedure} \textsf{CorePrune}$(w,t)$}\\
    \State{$\delta(w,t)=0$}
    \ForEach{$x\in \Gamma(w,t)$}{
        \If{$\delta(x,t)>0$}{
            \State{$\delta(x,t)=\delta(x,t)-1$}
            \If{$(x\in U \land \delta(x,t)<\tau_V) \lor (x\in V \land \delta(x,t)<\tau_U)$}{
                \State{\textsf{CorePrune}$(x,t)$}
            }
        }
    }
    \If{$s[w]>0$}{ 
        \State{$s[w]=s[w]-1$}
        \If{$(w \in U \land s[w]<1) \lor (w\in V \land s[w]<\lambda)$}{
            \State{$s[w]=0$}
            \ForEach{$t\in \mathcal{T}$}{
                \If{$\delta(w,t)>0$}{
                    \State{\textsf{CorePrune}$(w,t)$}
                }
            }
        }
    }
}
\end{algorithm}

To derive the $(\tau_V, \tau_U, \lambda)$-core, we can iteratively delete the vertex that violates {the degree constraint or the frequency constraint}.
Since the deletion of one vertex may cause its neighbors to violate the constraints in cascade, we can iteratively prune the graph until all the remaining vertices in $\mathcal{G}$ meet the constraints.
Details of computing $(\tau_V, \tau_U, \lambda)$-core are shown in Algorithm \ref{alg:abocore}.
At first, we use $s[w]$ to count the number of timestamps when the vertex $w$ has enough \smomeneighbors (lines 1-5).
Then we process all the vertices at each timestamp in lines 6-11.
Specifically, for each vertex $w$, if $s[w]$ violates the frequency constraint or the \smomedegree constraint at $t$, we remove this vertex at $t$ and invoke the procedure \textsf{CorePrune} to update the graph.
Details of \textsf{CorePrune} are shown in lines 16-29.
For the processing vertex $w$ at timestamp $t$ in line 16, we set $\delta(w,t)$ as 0 and traverse all \smomeneighbors of $w$ at $t$, \ie $\Gamma(w,t)$.
For each vertex $x\in \Gamma(w,t)$, we first check its \smomedegree.
If it is larger than 0, we reduce its \smomedegree by 1 (lines 19-20).
Then, if $x$ violates the \smomeneighbor constraint at $t$, we invoke \textsf{CorePrune} for $(x,t)$ in lines 21-22.
For the vertex $w$ with $s[w]>0$, we need to update $s[w]$ for it in lines 23-29.
We first reduce $s[w]$ by 1.
If $s[w]$ violates the constraint, we set $s[w]$ as 0 and invoke \textsf{CorePrune} for $w$ at all timestamps (lines 25-29).
After updating the graph, we remove all unsatisfied vertices at each snapshot (lines 12-14).
Finally, we return all the updated snapshots as the reduced graph in line 15.
Similar as $(\alpha,\beta)$-core, the time complexity is bounded by $\mathcal{O}(|\mathcal{E}|)$.  



\myparagraph{Candidate set filter} 
Recall the search branch with processed vertex sets $(U_S,V_S,C_V)$, where we iteratively add one vertex from $C_V$ into $V_S$ and check its frequency. 
If we can efficiently skip a batch of vertices in $C_V$ without compromising any results, we can reduce many unnecessary calls of frequency check. To achieve this, we propose the following rule to quickly filter the candidate vertex set.

\begin{lemma}[{Candidate set filtering rule}]
\label{lemma-a2-filter}
Given a temporal bipartite graph $\mathcal{G}=(U,V,\mathcal{E})$ and $v\in V$, we use  ${T(v)}$ to denote
the set of timestamps when $v$ has more than $\tau_U$ \smomeneighbors, 
\ie ${T(v)}=\{t| t \in$ $\mathcal{T}$ $\land$ $\delta(v,t) \geq \tau_{U}\}$.
Then, for the current processing vertex sets $V_S$, we can skip a candidate vertex $v' \in C_V$, if $|\cap_{v \in V_S\cup\{v'\}} {T}(v)| < \lambda$.
\end{lemma}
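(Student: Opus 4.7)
The plan is to argue by necessity: if $v'$ were a valid candidate, every support timestamp of $V_S\cup\{v'\}$ would lie in $\cap_{v\in V_S\cup\{v'\}}T(v)$, so when that intersection already has fewer than $\lambda$ elements the frequency constraint cannot possibly be met and $v'$ may safely be discarded.

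First I would unfold Definition~\ref{def:st}: a timestamp $t$ supports $V_S\cup\{v'\}$ only if $V_S\cup\{v'\}$ is contained in some $(\tau_U,\tau_V)$-biclique of the snapshot $G_t$. In such a biclique, the $V$-side vertices share a common set of at least $\tau_U$ neighbors in $U_t$. In particular, each individual vertex $v\in V_S\cup\{v'\}$ must have momentary degree $\delta(v,t)\geq\tau_U$ at $t$, which by definition means $t\in T(v)$. Taking the intersection over all $v\in V_S\cup\{v'\}$ shows that every support timestamp of $V_S\cup\{v'\}$ belongs to $\cap_{v\in V_S\cup\{v'\}}T(v)$.

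Consequently, the number of support timestamps of $V_S\cup\{v'\}$ is upper-bounded by $|\cap_{v\in V_S\cup\{v'\}}T(v)|$. Under the hypothesis that this intersection has cardinality strictly less than $\lambda$, $V_S\cup\{v'\}$ fails the frequency constraint of Definition~\ref{def:rg}. Invoking the antimonotone property of Lemma~\ref{l2}, no superset of $V_S\cup\{v'\}$ can be a \rg either, so no \mrg reachable from the current branch through $v'$ exists. Therefore removing $v'$ from $C_V$ cannot eliminate any correct output, which justifies the filter.

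The main obstacle, such as it is, lies in being careful about the distinction between two conditions that look similar: the biclique requirement that the vertices in $V_S\cup\{v'\}$ share $\tau_U$ \emph{common} $U$-neighbors at $t$, versus the weaker requirement used in the filter that each vertex \emph{individually} has $\tau_U$ momentary neighbors at $t$. Only the latter is captured by $T(v)$, which is exactly why Lemma~\ref{lemma-a2-filter} is a (cheap) necessary condition rather than a tight test — a point worth flagging explicitly so the reader understands that the pruning may leave some vertices in $C_V$ that will later be rejected by the full verification in lines 7--9 of Algorithm~\ref{alg:mfgl}.
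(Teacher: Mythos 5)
Your proof is correct and follows the same route as the paper's: every support timestamp of $V_S\cup\{v'\}$ must lie in $\cap_{v\in V_S\cup\{v'\}}T(v)$, so the intersection's cardinality upper-bounds the frequency. You are in fact more careful than the paper's own (terser) proof in distinguishing the individual momentary-degree condition captured by $T(v)$ from the stronger common-neighbor condition required by a biclique — a distinction the paper's wording slightly blurs.
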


\begin{proof} 
If $|\cap_{v \in V_S\cup\{v'\}} {T}(v)| < \lambda$, it means that there exists less than $\lambda$ timestamps, when the number of common \smomeneighbors of $V_S\cup\{v'\}$ is no less than $\tau_U$.
Therefore, $V_S\cup\{v'\}$ is not frequent and we can prune $v'$ from $C_V$, and the lemma holds.
\end{proof}

\subsection{Frequency Verification}
\label{subsec:verify}

Recall the computation of the valid candidate set $C^*_V$ for $V_S$, whose main cost is the frequency verification for all the vertex sets $V_S \cup \{v\}$, where $v\in C_V$. 
In addition, as discussed before, the naive frequency verification method is very time-consuming.
Therefore, reducing the cost of frequency verification is crucial for optimizing the performance of algorithm. Motivated by this, in this section, we design a novel \twoarr-based structure to speed up the processing.
The detailed method is presented in Algorithm \ref{alg:fc}.

\begin{algorithm}[t]
\SetKwInOut{Input}{Input}
\SetKwInOut{Output}{Output}
{
    \SetVline
    \footnotesize
    \caption{\textsf{CheckFRE}$(U_S,V_S,\lambda)$}
    \label{alg:fc}
    \Input{$U_S$: the common \sstrneighbors of all the vertices in $V_S$, \\ $V_S$: the checking vertex set, $\lambda$: frequency constraint}
    \Output{frequency verification result \textit{true}/\textit{false}}

    \State{$\lambda'=0$}
    \StateCmt{\textbf{for each} $t \in \mathcal{T}$ \textbf{do} {UA}$[t]$ = 0}{Update Array}
    
    \ForEach{$u \in U_S$}
     {
            \StateCmt{\textbf{for each} $t \in \mathcal{T}$ \textbf{do} {RA}$[t]$ = 0}{Reborn Array}
            
            \ForEach{$v \in N(u,G) \land v\in V_S$}{
                \ForEach{$t \in \mathcal{T}_{(u,v)}$}{
                    \StateCmt{{RA}$[t]++$}{count $u$'s \smomeneighbors in $V_S$ at $t$}
                }
            }
           \ForEach{$t \in \mathcal{T}$}{
                \If{\textup{{RA}}$[t]$$= |V_S|$}{
                    \StateCmt{{UA}$[t]++$}{count common \smomeneighbors of $V_S$ at $t$}
                }
           }
    }
     \ForEach{$t \in \mathcal{T}$}{
        \If{\textup{{UA}}$[t] \geq \tau_U$}{
            \StateCmt{$\lambda'++$}{count support timestamp for $V_S$}
        }
        \State{\textbf{if} {$\lambda' =\lambda$} \textbf{then return} \textit{true}}
     }
     \State{\textbf{return} \textit{false}}
}
\end{algorithm}

\myparagraph{\textsf{CheckFRE} algorithm}
Algorithm \ref{alg:fc} has three input parameters, i.e., $U_S$, $V_S$ and $\lambda$, which corresponds to line 8 in Algorithm~\ref{alg:mfgl}. It returns \textit{true} if $V_S$ satisfies the frequency constraint. Otherwise, it returns \textit{false}.
The algorithm employs two array structures, i.e., Reborn Array (RA) and {Update Array} ({UA}), both of which have a length of $|\mathcal{T}|$.
{{\underline{\textit{Initialization (lines 1-2).}}}
We initialize $\lambda'$ as 0 to count the {frequency} for the checking vertex set
and all the elements in {UA} as 0.
Then, we process each vertex $u \in U_S$ iteratively (lines 3-10).
{\underline{\textit{Count $u$'s \smomeneighbors in $V_S$ at $t$ (lines 5-7).}}} 
For each processed vertex $u$, we use RA to count its \smomeneighbors in $V_S$ at each timestamp, whose elements are initialized as 0 (line~4).}
$\mathcal{T}_{(u,v)}$ is the set of timestamps associated with {edge} $(u,v)$.
After processing all \smomeneighbors of $u$, the {RA} for $u$ is constructed.
{{\underline{\textit{Count common \smomeneighbors of $V_S$ at $t$ (lines 8-10).}}} 
Then, for each element {RA}$[t]$ in {RA}, we check whether {RA}$[t]$ equals $|V_S|$.
If {RA}$[t] = |V_S|$, it means $u$ connects all vertices in $V_S$ at $t$, 
and we increase the corresponding element {UA}$[t]$ in  {Update Array} by 1.}
{UA}$[t]$ represents the number of the common \smomeneighbors of all vertices in $V_S$ at $t$.
After processing all vertices in $U_S$, we obtain the final {UA}.
{\underline{\textit{Count support timestamp for $V_S$ (lines 11-14).}}} 
Then, we check the {value} of each element in {UA}.
{If there is an element that is no less than $\tau_U$, we add the number of {frequency} $\lambda'$ by 1.
We return \textit{true} if $\lambda'$ equals $\lambda$, which means that $V_S$ is frequent.}
Otherwise, we return \textit{false} (line 15).

\begin{figure}[t]
	\centering
\includegraphics[width=0.85\linewidth]{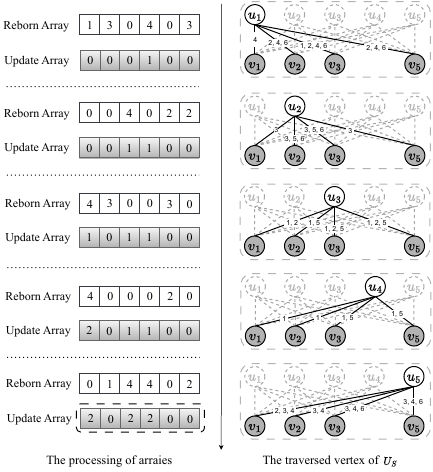}
	\caption{Frequency verification example for vertex set $\{v_1,v_2,v_3,v_5\}$ with $\tau_U=\tau_V=2$ and $\lambda=3$} 
	\label{fig:fc}
	
\end{figure}

\begin{example}
Reconsider the graph in Figure \ref{fig:tbg} with $\tau_U=\tau_V=2$ and $\lambda=3$.
Figure \ref{fig:fc} displays the checking process for the vertex set $V_S=\{v_1,v_2,v_3,v_5\}$, and $U_S=\{u_1,u_2,u_3,u_4,u_5\}$.
Based on Algorithm \ref{alg:fc}, we traverse each vertex in $U_S$ and record its \smomedegree in $V_S$ at each timestamp.
Specifically, when traversing $u_1$, we record its \smomedegree in $V_S$ at each timestamp as $``1,3,0,4,0,3"$ in the {Reborn~Array}.
Among these \smomedegree{s}, only one is equal to $|V_S|=4$ (the fourth element of the {Reborn~Array}), so we add 1 in the fourth element of the {Update~Array}, i.e., {UA}$[4]=1$, indicating that there is one vertex connecting all vertices in $V_S$ at $t=4$.
After traversing $u_1$, we clear {Reborn~Array} and maintain {Update~Array}.
Similarly, we traverse all the vertices in $U_S$ and check all the elements of final {Update~Array}.
We find that there are three elements of the final {Update~Array} that are no less than $\tau_U=2$, i.e., {UA}$[1]=2$, {UA}$[3]=2$, {UA}$[4]=2$, which correspond to the timestamp $t=1$, $t=3$ and $t=4$, respectively.
This means that there are three timestamps at which the vertices in $V_S$ have no less than two common \smomeneighbors, \ie the frequency of $\{v_1,v_2,v_3,v_5\}$ is 3.
Therefore, this vertex set satisfies the frequency constraint.
\end{example}

\myparagraph{Discussion} To compute the valid candidate set, we can iteratively invoke Algorithm~\ref{alg:fc} to examine the frequency of the newly obtained vertex set, i.e., line 8 in Algorithm~\ref{alg:mfgl}. After checking all the vertices in $C_V$, we obtain the valid candidate set $C^*_V$. The corresponding time complexity is shown in Theorem~\ref{tc:fc}.

\begin{theorem}\label{tc:fc}
Based on Algorithm~\ref{alg:fc}, the time complexity of computing the valid candidate set $C^*_V$  is $\mathcal{O}(|V| \cdot d_{max}(u) \cdot d_{max}(v) \cdot |\mathcal{T}|)$, where $d_{max}(u)$ and $d_{max}(v)$ are the largest s-degree of the vertices in $U$ and $V$, respectively.
\end{theorem}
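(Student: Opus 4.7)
The plan is to establish the bound by a direct, layer-by-layer accounting of the work done inside Algorithm~\ref{alg:fc} and then multiplying by the number of candidate vertices processed in lines 7--9 of Algorithm~\ref{alg:mfgl}. First I would fix an arbitrary call \textsf{CheckFRE}$(U_S \cap N(v,G),\, V_S \cup \{v\},\, \lambda)$ made while constructing $C_V^*$, and analyze its cost as a function of the sizes of the arguments and of $|\mathcal{T}|$. The two array initializations (lines 2 and 4) each cost $\mathcal{O}(|\mathcal{T}|)$, and the final scan over UA (lines 11--14) costs $\mathcal{O}(|\mathcal{T}|)$, so these are lower-order terms that will be absorbed.

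The dominant work sits in the outer loop on line 3. For each $u$ in the first argument, lines 5--7 iterate over those $v' \in V_S$ that are structural neighbours of $u$ and, for each such $v'$, over the timestamp set $\mathcal{T}_{(u,v')}$. I would bound the first factor by $|N(u,G)| \le d_{\max}(u)$ and the second by $|\mathcal{T}|$, giving $\mathcal{O}(d_{\max}(u)\cdot|\mathcal{T}|)$ per outer iteration; lines 8--10 add only $\mathcal{O}(|\mathcal{T}|)$. The key quantitative observation, which I expect to be the main (though still modest) obstacle, is how to squeeze out the $d_{\max}(v)$ factor: the first argument of \textsf{CheckFRE} is not $U_S$ itself but $U_S \cap N(v,G)$, so its cardinality is bounded by $|N(v,G)| \le d_{\max}(v)$. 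Multiplying this by the per-iteration cost yields $\mathcal{O}(d_{\max}(v)\cdot d_{\max}(u)\cdot|\mathcal{T}|)$ for a single \textsf{CheckFRE} invocation.

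Finally, I would lift the bound to the full construction of $C_V^*$. The loop in lines 7--9 of Algorithm~\ref{alg:mfgl} iterates over $v \in C_V \subseteq V$, so it makes at most $|V|$ calls to \textsf{CheckFRE}. Multiplying the per-call bound by $|V|$ gives the claimed $\mathcal{O}(|V|\cdot d_{\max}(u)\cdot d_{\max}(v)\cdot|\mathcal{T}|)$ total. I would close by noting that the $\mathcal{O}(|\mathcal{T}|)$ contributions from the initializations and the UA scan are dominated by the $d_{\max}(u)\,d_{\max}(v)\,|\mathcal{T}|$ term (assuming non-trivial degrees) and thus need not appear in the final bound. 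No deep combinatorial argument is required; the proof is essentially a careful bookkeeping exercise, and the only subtle point is recognising that the shrinking of $U_S$ via intersection with $N(v,G)$ is what supplies the $d_{\max}(v)$ factor rather than a larger $|U|$ factor.
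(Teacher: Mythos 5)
Your proposal is correct and follows essentially the same route as the paper's proof: bound the outer loop of \textsf{CheckFRE} by $d_{\max}(v)$ (the paper justifies this via the biclique property of $(U_S,V_S)$, you via the explicit intersection $U_S\cap N(v,G)$ — the same fact), bound lines 5--7 by $\mathcal{O}(d_{\max}(u)\cdot|\mathcal{T}|)$, absorb the $\mathcal{O}(|\mathcal{T}|)$ initialization and scan costs, and multiply by the at most $|V|$ calls made in lines 7--9 of Algorithm~\ref{alg:mfgl}. No substantive difference from the paper's argument.
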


\begin{proof}
In Algorithm~\ref{alg:mfgl}, the main procedure for computing the valid candidate set $C_V^*$ is in lines 7-9. We first analyze the time complexity of Algorithm \ref{alg:fc}. In line 3, the size of $U_S$ is bounded by $d_{max}(v)$ since $(U_S,V_S)$ is a biclique. In lines 5-7, each m-neighbor of $u$ will be visited at most once, which takes $\mathcal{O}(d_{max}(u) \cdot |\mathcal{T}|)$ time. The time complexity of updating {UA} in line 10 is $\mathcal{O}(|\mathcal{T}|)$. Similarly, updating $\lambda'$ also takes $\mathcal{O}(|\mathcal{T}|)$ time. Thus, the time complexity of Algorithm \ref{alg:fc} is $\mathcal{O}(d_{max}(u) \cdot d_{max}(v) \cdot |\mathcal{T}|)$. Moving back to Algorithm~\ref{alg:mfgl}, every vertex in $C_V$ will be checked by Algorithm \ref{alg:fc}, and the size of $C_V$ is bounded by the number of vertex in $V$, i.e., $|V|$. 
Overall, the time complexity for computing the valid candidate set $C_V^*$ is $\mathcal{O}(|V| \cdot d_{max}(u) \cdot d_{max}(v) \cdot |\mathcal{T}|)$.
\end{proof}

\subsection{Maximality Verification}
\label{subsec:mal}

Due to the properties of \mrg, the traditional maximality checking technique for the maximal biclique enumeration~\cite{abidi2020pivot,zhang2014finding,chen2022efficient} cannot be used in our problem.
To check the maximality of obtained vertex set $V_S$, a naive method is to compare it with all the obtained results.
If $V_S$ is the subset of an existing result, \ie not maximal, we can skip it.
Otherwise, for these vertex sets that are the subset of $V_S$, we remove them from the currently found result set and add $V_S$ into the result set.
However, this method requires extensive computation, since it involves numerous set comparisons.
In this section, we introduce a new maximality checking technique.
Specifically, we use the vertex set $X_V$ to store the vertices that are previously processed and can be included in at least one \rg in the current branch, i.e., adding $v$ into $X_V$ after line 17 in Algorithm \ref{alg:mfgl}.
Based on $X_V$, we present the details of checking method below.

\begin{lemma}[{Maximality verification}]\label{lemma:a2-malc}
Given the current processing tuple $(U_S,V_S,C_V,X_V)$, $V_S$ is a \rg.
$V_S$ is an \mrg iff $i)$ $C^*_V=\emptyset$ and
$ii)$ any $v \in X_V$ cannot form a \rg with $V_S$, \ie $\nexists~ v\in X_V~s.t.~V_S \cup \{v\}$ is a \rg.
\end{lemma}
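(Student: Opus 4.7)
The plan is to establish the biconditional by proving each direction separately, relying on the invariants that \filterv maintains during its depth-first search. Throughout I treat $V_S$ as already being a \rg (given), and the goal is to characterize when no strictly larger \rg extending $V_S$ exists.

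For the forward direction ($V_S$ is an \mrg $\Rightarrow$ (i) and (ii)), I would argue contrapositively. By construction of $C^*_V$ in lines 6--9 of Algorithm~\ref{alg:mfgl}, every vertex $v \in C^*_V$ is precisely a vertex for which $V_S \cup \{v\}$ satisfies the frequency constraint, i.e., is itself a \rg. Hence if $C^*_V \neq \emptyset$ then $V_S$ admits a strict \rg superset and cannot be maximal. Likewise, if some $v \in X_V$ makes $V_S \cup \{v\}$ a \rg, then $V_S$ is not maximal. Both conditions must therefore hold whenever $V_S$ is an \mrg.

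For the reverse direction, I would show that when (i) and (ii) hold, no vertex $v \in V \setminus V_S$ can extend $V_S$ into a larger \rg. I plan to partition $V \setminus V_S$ along the search invariants into three pieces: the surviving candidates $C_V$ in the current branch, the processed set $X_V$ of vertices already explored at sibling branches of the current recursion node, and the residual set $F$ of vertices that were removed earlier either by \textsf{GFCore} or by the candidate filter of Lemma~\ref{lemma-a2-filter}. For $v \in C_V$, condition (i) combined with the construction of $C^*_V$ forces $V_S \cup \{v\}$ to violate the frequency constraint. For $v \in X_V$, condition (ii) immediately gives the same conclusion. For $v \in F$ removed by \textsf{GFCore}, Lemma~\ref{lemma:core} says $v$ cannot occur in any \mrg, so in particular $V_S \cup \{v\}$ is not a \rg; for $v \in F$ pruned by Lemma~\ref{lemma-a2-filter} at some ancestor call with processing set $V_S' \subseteq V_S$, the filter certifies $V_S' \cup \{v\}$ is infrequent, and the antimonotone property (Lemma~\ref{l2}) propagates infrequency to $V_S \cup \{v\}$. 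Together these cases exhaust $V \setminus V_S$, yielding maximality.

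The main obstacle I expect is the bookkeeping for the partition $V \setminus V_S = C_V \sqcup X_V \sqcup F$: I will need to confirm by induction on the recursion depth of \textsf{EnumMFG} that this decomposition is preserved by every update, in particular at the moment a vertex is moved from $C^*_V$ into $V_S$ and then into $X_V$ (implicitly, via the update noted in the text following line~17 of Algorithm~\ref{alg:mfgl}), and when filtering removes a vertex from $C_V$. Proving that the antimonotone property can always be invoked to justify pruning --- rather than only at the ancestor where the vertex was filtered --- is the subtle step, and it is where Lemma~\ref{l2} does the decisive work.
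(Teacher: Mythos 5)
Your proof takes essentially the same route as the paper's: both argue by partitioning $V\setminus V_S$ according to the search invariants --- processed vertices are exactly those covered by $X_V$ and condition $ii)$, while any unprocessed vertex capable of extending $V_S$ must survive into $C^*_V$ and is covered by condition $i)$ --- and your treatment is in fact more explicit than the paper's about why filtered-out vertices can be safely ignored (via Lemma~\ref{lemma:core} and the antimonotone property of Lemma~\ref{l2}). The only bookkeeping detail to add is that your residual set $F$ must also absorb vertices dropped at an ancestor call because \textsf{CheckFRE} returned \textit{false} there (not only those removed by \textsf{GFCore} or the filter of Lemma~\ref{lemma-a2-filter}); the identical appeal to Lemma~\ref{l2} disposes of that case.
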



\begin{proof}
If $V_S$ is an \mrg, no vertex from $V\backslash V_S$ can be added into $V_S$ to form a \rg.
The vertices in $V\backslash V_S$ can be categorized as two kinds: one is the non-processed vertex, and the other is the processed vertex.
The non-processed vertices that can form a \rg with $V_S$ will be stored in $C^*_V$ by lines 7-9 of the Algorithm \ref{alg:mfgl}.
Therefore, $C^*_V\neq \emptyset$ means that there exists the other vertex that can join $V_S$ to form a larger \rg and $V_S$ is not maximal.
The vertex processed before and included in at least one \rg are stored in $X_V$.
Therefore, if there is a vertex in $X_V$ that can form a \rg with the current processing vertex set $V_S$, $V_S$ is the subgraph of one found result and $V_S$ is not maximal. The lemma holds.
\end{proof}

By applying Lemma \ref{lemma:a2-malc}, we can eliminate the enormous comparisons to determine the containment relationship between two vertex sets.
The remaining issue is how to efficiently check whether there exists a vertex $v\in X_V$ such that $V_S\cup\{v\}$ is a \rg.
To achieve this, 
we first apply the filtering rule (Lemma \ref{lemma-a2-filter}) {to efficiently shrink $X_V$}. 
Then we apply the verification method  (Algorithm \ref{alg:fc}) to check the frequency of the vertex set obtained by adding each remaining vertex of $X_V$ separately into $V_S$.
\section{Verification-free Approach}
\label{sec:method2}

\filterv is significantly faster than \bkalg, even if we equip \bkalg 
with graph filtering technique (i.e., \bkalgp in the experiment). However, \filterv still suffers from some limitations, due to its search philosophy. 
When computing the valid candidate set $C^*_V$ and checking maximality, \filterv needs to iterate over each vertex in $C_V$ and $X_V$ separately for frequency verification, which could be time-consuming.
In addition, due to the vertex-oriented search paradigm, \filterv cannot effectively utilize the shared information among different computations.
For instance, the invalid timestamp information cannot be inherited by the subsequent computations.
In Table~\ref{tab:timeofgcs}, we report the execution time of computing valid candidate set and checking maximality within \filterv (i.e., \filtervcm(s)), and the percentage \filtervcm(\%) of overall execution time
on a network {D14} with more than 60 million edges (the dataset details can be found in Section~\ref{exp}).
As we can observe, the two components, i.e., compute the valid candidate set and maximality verification, take up a majority of the overall performance.
Therefore, if 
we can reduce the frequency and maximality verification cost or even avoid such cost to some extent, the overall performance can be significantly improved.

This motivates us to develop a strategy without verification, i.e., derive the valid candidate set directly.
Specifically, in this section, we develop a timestamp-oriented search paradigm. 
It iterates through the timestamps to obtain the valid candidate set $C^*_V$ using the dynamic counting structures proposed (Section~\ref{sec:vcc}), 
where the unpromising timestamp can be skipped and common neighbor information can be carried forward.
Then, we present the verification-free algorithm (\vfree) in Section~\ref{sec:va}.
Theoretically, \vfree can significantly reduce each valid candidate set computation cost in \filterv by a factor of $\mathcal{O}(|V|)$ (details are shown in Theorem~\ref{theo:timecomplex-final}). 
Besides, based on the search paradigm proposed in \vfree, we can avoid explicit maximality verification.
In Table~\ref{tab:timeofgcs}, \textsf{\vfreecm(s)} denotes the execution time of computing the valid candidate set and verifying maximality in \vfree, which is much faster than that in \filterv.

\subsection{Valid Candidate Set Computation}
\label{sec:vcc}

\begin{table}[t]
  \caption{Comparison of \filterv and \vfree in computing valid candidate set and checking maximality on D14}
  \label{tab:timeofgcs}
   \small
\begin{tabular}{cccccc}
  \hline
  \textbf{$(\tau_U,\tau_V,\lambda)$} & (8,4,8) & (9,5,8) & (10,6,6) & (10,6,10) \\
  \hline
  \textsf{\filtervcm(\%)} & 88.26\% & 88.52\% & 85.05\% & 86.68\% \\
  \textsf{\filtervcm(s)} & 899.30s & 702.27s & 617.14s & 248.64s \\
  \textsf{\vfreecm(s)} & 63.80s & 28.78s & 26.65s & 9.04s \\
  \hline
\end{tabular}
\end{table}

\myparagraph{General idea} 
In the verification-free framework, we process timestamps sequentially to compute the valid candidate set $C^*_V$. The procedure on one timestamp $t$ consists of four steps, where $V_S$ is the current processing vertex set.

\begin{itemize}[leftmargin=*]
    \item \textit{Step 1: ascertain from $U$}. Obtain the common  \smomeneighbors of $V_S$ in snapshot $G_t$ and store them into $cand_U$.
    
    \item \textit{Step 2: termination check}. If $|cand_U|< \tau_U$, stop processing $t$ and move to the next timestamp, since $V_S$ cannot form a ($\tau_U,\tau_V$)-biclique in $G_t$. Otherwise, $t$ is a \textbf{survived timestamp}. 
    
    \item \textit{Step 3: reverse-ascertain from $V$}. Find all the vertices in $V_t \backslash V_S$ that connect at least $\tau_U$ vertices in $cand_U$ and store them into $cand_V$.
    
    \item \textit{Step 4: survived timestamp update}. 
    Increase the survived timestamp count of $V_S\cup\{v\}$ by 1 for $v\in cand_V$.
\end{itemize}

\begin{example}
\label{exp:method2_main_idea}
Reconsider the graph in Figure \ref{fig:tbg} with $\tau_U=\tau_V=2$.
Suppose $V_S=\{v_1,v_2\}$ and $t=1$. 
\textit{Step 1)} We store the common \smomeneighbors of $V_S$ in $G_1$ to $cand_U$, \ie $cand_U=\{u_3,u_4\}$.
\textit{Step 2)} Since $|cand_U| \geq \tau_U=2$, $t=1$ is a survived timestamp for $V_S$.
\textit{Step 3)} We proceed to examine the common \smomeneighbors of $cand_U$. 
Besides $V_S$, both $v_3$ and $v_5$ connect $\tau_U=2$ vertices in $cand_U$.
Thus, $cand_V = \{v_3,v_5\}$, and we increase the survived timestamp count of $V_S\cup \{v_3\}$ and $V_S\cup \{v_5\}$ by 1, respectively in \textit{Step 4}.
\end{example}

Based on the above procedure, we iterate through all the timestamps.
It is easy to verify that, after processing all the timestamps, $C^*_V$ is the subset of $cand_V$, where for each $v \in C^*_V$, its survived timestamp count is no less than $\lambda$. 
In the following, we present the details about how to $i)$ enable the inheritance of invalid timestamp information and $ii)$ accelerate the neighbor computation.

\myparagraph{Timestamp inheritance} To enable the inheritance of timestamps, we further maintain a timestamp set $C_T$ for $V_S$, which stores all the survived timestamps when computing $C^*_V$ of $V_S$. Then, according to Lemma~\ref{lemma:a3-tsr}, we only need to check the timestamps in $C_T$ when processing the subsequent branches of $V_S$.

\begin{lemma}[Timestamp skipping rule]
\label{lemma:a3-tsr}
Given the processing vertex set $V_S$ and computed $C_T$, 
if $t \notin C_T$, we can skip the processing at the timestamp $t$ in the subsequent search branches of $V_S$, i.e., the branch by adding any vertex into $V_S$.
\end{lemma}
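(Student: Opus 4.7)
The plan is to argue by monotonicity: the set of common m-neighbors of a vertex set at a fixed timestamp can only shrink as we add more vertices to that set, so any timestamp that already fails the threshold for $V_S$ cannot recover when we process a superset of $V_S$.

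First I would unpack what $t \notin C_T$ means. By construction of $C_T$ in the four-step procedure (and in particular by the termination check in Step~2), $t \notin C_T$ holds exactly when the $cand_U$ computed for $V_S$ at $t$ has size strictly less than $\tau_U$. This in turn covers two sub-cases that I want to handle uniformly: either some $v \in V_S$ is absent from $V_t$ (so the set of common m-neighbors is empty, hence below $\tau_U$), or all of $V_S$ appears in $G_t$ but their common m-neighbors in $U_t$ number fewer than $\tau_U$.

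Second, I would formalize the monotonicity step. For any $V_S' \supseteq V_S$ and any timestamp $t$, the common m-neighbors of $V_S'$ in $G_t$ equal $\bigcap_{v \in V_S'} \Gamma(v,t) \subseteq \bigcap_{v \in V_S} \Gamma(v,t)$, which is the $cand_U$ computed for $V_S$ at $t$. Hence the analogous $cand_U'$ for $V_S'$ at $t$ satisfies $|cand_U'| \leq |cand_U| < \tau_U$. By Step~2 applied to $V_S'$, timestamp $t$ again fails the termination check, and so cannot contribute to any survived timestamp count in the subsequent branch. Since the subsequent branches of $V_S$ in Algorithm~\ref{alg:mfgl} all correspond to processing some $V_S' = V_S \cup A$ with $A \neq \emptyset$, this justifies skipping $t$ throughout those branches.

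The main obstacle, which is really just a bookkeeping point rather than a mathematical one, is to make sure the argument remains valid when $V_S \not\subseteq V_t$: in that degenerate case $\bigcap_{v \in V_S} \Gamma(v,t) = \emptyset$, but the monotonicity inclusion above still holds vacuously, so the same conclusion applies to $V_S'$. With that observation the proof is a short chain of set inclusions and does not require any additional machinery beyond Definition~\ref{mn} and the four-step procedure that defines $C_T$.
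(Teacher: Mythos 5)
Your proposal is correct and follows essentially the same route as the paper: the paper's proof is a one-line appeal to the antimonotone property (Lemma~\ref{l2}) extended per timestamp, and your explicit set-inclusion argument $\bigcap_{v \in V_S'} \Gamma(v,t) \subseteq \bigcap_{v \in V_S} \Gamma(v,t)$ is precisely the monotonicity that underlies that appeal. Your additional handling of the degenerate case $V_S \not\subseteq V_t$ is a fine bit of extra care but does not change the substance.
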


\begin{proof}
By extending the antimonotone property of \mrg (i.e., Lemma \ref{l2}), if $t$ is not the survived timestamp of $V_S$, $t$ cannot be the survived timestamp of any superset of $V_S$. The lemma holds.
\end{proof}

\myparagraph{Neighbor computation acceleration} To compute and maintain the neighbor information, we design three dynamic counting data structures as follows.

\begin{itemize}[leftmargin=*]
    \item A two-dimensional array, denoted by $cnt_U[t][u]$, to count the number of \smomeneighbors of $u\in U$ in $V_S$ at $t\in C_T$.
    \item  A two-dimensional array, denoted by $cnt_V[t][v]$, to count the number of \smomeneighbors of $v\in V\backslash V_S$ in $cand_U$ at $t\in C_T$.
    \item An one-dimensional array, denoted by $cnt_{T}[v]$, used to record the number of survived timestamps for $V_S \cup \{v\}$, where $v\in C_V$.
\end{itemize}

For a given timestamp $t \in C_T$ and the current processing vertex set $V_S$, if $cnt_U[t][u] = |V_S|$, it means $u$ is the common \smomeneighbors of all the vertices in $V_S$ in $G_t$. Therefore, $u$ can be stored in $cand_U$ (correspond to Step 1). If $|cand_U| < \tau_U$, we can skip the timestamp (Step 2). Similarly, if $cnt_V[t][v] \geq \tau_U$, it means $v$ connects to at least $\tau_U$ vertices in $cand_U$ {at $t$} and can be stored in $cand_V$ (Step 3). 
{Then, we increase $cnt_T[v]$ by 1, which means $t$ is a survived timestamp for $V_S \cup \{v\}$ (Step 4)}.

\begin{figure}[t]
\centering
    \begin{minipage}{0.45\textwidth}
        \centering
        \includegraphics[width=\textwidth]{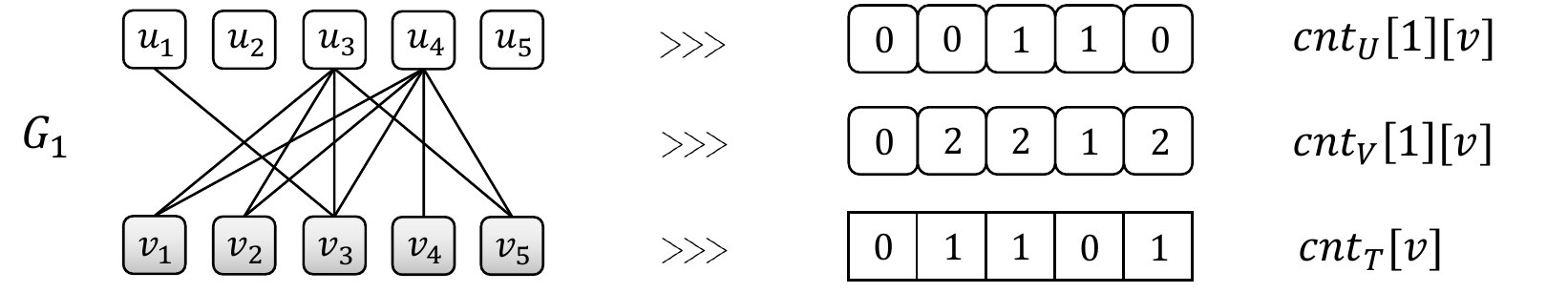}
        \subcaption{$V_S=\{v_1\}$, $C^*_V=\{v_2,v_3,v_5\}, t=1$}
    \end{minipage} \\[0.2cm]
    \begin{minipage}{0.45\textwidth}
        \centering
        \includegraphics[width=\textwidth]{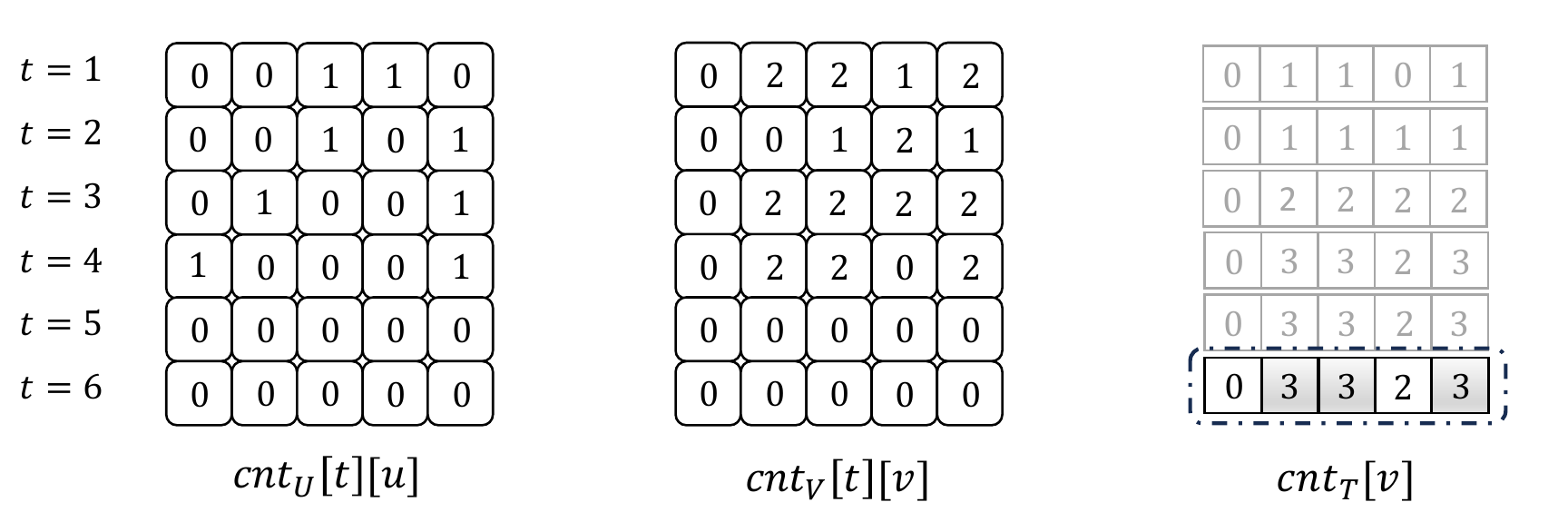}
        \subcaption{$V_S=\{v_1\}$, $C^*_V=\{v_2,v_3,v_5\}$, $C_T=\{1,2,3,4\}$}
    \end{minipage} \\[0.2cm]
    \begin{minipage}{0.45\textwidth}
        \centering
        \begin{subfigure}{\textwidth}
            \includegraphics[width=\textwidth]{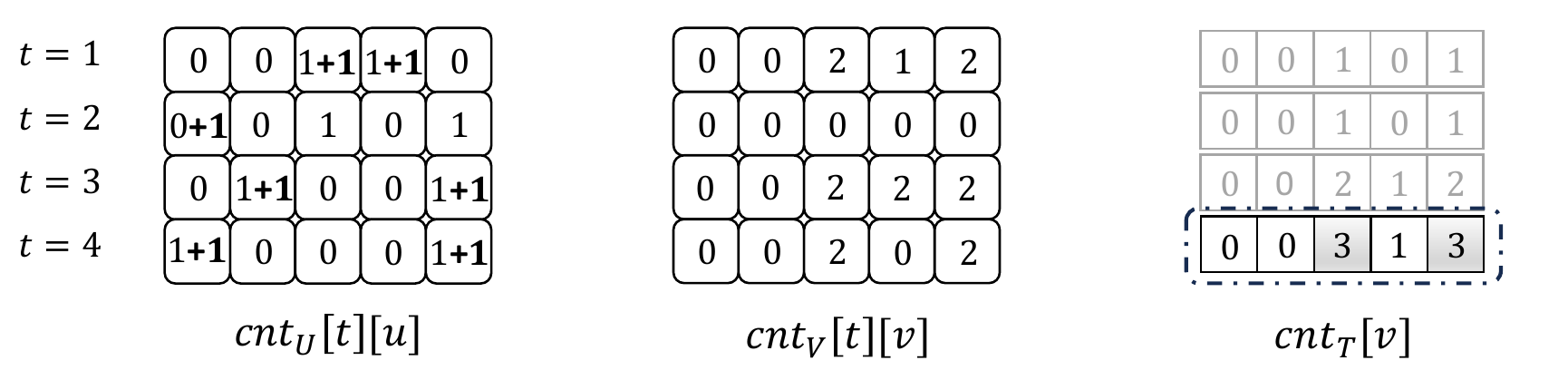}
            \subcaption{$V_S=\{v_1,v_2\}$, $C^*_V=\{v_3,v_5\}$, $C_T=\{1,3,4\}$}
        \end{subfigure}
    \end{minipage}
	\caption{Running example of computing valid candidate set using \vfree  with $\tau_U=\tau_V=2$ and $\lambda=3$}
	\label{fig:new-a3-final}
\end{figure}

\begin{example}
\label{ex:first}
Reconsider the graph in Figure~\ref{fig:tbg} with $\tau_U=\tau_V=2$ and $\lambda=3$. 
Suppose the current processing set $V_S = \{v_1\}$.
To compute the valid candidate set, Figure~\ref{fig:new-a3-final}(a) visualizes the results of three data structures after processing timestamp $t=1$.
When $V_S = \{v_1\}$, we have $C_V=\{v_2,v_3,v_4,v_5\}$ and $C_T=\{1,2,3,4,5,6\}$, since it is the first vertex explored. The three data structures are initialized with 0. The \smomeneighbors of $v_1$ at $t=1$ are $u_3$ and $u_4$. Thus, we increase $cnt_U[1][3]$ and $cnt_U[1][4]$ by 1, respectively. As $cnt_U[1][3] = cnt_U[1][4] = |V_S| =1$, we have $cand_U = \{u_3,u_4\}$. Since $|cand_U| \geq \tau_U = 2$, $t=1$ is a survived timestamp. Next,
we process the \smomeneighbors of $u_3$ at $t=1$, i.e., $\{v_1,v_2,v_3,v_5\}$.
Thus, we assign a value of 1 to $cnt_V[1][2]$, $cnt_V[1][3]$, $cnt_V[1][5]$ for vertex $v_2$, $v_3$ and $v_5$, respectively.
Similarly, we process the \smomeneighbors of $u_4$ and increase the corresponding count. We
have $cnt_V[1][2]=2$, $cnt_V[1][3]=2$, $cnt_V[1][4]=1$ and $cnt_V[1][5]=2$.
Finally, since $cnt_V[1][2]=cnt_V[1][3]=cnt_V[1][5]=2$, we have $cand_V = \{v_2,$ $v_3,v_5\}$ and  increment $cnt_T[2]$, $cnt_T[3]$ and $cnt_T[5]$ by 1.
\end{example}

After processing each timestamp in $C_T$, we get the final $cnt_T[v]$. For a vertex $v' \in V \backslash V_S$, if $cnt_T[v'] \geq \lambda$, it means $v'$ is in the valid candidate set of $V_S$. 

\begin{example}
\label{ex:second}
Following Example~\ref{ex:first}, 
Figure~\ref{fig:new-a3-final}(b) shows the final results of the three data structures after processing all the timestamps in $C_T$. As shown, $cnt_T[v_2]=cnt_T[v_3]=cnt_T[v_5] = 3\geq \lambda$. Therefore, we have the valid candidate set $C_V^* = \{v_2,v_3,v_5\}$, and its survived timestamp set is $\{1,2,3,4\}$. The first column of $cnt_V[t][v]$ is 0. This is because $v_1$ is in the current processing vertex set $V_S$, and we do not need to record the value.
\end{example}

In the search branch of $V_S$, $V_S$ is continuously expanded by adding new vertices into $V_S$. It means that the values in $cnt_U[t][u]$ are non-decreasing. Thus,
we can incrementally maintain the data structure $cnt_U[t][u]$, i.e., $cnt_U[t][u]$ can be inherited in the subsequent search to avoid duplicated computation. 
$cnt_V[t][v]$ and $cnt_T[v]$ are temporally maintained when computing the valid candidate set $C^*_V$ for a $V_S$, and will be reset after each $C^*_V$ computation.

\begin{example}
\label{example:vfmrg2}
Following Example~\ref{ex:second},
suppose we add $v_2$ to expand the vertex set $\{v_1\}$, and compute the valid candidate set $C_V^*$ for $V_S = \{v_1,v_2\}$. Figure~\ref{fig:new-a3-final}(c) shows the final results of the three data structures. The bold values in $cnt_U[t][u]$ denote the incremental computations. Note that, since the survived timestamps of $\{v_1\}$ is $C_T = \{1,2,3,4\}$, we can skip the processing of $t=5$ and 6. Similarly, to compute $C_V^*$, we need to iterate through $C_T$. The difference is that, when updating $cnt_U[t][u]$, we only need to increment the values based on the newly added vertex $v_2$ instead of $V_S$. For instance, at $t=1$, the \smomeneighbors of $v_2$ is $\{u_3, u_4\}$. Therefore, we increment $cnt_U[1][3]$ and $cnt_U[1][4]$ by 1, respectively. 
After processing all the timestamps in $C_T$, 
We have $C^*_V = \{v_3,v_5\}$ by checking $cnt_T[v]$.
\end{example}

\subsection{\vfree Algorithm} 
\label{sec:va}

\begin{algorithm}[t]
{
    \SetVline
    \footnotesize
    \caption{{\textsf{Verification-Free} (\vfree)}}
    \label{alg:final}
    \Input{$\mathcal{G}=(U,V,\mathcal{E})$: a temporal bipartite graph, \\ $\tau_U$, $\tau_V$: size constraints, $\lambda$: frequency constraint}
    \Output{$\mathcal{R}$: all the \mrgs}
    \State{$\mathcal{R}\gets \emptyset$}
    \StateCmt{$\mathcal{G}\gets$ \textsf{GFCore}$(\mathcal{G},\tau_U,\tau_V,\lambda)$}{graph filter: Algorithm \ref{alg:abocore}}
    \State{Reassign vertex id of $V$ in ascending order of the structural degree}
    \ForEach{$t\in \mathcal{T}$}{
        \State{\textbf{for each}~$u\in U$ \textbf{do} $cnt_U[t][u]=0$}
        \State{\textbf{for each}~$v\in V$ \textbf{do} $cnt_V[t][v]=0$}
    }
    \State{\textbf{for each}~$v\in V$ \textbf{do} $cnt_{{T}}[v]=0$}
    \State{\textsf{VerifyFreeMFG}$(\emptyset,V,\mathcal{T})$}

       \vspace{1mm}
       {\textbf{Procedure} \textsf{VerifyFreeMFG}$(V_S,C_V,C_T)$}\\
       \ForEach{$v \in C_V$}{
            \State{$V'_S \gets V_S \cup \{v\},C^*_V\gets \emptyset$, $C'_T\gets \emptyset$, $cand_V\gets \emptyset$}
            \ForEach{$t\in C_T$}{
                \State{$cand_U\gets \emptyset$}
                \StateCmt{$visit_V\gets \emptyset$}{store vertices for the first verify} 
                \ForEach{$u \in \Gamma(v,t)$}{
                    \State{$cnt_U[t][u]=cnt_U[t][u]+1$}
                    \State{\textbf{if} $cnt_U[t][u]=|V'_S|$ \textbf{then} $cand_U.push(u)$}
                }
                \State{\textbf{if} $|cand_U| < \tau_U$ \textbf{then~continue}}
                \StateCmt{$C'_T.push(t)$}{survived timestamp set for $V'_S$}
                \ForEach{$u'\in cand_U$}{
                    \ForEach{$v' \in \Gamma(u',t)$}{
                        \State{\textbf{if} $v' \in V'_S$ \textbf{then~continue}}
                        \If{$v' \notin visit_V$}{
                            \State{$cnt_V[t][v']=1$, $visit_V\gets visit_V\cup \{v'\}$}
                        }
                        \Else{
                        \State{$cnt_V[t][v']=cnt_V[t][v']+1$}
                        }
                        \If{$cnt_V[t][v']=\tau_U$}{
                            \If{$cnt_{{T}}[v']=0$}{
                                \State{$cand_V.push(v')$}
                            }
                            \State{$cnt_{{T}}[v']=cnt_{{T}}[v']+1$}
                        }
                    }
                }
            }
            \State{$\nflag \gets true$}
            \ForEach{$v'\in cand_V$}{
                \State{\textbf{if} $cnt_{{T}}[v']<\lambda$ \textbf{then}~$cnt_T[v']=0$,~\textbf{continue}}
                \State{$cnt_T[v']=0$}
             \State{\textbf{if} $v' < v$ \textbf{then}~$\nflag \gets$ \textit{false}}
             \StateCmt{\textbf{else} $C^*_V.push(v')$}{valid candidate set for $V'_S$}
            }
            \If{$|V'_S|+|C^*_V| \geq \tau_V \land |C'_T|\geq \lambda$}{
                \StateCmt{sort $C^*_V$ based on vertex id}{ensure the processing order}
                \State{\textsf{VerifyFreeMFG}$(V'_S,C^*_V,C'_T)$}
                \StateCmt{\textbf{if} $|C^*_V|=0 \land notRepeat$ \textbf{then}~$\mathcal{R}\gets \mathcal{R}\cup \{V'_S\}$}{\mrg}
            }

            \ForEach{$t\in C_T$}{
                \ForEach{$u\in \Gamma(v,t)$}{
                    \StateCmt{$cnt_U[t][u]=cnt_U[t][u]-1$}{update $cnt_U$}
                }
            }
       }

}
\end{algorithm}

By incorporating the above techniques, we present our verification-free (\vfree) approach, whose details are shown in Algorithm~\ref{alg:final}. 
{\underline{\textit{Initialization (lines 1-7).}} 
$\mathcal{R}$ is used to store all the \mrgs. We also use the graph filtering technique (Algorithm \ref{alg:abocore} in Section \ref{subsec:fliter}) to reduce the search space. 
Then, we reassign the id of each vertex in $V$ in ascending order of the structural degree.}
{Ties are randomly broken if vertices have the same structural degree.}
\textit{Note that}, in \vfree, we process vertices of $V$ in the order of vertex id to ensure the maximality of the result, i.e., avoid explicit maximality verification.
That is,
for any given vertex id setting, as long as we process the vertices in ascending order of the id, the algorithm's properties and correctness can be ensured.
We initialize $cnt_U$, $cnt_V$ and $cnt_T$ in lines 4-7.
In line 8, we invoke the procedure \textsf{VerifyFreeMFG} to enumerate all \mrgs, where we initialize $V_S$ with $\emptyset$, $C_V$ with $V$ and $C_T$ with $\mathcal{T}$.
Generally, $V_S$ is the current processing vertex set, $C_V$ is the {valid} candidate set and $C_T$ is the survived timestamp set for $V_S$. 
Details of the procedure \textsf{VerifyFreeMFG} are shown in lines 9-43. 

In lines 11-36, we compute the valid candidate set for $V'_S = V_S\cup\{v\}$, where the $\nflag$ flag in lines 31 and 35 is used for 
later maximality check. 
We initialize $C^*_V$ , $C'_T$  and $cand_V$ with $\emptyset$ (line 11), and
perform the following operations in turn under each timestamp $t \in C_T$ (lines 12-30).
{\underline{\textit{Step 1: ascertain from $U$ (lines 13-17).}}} 
{We first initialize $cand_U$ with $\emptyset$ to denote the common \smomeneighbor set of all the vertices in $V'_S$ at $t$ 
and $visit_V$ with $\emptyset$ to help maintain the information in $cnt_V[t][v]$.}
Recall that $cnt_V[t][v]$ and $cnt_T[v]$ are temporally maintained, and $cnt_U[t][u]$ can be inherited.
$visit_V$ can help restore the corresponding information in $cnt_V[t][v]$ instead of repeated initialization. 
For each \smomeneighbor $u \in \Gamma(v,t)$, we increase $cnt_U[t][u]$ by 1 denoting that $u$ is connected with $v$ at $t$ (line 16).
If $cnt_U[t][u]=|V'_S|$, which means $u$ connects all the vertices in $V'_S$, we push $u$ into $cand_U$ (line 17).
{\underline{\textit{Step 2: termination check (lines 18-19).}}} 
After processing all the \smomeneighbors of $v$, if $|cand_U|$ is less than $\tau_U$, {which means that $t$ is not a survived timestamp for $V'_S$,  we can skip $t$.
Otherwise, we push $t$ into $C'_T$ that records the survived timestamps for $V'_S$.}
Then we need to explore all the vertices in $cand_U$  in lines 20-30.
{\underline{\textit{Step 3: reverse-ascertain from $V$ (lines 20-29).}}} 
For each vertex $u' \in cand_U$, we need to traverse its \smomeneighbors $\Gamma(u',t)$ iteratively.
Specifically, if its \smomeneighbor $v' \in \Gamma(u',t)$ exists in $V'_S$, we skip the current vertex (line 22).
If $v' \notin visit_V$, which means that it is the first time to visit $v$ in the current search, we set $cnt_V[t][v']$ with 1 and push $v'$ into $visit_V$ (lines 23-24).
Otherwise, we add $cnt_V[t][v']$ by 1 (lines 25-26).
If $cnt_V[t][v']=\tau_U$, which means $v'$ has no less than $\tau_U$ common \smomeneighbors with all vertices in $V'_S$ at timestamp $t$ (\ie $v'$ connects at least $\tau_U$ vertices in $cand_U$), we check whether $cnt_T[v']$ equals 0 or not (lines 27-28).
As discussed before, $cnt_T[v']$ denotes the number of survived timestamps for the set $V'_S \cup \{v'\}$.
If $cnt_T[v']=0$, we push $v'$ into $cand_V$ (lines 28-29).
{\underline{\textit{Step 4: survived timestamp update (line 30).}}} 
{We add the number of survived timestamps for $v'$ (\ie $cnt_T[v']$) by 1.}
{\underline{\textit{Valid candidate set computation (lines 31-36).}}} 
After processing all the timestamps in $C_T$, we can obtain the final $cnt_T[v']$ for each vertex $v'\in cand_V$.
{If $cnt_T[v'] < \lambda$, we set $cnt_T[v']$ as 0 and skip $v'$.
If $v' < v$, we set $notRepeat$ as \textit{false}.
If $v' \geq v$, we push $v'$ into $C^*_V$ as the valid candidate vertex for $V'_S$.}

We recursively invoke \textsf{VerifyFreeMFG} if the size constraint is satisfied (lines 37-39). $C^*_V$ is sorted to ensure the processing order of vertices in $V$.  
We restore $cnt_U[t][u]$ by reducing 1 for each \smomeneighbor $u$ of $v$ to utilize it in the next iteration (lines 41-43).
Unlike \filterv, which explicitly conducts maximality verification through expensive computation (i.e., Section 3.4), in \vfree,  
if $C^*_V=0$ and $notRepeat$ is \textit{true}, it means we find an \mrg that can be added to the result set $\mathcal{R}$ (line 40). 
The correctness is shown in Theorem \ref{theo:correct-final}.





\begin{theorem}[Algorithm correctness]\label{theo:correct-final}
Given a temporal bipartite graph $\mathcal{G}$, three positive integers $\tau_U$, $\tau_V$ and $\lambda$, Algorithm \ref{alg:final} can return all the \mrgs in $\mathcal{G}$.
\end{theorem}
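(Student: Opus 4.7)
The plan is to establish both soundness (every set added to $\mathcal{R}$ is an \mrg) and completeness (every \mrg is added to $\mathcal{R}$). The backbone of the argument is a loop invariant on the counting structures, maintained by induction on the recursion depth of \textsf{VerifyFreeMFG}. Specifically, at the entry of each call $\textsf{VerifyFreeMFG}(V_S, C_V, C_T)$, I would show: (i) $cnt_U[t][u]$ equals the number of vertices in $V_S$ that are \smomeneighbors of $u$ at $t$, for every $t \in C_T$ and $u \in U$; (ii) $cnt_V[t][v] = 0$ and $cnt_T[v] = 0$ for every $v \in V \setminus V_S$; (iii) $C_T$ equals the exact set of survived timestamps of $V_S$; and (iv) $V_S$ itself is a \rg. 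The invariant holds at the top-level call by the initialization in lines 4--7, and is preserved by the incremental update in line 16, the constructions of $C'_T$ in line 19 and $C^*_V$ in lines 32--35, and the rollback of $cnt_U$ in lines 41--43, together with the resets of $cnt_V$ and $cnt_T$ via $visit_V$ and lines 32--33.

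For soundness, given the invariant, when $V'_S$ is added to $\mathcal{R}$ at line 40 we have $|C'_T| \geq \lambda$, hence $V'_S$ is $\lambda$-frequent. Maximality follows from combining two observations: $C^*_V = \emptyset$ implies that no vertex $v' \geq v$ can extend $V'_S$ into a \rg (since any such extension would have passed the $cnt_T[v'] \geq \lambda$ test in line 32 and entered $C^*_V$), while $notRepeat = true$ implies that no vertex $v' < v$ can extend $V'_S$ either (otherwise $v'$ would have satisfied $cnt_T[v'] \geq \lambda$ and triggered $notRepeat \gets false$ in line 34). Together these exhaust $V \setminus V'_S$, so $V'_S$ is maximal.

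For completeness, I would invoke Lemma~\ref{lemma:core} to guarantee that \textsf{GFCore} preserves every vertex appearing in any \mrg, and Lemma~\ref{lemma:a3-tsr} to ensure that the inherited $C_T$ in any subsequent branch contains every survived timestamp of supersets of the current $V_S$. Then, for an arbitrary \mrg $V^{*} = \{v_{i_1}, \ldots, v_{i_k}\}$ with $i_1 < i_2 < \cdots < i_k$ under the reassigned ordering (line 3), I would show by induction on $j$ that the algorithm enters the recursive branch whose current processing set equals $\{v_{i_1}, \ldots, v_{i_j}\}$: the antimonotone property (Lemma~\ref{l2}) ensures every prefix is $\lambda$-frequent, so by the invariant each prefix appears as $V'_S$ in some recursive call with $v_{i_{j+1}} \in C^*_V$ (because $v_{i_{j+1}} > v_{i_j}$ and $v_{i_{j+1}}$ extends the prefix to a \rg, hence satisfies $cnt_T \geq \lambda$). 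At the leaf where $V'_S = V^{*}$, we must have $C^*_V = \emptyset$ by the maximality of $V^{*}$ applied to vertices with id $\geq i_k$; and $notRepeat = true$ because any $v' < v_{i_k}$ extending $V^{*}$ would contradict the maximality of $V^{*}$. Hence $V^{*}$ is added to $\mathcal{R}$.

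The main obstacle will be the delicate bookkeeping of the counting structures across recursive calls: in particular, verifying that the rollback in lines 41--43 restores $cnt_U$ exactly to its pre-call state, and that the transient structures $cnt_V$ and $cnt_T$ are fully zeroed before the next timestamp-level iteration begins (via the $visit_V$ set and the branches in lines 32--33 that reset $cnt_T[v']$ regardless of whether $v'$ ends up in $C^*_V$). A secondary subtlety is proving that the ordering-based search paradigm together with the asymmetric treatment of $v' < v$ versus $v' \geq v$ in lines 34--35 yields exactly-once discovery of each \mrg, rather than duplicate or missing enumeration; this requires a careful case analysis of where in the recursion tree a given \mrg must appear based on the largest-indexed vertex in its id ordering.
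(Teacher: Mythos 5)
Your proposal is correct and follows essentially the same route as the paper's own proof: both rest on the correctness of the counting structures in identifying every vertex that extends $V'_S$ into a \rg, the split of such vertices into those with id smaller than $v$ (caught by the \nflag flag) versus larger than $v$ (caught by $C^*_V=\emptyset$), and the fact that the id-ordered recursion reaches every prefix of every \mrg. The paper states these points more tersely (without your explicit invariant on $cnt_U$, $cnt_V$, $cnt_T$ or the induction on prefixes), so your write-up is simply a more rigorous rendering of the same argument.
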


\begin{proof}
We first prove that the search branch in line 39 of Algorithm \ref{alg:final} can return all the \mrgs containing $V'_S$ {that have not been found in the previous search branch}.
Following the algorithm, any vertex $v'\in cand_V$ satisfies $cnt_T[v']\geq \lambda$ is the one that can form a \rg with $V'_S$, \ie $V'_S \cup\{v'\}$ is the \rg.
All these vertices can be enumerated during the search and categorized into two kinds, the one whose id is larger than $v$ and the other whose id is smaller than $v$.
If $v' < v$, which means that $v'$ was processed earlier and $V_S\cup \{v,v'\}$ will be enumerated during the search branch for $V_S\cup\{v'\}$.
If $v' > v$, $v'$ will be added into $C^*_V$.
It is easy to verify that $V'_S$ is not maximal if $|C^*_V|\neq 0$.
Hence, all the \mrgs containing $V'_S$ can be enumerated during the branch in line 39.
Besides, all vertices in $V$ will be traversed in our algorithm, so that \mrg containing each vertex in $V$ can be obtained through the recursion of line 39.
Therefore, the theorem is correct.
\end{proof}

\begin{theorem}[Time complexity analysis]\label{theo:timecomplex-final}
The time complexity of computing the valid candidate set $C_V^*$ for a vertex set $V_S$ based on Algorithm \ref{alg:final} is $\mathcal{O}(d_{max}(u) \cdot d_{max}(v) \cdot |\mathcal{T}|)$.
\end{theorem}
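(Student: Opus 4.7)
The plan is to analyze the cost of one iteration of the outer loop at lines 10--43 of Algorithm~\ref{alg:final}, which corresponds to computing $C_V^*$ for a fixed $V'_S = V_S \cup \{v\}$, by decomposing the work per timestamp and summing over $C_T$. Within a single timestamp $t \in C_T$, the work splits into: Step 1 (lines 13--17), which scans $\Gamma(v,t)$ in $\mathcal{O}(|\Gamma(v,t)|)$ time; Step 2 (lines 18--19), which is $\mathcal{O}(1)$; Step 3 (lines 20--29), which for every $u' \in cand_U$ scans $\Gamma(u',t)$; and Step 4 (line 30), which is constant per inner iteration.

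The key observation driving the bound is that in the current iteration $cand_U \subseteq \Gamma(v,t)$. This is because line 16 only increments $cnt_U[t][u]$ for $u \in \Gamma(v,t)$, while the inheritance invariant from the recursive stack guarantees $cnt_U[t][u] \le |V_S|$ on entry, so only vertices in $\Gamma(v,t)$ can reach the threshold $|V'_S|$ at line 17 and be pushed into $cand_U$. Consequently $|cand_U| \le |\Gamma(v,t)| \le d_{max}(v)$, and every $u' \in cand_U$ satisfies $|\Gamma(u',t)| \le d_{max}(u)$. Thus Step 3 costs $\mathcal{O}(d_{max}(v) \cdot d_{max}(u))$ per timestamp, dominating the other per-timestamp contributions.

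Summing over $t \in C_T$ and using $|C_T| \le |\mathcal{T}|$ gives $\mathcal{O}(|\mathcal{T}| \cdot d_{max}(u) \cdot d_{max}(v))$ for the main loop. The post-processing of $cand_V$ at lines 31--36 runs in $\mathcal{O}(|cand_V|)$, and the restoration of $cnt_U$ at lines 41--43 runs in $\mathcal{O}(|C_T| \cdot d_{max}(v))$; both are absorbed into the main term. The main subtlety I expect is establishing the inheritance invariant on $cnt_U$: each ancestor recursive call contributes exactly one increment per matching $u$ and the corresponding restoration step undoes it on exit, so upon entering the current iteration $cnt_U[t][u]$ equals the number of vertices in $V_S$ adjacent to $u$ in $G_t$. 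Proving this invariant by induction on the recursion depth (and across siblings in the for loop at line 10, where lines 41--43 re-establish the state) is what licenses $|cand_U| \le d_{max}(v)$; once this is in hand, the remaining bookkeeping is routine.
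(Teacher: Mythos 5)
Your proposal is correct and follows essentially the same route as the paper's proof: bound $|C_T|$ by $|\mathcal{T}|$, bound $|cand_U|$ by $d_{max}(v)$, charge $\mathcal{O}(d_{max}(u)\cdot d_{max}(v))$ to the reverse-ascertain step per timestamp, and absorb the post-processing and restoration costs. The only difference is that you explicitly justify $|cand_U|\le d_{max}(v)$ via the inheritance invariant on $cnt_U$ (showing $cand_U\subseteq \Gamma(v,t)$), a step the paper asserts without elaboration; this is a welcome bit of added rigor rather than a different argument.
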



\begin{proof}
In Algorithm \ref{alg:final}, the main procedure of computing the valid candidate set $C^*_V$ are in lines 11-36. The size of survived timestamp set $C_T$ is bounded by $|\mathcal{T}|$ in line 12, \ie $|C_T| \leq |\mathcal{T}|$. 
The time complexity of generating $cand_U$ at each timestamp $t$ is $\mathcal{O}(d_{max}(v))$ in lines 15-17, and $|cand_U| \leq d_{max}(v)$. Lines 22-30 can be done in constant time, and are performed $d_{max}(u) \cdot d_{max}(v)$ times at each timestamp $t$. Thus, the time complexity of running lines 12-30 is $\mathcal{O}(d_{max}(u) \cdot d_{max}(v) \cdot |\mathcal{T}|)$. In line 32, the size of $cand_V$ is bounded by $d_{max}(u)$. Therefore, Lines 32-36 take $\mathcal{O}(d_{max}(u))$ time in the worst case. 
Overall, the time complexity of computing the valid candidate set $C_V^*$ is $\mathcal{O}(d_{max}(u) \cdot d_{max}(v) \cdot |\mathcal{T}|)$.
\end{proof}

According to Theorems~\ref{tc:fc} and \ref{theo:timecomplex-final}, \vfree can reduce each valid candidate set computation cost in \filterv by a factor of $\mathcal{O}(|V|)$.

\section{Experiment}
\label{exp}

\myparagraph{Algorithms}
Note that, \bkalg cannot finish in a reasonable time if directly applied. 
Thus, we equip all the algorithms with the graph filtering technique by default. 
In the experiments, the following algorithms are implemented and evaluated.
$i)$ \textbf{\bkalgp}: \bkalg proposed in Section \ref{sec:method1} with the graph filtering technique; 
$ii)$ \textbf{\filterv}: Algorithm \ref{alg:mfgl} with all the optimizations developed in Section~\ref{sec:method1};
$iii)$ \textbf{\vfree}: Algorithm \ref{alg:final} in Section \ref{sec:method2} with graph filtering technique;
$iv)$ \textbf{\filtervfr}: \filterv without the candidate filtering rule;
$v)$ \textbf{\filtervvm}: \filterv without the verification methods;
$vi)$ \textbf{\filtervr}: \filterv without both the candidate filtering rule and verification strategies; 
$vii)$ \textbf{\vfreer}: \vfree without graph filtering optimization. 

\begin{table}[t]
  \caption{Statistics of datasets}
  \label{tab:dataset}
  \small
  \setlength\tabcolsep{1pt}
  \begin{tabular}{lccccccc}
    \toprule
    \textbf{Dataset}& \textbf{$|U|$}& \textbf{$|V|$}& \textbf{$|\mathcal{E}|$} & $\mathcal{E}~${Type}& \textbf{$|\mathcal{T}|$}& {Scale}& \textbf{($\tau_U,\tau_V,\lambda$)}\\
    \midrule
    D1~(MI)   & 100,000   & 15,648    & 58,951      & diagnose  & 25 & 6month   & (6,2,4)\\ 
    {D2~(Ip)}   & {28,540}   & {37,088}    & {73,153}      & {click}  & {31} & {N/A}   & {(3,2,3)}\\ 
    D3~(diq)  & 25,771    & 1,526     & 133,874     & edit      & 12 & year     & (3,3,3)\\ 
    D4 (vec)  & 33,587    & 2,282     & 339,722     & edit      & 14 & year   & (3,3,3)\\ 
    D5 (LK)   & 337,510   & 42,046    & 605,642     & post      & 35 & year     & (3,3,3)\\ 
    D6 (ben)  & 249,726   & 79,269    & 845,577   & edit      & 17 & year     & (3,3,3)\\ 
    D7 (Wut)  & 530,419   & 175,215   & 2,118,877   & usage     & 39 & month    & (3,2,3)\\ 
    D8 (Bti)  & 767,448   & 204,674   & 2,517,857   & assign    & 22 & year     & (3,3,3)\\ 
    D9 (AR)   & 1,230,916 & 2,146,058 & 5,754,118   & rate      & 21 & year     & (3,3,3)\\ 
    D10 (id)   & 2,183,495 & 125,482   & 7,890,901   & edit      & 59 & quarter  & (3,3,3)\\ 
    D11 (ar)  & 2,943,712 & 209,374   & 13,601,759  & edit      & 57 & quarter  & (3,3,3)\\ 
    D12 (nl)  & 3,800,350 & 220,848   & 28,294,026  & edit      & 65 & quarter  & (10,6,8)\\ 
    D13 (it)  & 4,857,109 & 343,861   & 41,146,957  & edit      & 65 & quarter  & (10,6,8)\\ 
    D14 (fr)  & 8,870,763 & 757,622   & 66,586,964  & edit      & 66 & quarter  & (10,6,8)\\ 
    {D15~(de)}   & {5,910,433}   & {1,025,085}    & {70,745,969}      & {edit}  & {67} & {quarter}   & {(11,11,11)}\\ 
  \bottomrule
\end{tabular}
\end{table}

\myparagraph{Datasets}
{We employ {15} real-world temporal bipartite graphs in our experiments, whose details are shown in Table~\ref{tab:dataset}. $|\mathcal{T}|$ is the number of snapshots. 
D1 (MIMIC-III\footnote{https://physionet.org/content/mimiciii/1.4/}) is a real clinical database that represents relationships between patient and health condition, where the timestamp associated with the edge denotes the time of diagnosis \cite{johnson2016mimic,johnson2016original,johnson2016original3}.
D2 (Ipvevents\footnote{https://tianchi.aliyun.com/dataset/123862}) is a real customer-product network, where edges denote the clicking relationships between customers and products.
Each relationship between the customer and the product is associated with the label to denote whether the customer is a fraudster or not.
The other {13} datasets are obtained from KONECT\footnote{http://konect.cc/networks/}, which are public available.}
To evaluate the impact of time span, we employ a larger dataset D16 (YS), which is a temporal person–song rating network, with $|U|$=624,962, $|V|$=1,000,991 and $|\mathcal{E}|$=256,804,235 from KONECT.

\myparagraph{Parameters and workloads}
We conduct experiments by varying parameters $\tau_U$, $\tau_V$ and $\lambda$, whose default values are shown in the last column of Table~\ref{tab:dataset}.
For each setting, we run each algorithm 10 times and report the average value.
For those experiments that cannot finish within 12 hours, we set them as \textbf{INF}.
All the programs are implemented in standard C++, and performed on a server with an Intel Xeon 2.1GHz CPU and 64 GB main memory.

\begin{figure}[t]
	\centering
	\includegraphics[width=0.95\linewidth]{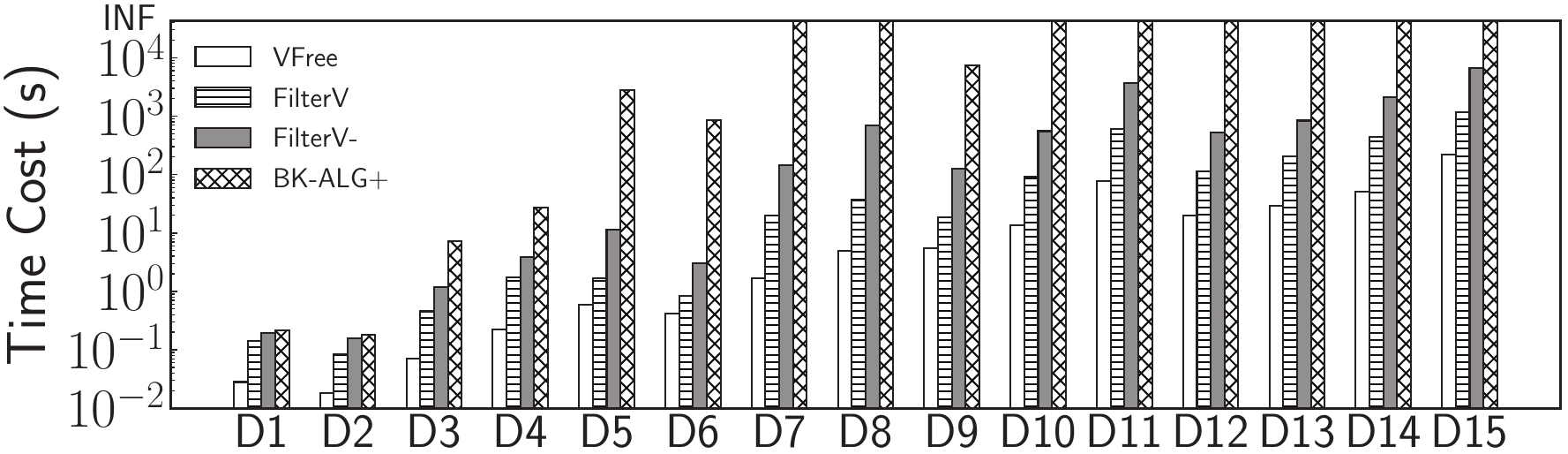}
	
	\caption{Response time on all the datasets}
	\label{fig:exp1}
\end{figure}

\subsection{Efficiency Evaluation}

\myparagraph{Exp-1: Experiments over all the datasets}
Figure \ref{fig:exp1} reports the response time of \bkalgp, \filtervr, \filterv and \vfree over all the datasets with the default settings. 
\filtervr is faster than \bkalgp due to the optimization in the BK framework, i.e., compute the valid candidate set first before the exploration. \filterv further improves \filtervr because of the filtering and verification techniques developed. 
\vfree outperforms the other algorithms with a significant margin. This is because $i)$ \vfree is time-oriented when computing the valid candidate, and the dynamic counting structures can extraordinarily speedup the computation; $ii)$ the search paradigm also significantly reduces the cost of the maximality verification. 
{For instance, on the dataset D14, \bkalgp fails to complete the computation within 12 hours. 
\filtervr and \filterv return the result in 2081 seconds and 445 seconds, respectively. 
\vfree can return the result in 50 seconds.
{On the largest dataset D15 with more than 70 million edges, the response time of \vfree is 218 seconds.}}
In datasets D5, D6 and D9, where \bkalgp can finish in a reasonable time, \vfree can achieve up to three orders of magnitude speedup.



\begin{figure}[t]
    \centering
    \begin{subfigure}{0.15\textwidth}
        \includegraphics[width=\textwidth]{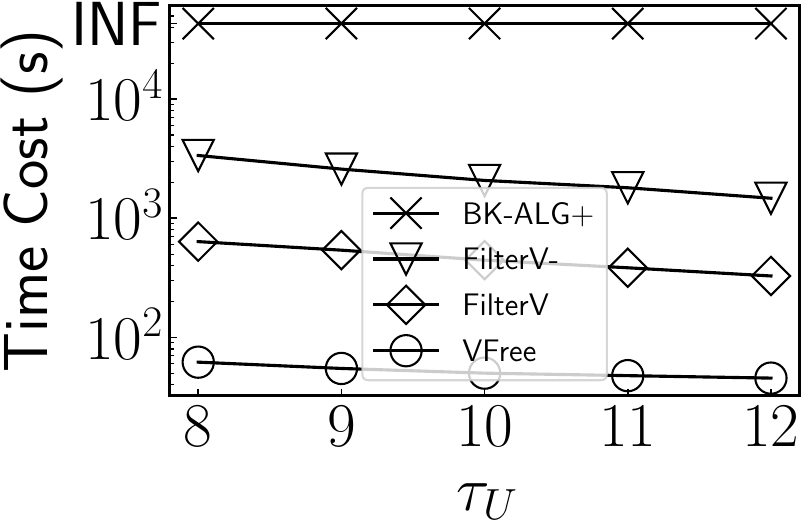}
        \caption{{\footnotesize {D14} ($\tau_U$)}} 
    \end{subfigure}
    \begin{subfigure}{0.15\textwidth}
        \includegraphics[width=\textwidth]{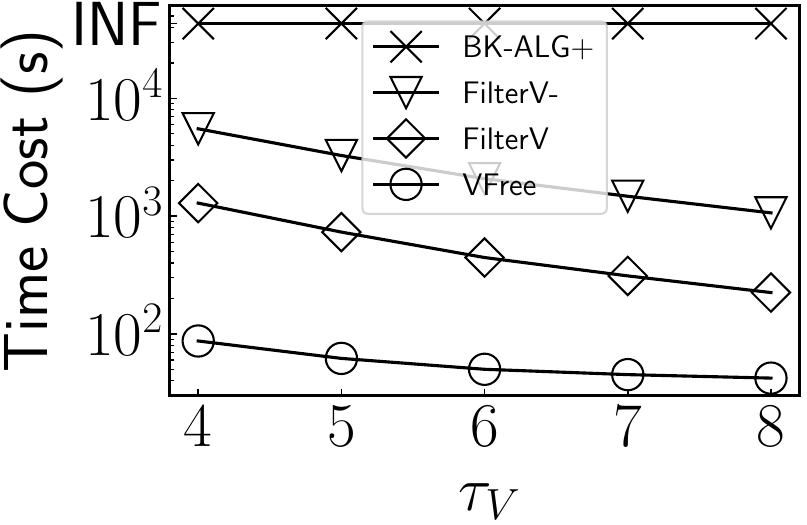}
        \caption{{\footnotesize {D14} ($\tau_V$)}} 
    \end{subfigure}
    \begin{subfigure}{0.15\textwidth}
        \includegraphics[width=\textwidth]{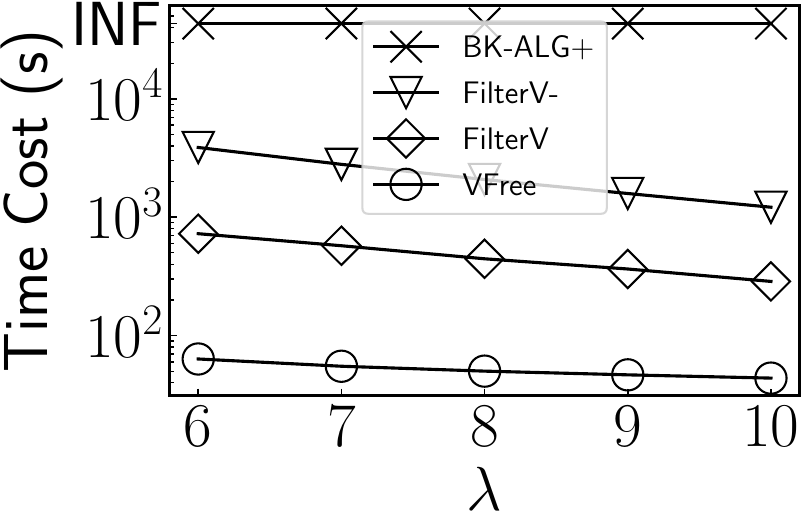}
        \caption{{\footnotesize D14 ($\lambda$)}}
    \end{subfigure}
    \\[0.2cm]
    \begin{subfigure}{0.15\textwidth}
        \includegraphics[width=\textwidth]{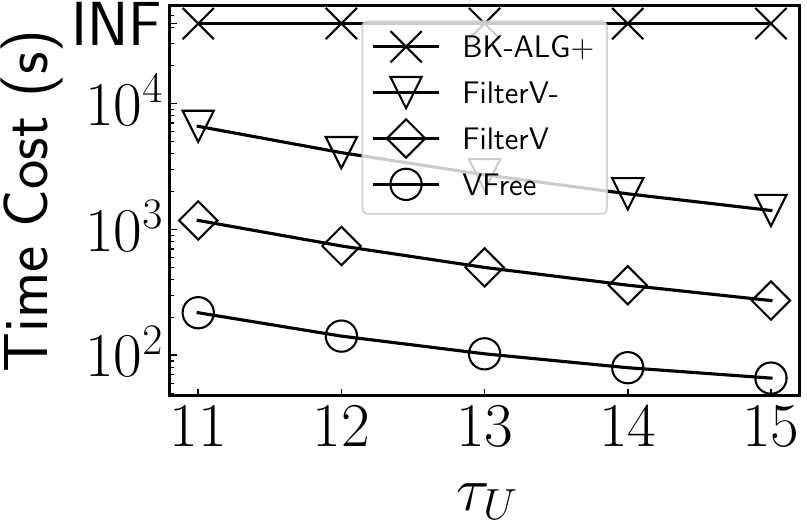}
        \caption{{\footnotesize D15 ($\tau_U$)}} 
    \end{subfigure}
    \begin{subfigure}{0.15\textwidth}
        \includegraphics[width=\textwidth]{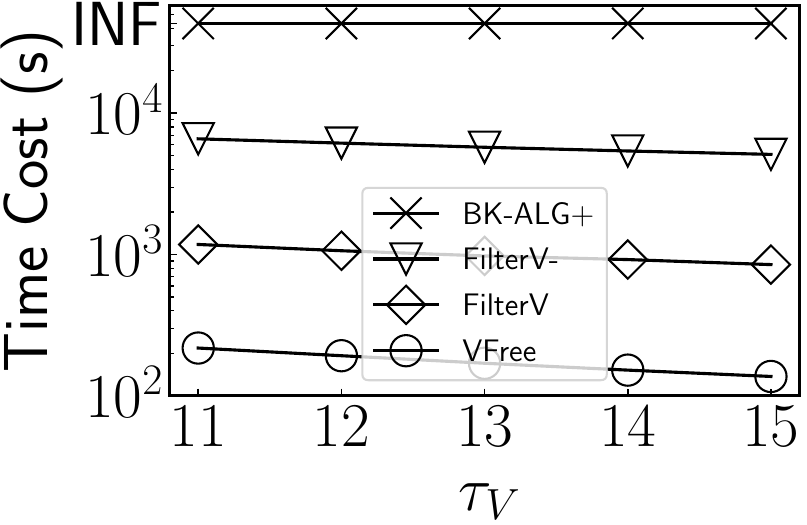}
        \caption{{\footnotesize D15 ($\tau_V$)}} 
    \end{subfigure}
    \begin{subfigure}{0.15\textwidth}
    \includegraphics[width=\textwidth]{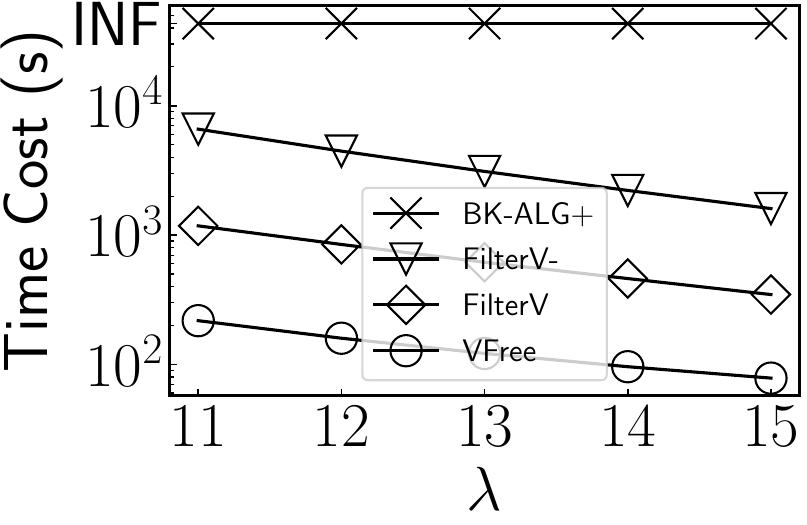}
        \caption{{\footnotesize D15 ($\lambda$)}}
    \end{subfigure}
    \caption{{Response time by varying $\tau_U$, $\tau_V$ and $\lambda$}}
    \label{fig:exp-2}
\end{figure}

\begin{figure}[t]
    \centering
    \begin{subfigure}{0.15\textwidth}
    \includegraphics[width=\textwidth]{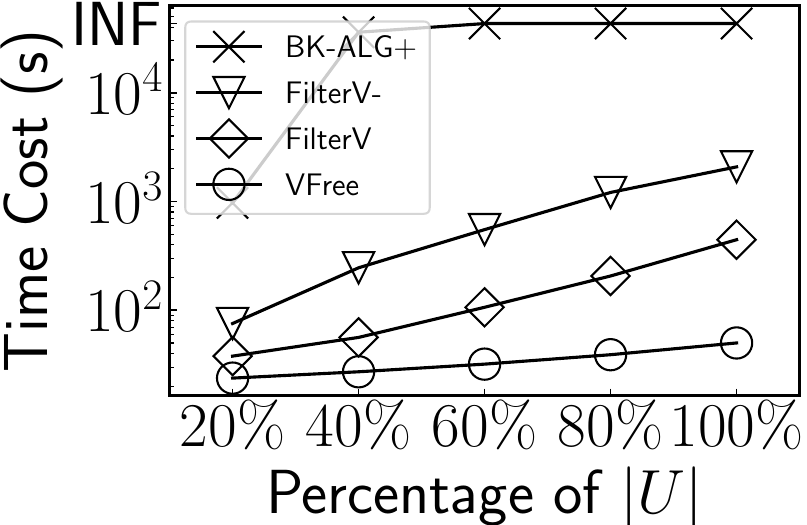}
        \caption{{\footnotesize D14 ($|U|$)}} 
    \end{subfigure}
        \begin{subfigure}{0.15\textwidth}
        \includegraphics[width=\textwidth]{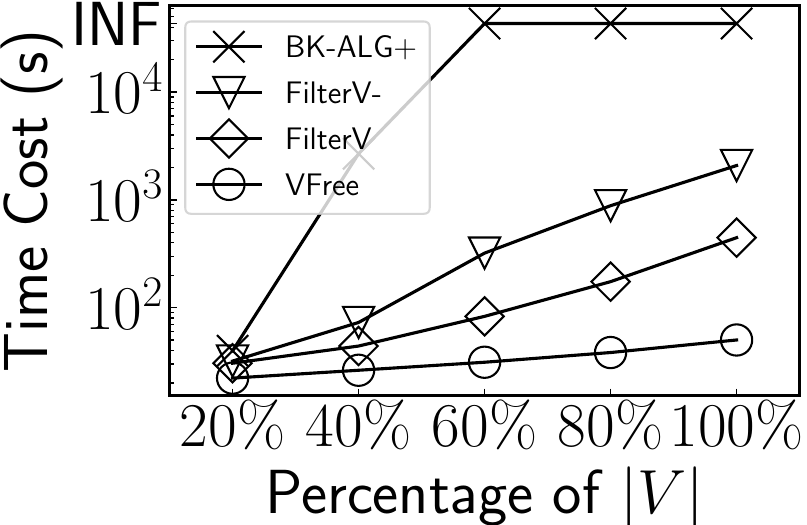}
        \caption{{\footnotesize D14 ($|V|$)}} 
    \end{subfigure}
    \begin{subfigure}{0.15\textwidth}
        \includegraphics[width=\textwidth]{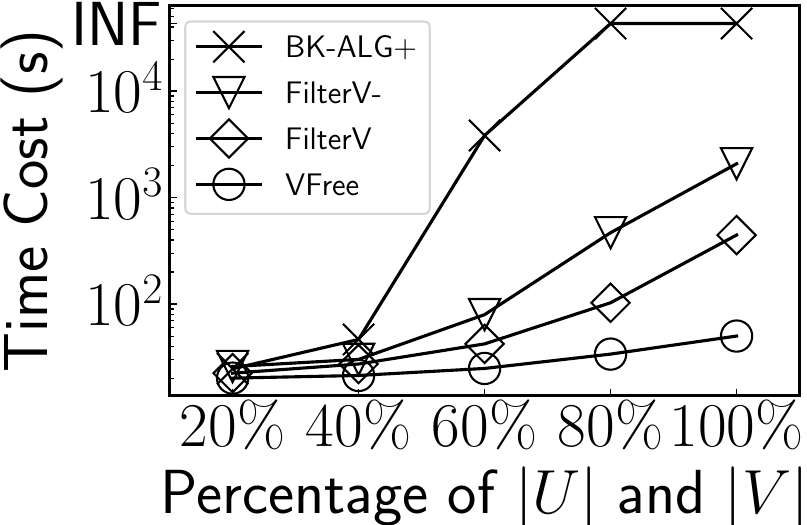}
        \caption{{\footnotesize D14 ($|U|$, $|V|$)}} 
    \end{subfigure}
    \\[0.2cm]
    \begin{subfigure}{0.15\textwidth}
        \includegraphics[width=\textwidth]{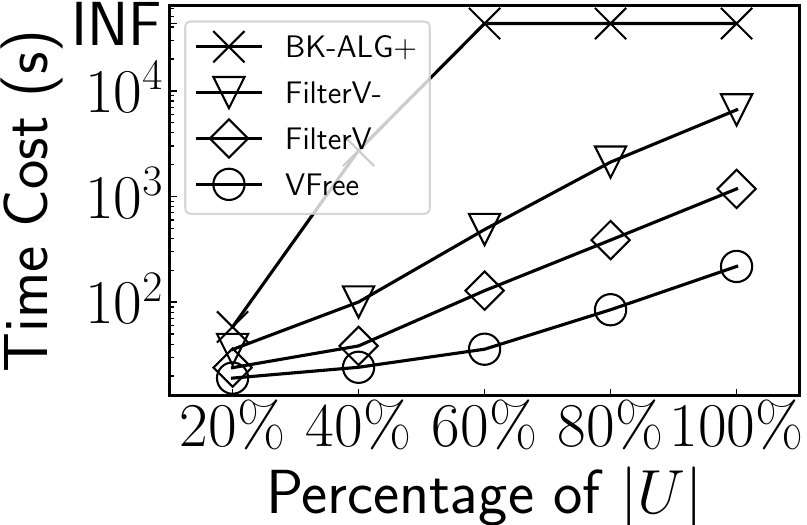}
        \caption{{\footnotesize D15 ($|U|$)}} 
    \end{subfigure}
    \begin{subfigure}{0.15\textwidth}
        \includegraphics[width=\textwidth]{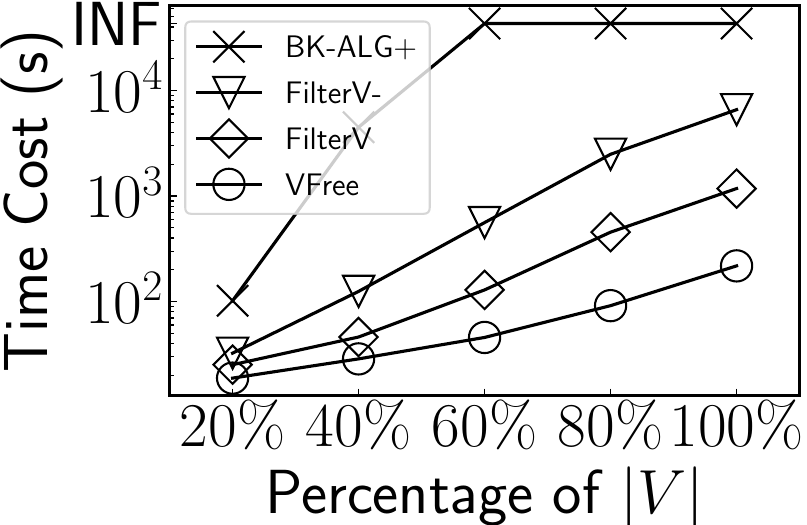}
        \caption{{\footnotesize D15 ($|V|$)}} 
    \end{subfigure}
    \begin{subfigure}{0.15\textwidth}
        \includegraphics[width=\textwidth]{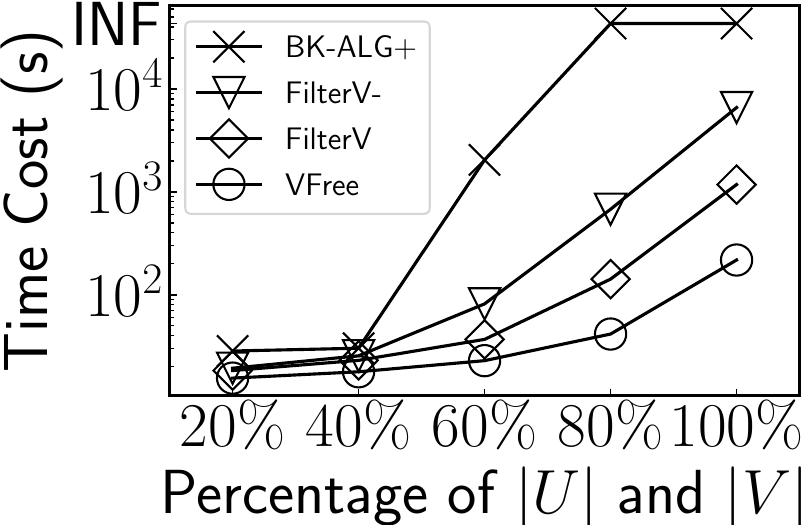}
        \caption{{\footnotesize D15 ($|U|$, $|V|$)}} 
    \end{subfigure}
    \caption{Response time by varying $|U|$ and $|V|$}
    \label{fig:exp-3}
\end{figure}

\myparagraph{Exp-2: Response time by varying parameters}
{In Figures \ref{fig:exp-2}(a) to \ref{fig:exp-2}(f), we report the response time of \bkalgp, \filtervr, \filterv and \vfree on the two largest datasets D14 and D15 by varying parameters $\tau_U$, $\tau_V$ and $\lambda$, respectively. 
Note that, in these experiments, we use default settings for the other unchanged parameters.}
As shown, \vfree is faster than the other three algorithms under all the parameter settings.
Besides, the response time of all the algorithms decreases when the parameters increase. 
{This is because more vertices can be skipped and larger search space can be pruned by the graph filtering technique due to the tighter constraints.}
\bkalgp fails to return the results in a reasonable time on most datasets even with large parameters. 


\myparagraph{Exp-3: Response time by varying $|U|$ and $|V|$}
In this experiment, we use the two largest datasets to demonstrate the scalability of the algorithms.
Specifically, for each dataset, we randomly select 20\%-80\% vertices from $U$ (resp. $V$) to form four new graphs, and Figure \ref{fig:exp-3}(a), \ref{fig:exp-3}(d) (resp. Figure \ref{fig:exp-3}(b), \ref{fig:exp-3}(e)) report the corresponding response time of \bkalgp, \filtervr, \filterv and \vfree. 
Besides, we randomly select 20\%-80\% vertices from both $U$ and $V$ to form four new graphs.
The response time of these four algorithms are illustrated in Figure \ref{fig:exp-3}(c), \ref{fig:exp-3}(f) under the default parameter settings. 
We can find \vfree and \filterv outperform the other algorithms and scale well.
As the number of vertices in $|U|$ and $|V|$ increases, 
the response time of all the algorithms grows due to the larger search space.
As observed, \vfree outperforms the other algorithms by a significant margin and scales well.
For example, when the percentage of $|U|$ is 20\% in Figure \ref{fig:exp-3}(d), the response time of \bkalgp, \filtervr, \filterv and \vfree are 58.38 seconds, 35.74 seconds, 24.05 seconds and 19.15 seconds, respectively.
When the percentage of $|U|$ is 80\%, \bkalgp cannot return results within 12 hours.
The response time of the other three algorithms are 2114.26 seconds, 386.98 seconds and 84.90 seconds, respectively.
In Figure \ref{fig:exp-3}(f), when the percentage of $|U|$ and $|V|$ is 20\% in D15, the response time of \bkalgp, \filtervr, \filterv and \vfree are 28.44 seconds, 19.37 seconds, 18.38 seconds and 15.52 seconds, respectively.
When the percentage of $|U|$ and $|V|$ is 80\%, \bkalgp cannot return results within 12 hours.
The response time of the other three algorithms are 678.02 seconds, 142.06 seconds and 41.58 seconds, respectively.

\begin{figure}[t]
    \centering
    \begin{subfigure}{0.15\textwidth}
    \includegraphics[width=\textwidth]{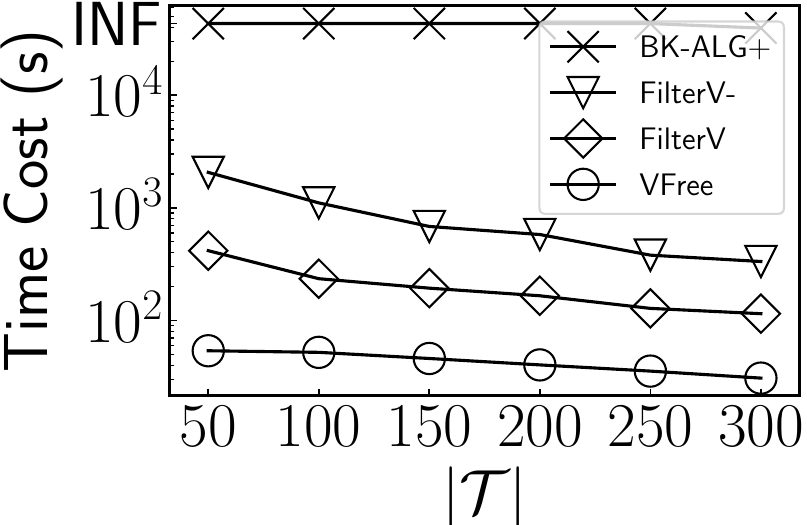}
        \caption{\footnotesize D14 ($|\mathcal{T}|$)} 
    \end{subfigure}
    \begin{subfigure}{0.15\textwidth}
        \includegraphics[width=\textwidth]{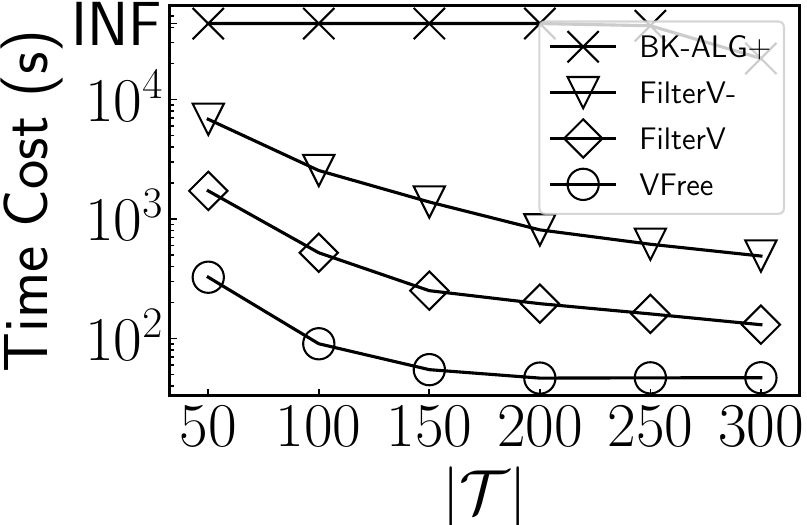}
        \caption{\footnotesize D15 ($|\mathcal{T}|$)} 
    \end{subfigure}
    \begin{subfigure}{0.15\textwidth}
        \includegraphics[width=\textwidth]{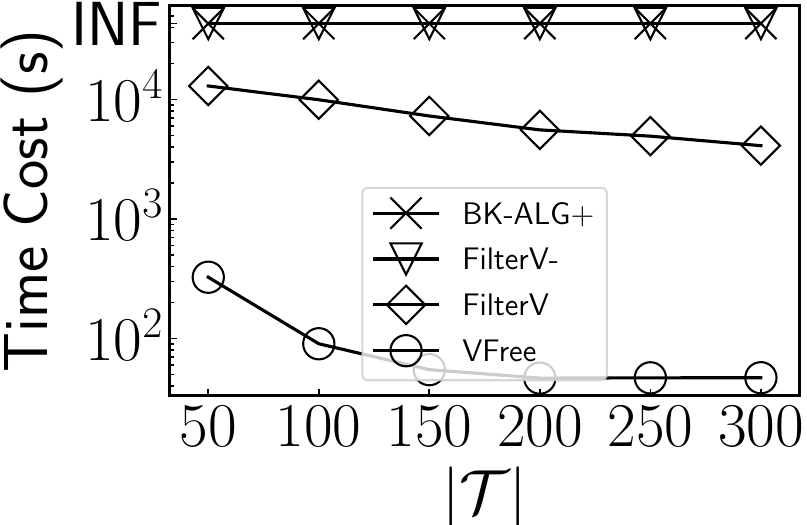}
        \caption{\footnotesize D16 ($|\mathcal{T}|$)} 
    \end{subfigure}
    \\[0.2cm]
    \begin{subfigure}{0.15\textwidth}
    \includegraphics[width=\textwidth]{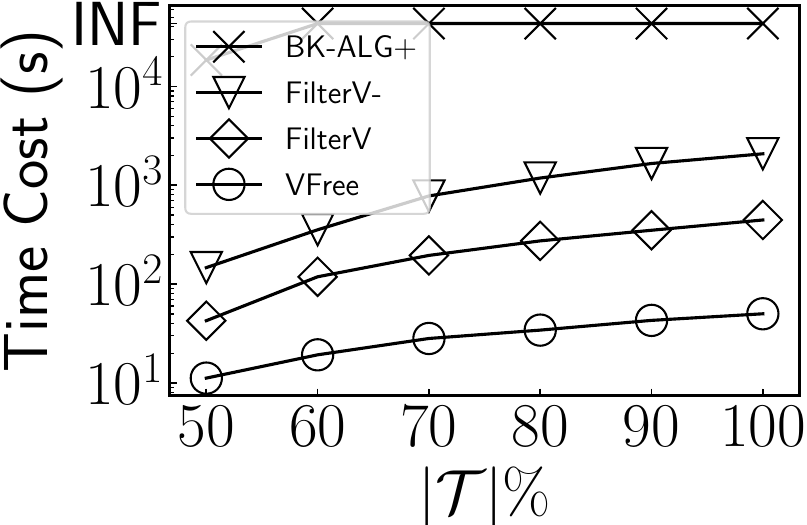}
        \caption{\footnotesize D14 ($|\mathcal{T}|$\%)} 
    \end{subfigure}
    \begin{subfigure}{0.15\textwidth}
        \includegraphics[width=\textwidth]{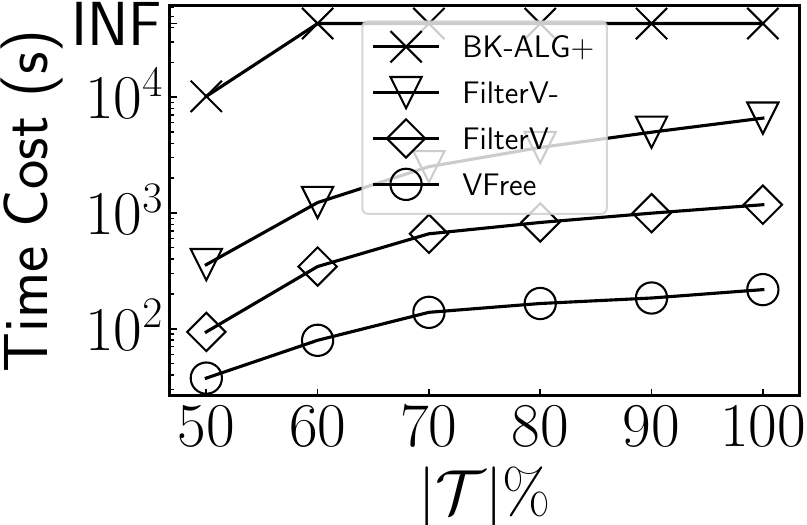}
        \caption{\footnotesize D15 ($|\mathcal{T}|$\%)} 
    \end{subfigure}
    \begin{subfigure}{0.15\textwidth}
        \includegraphics[width=\textwidth]{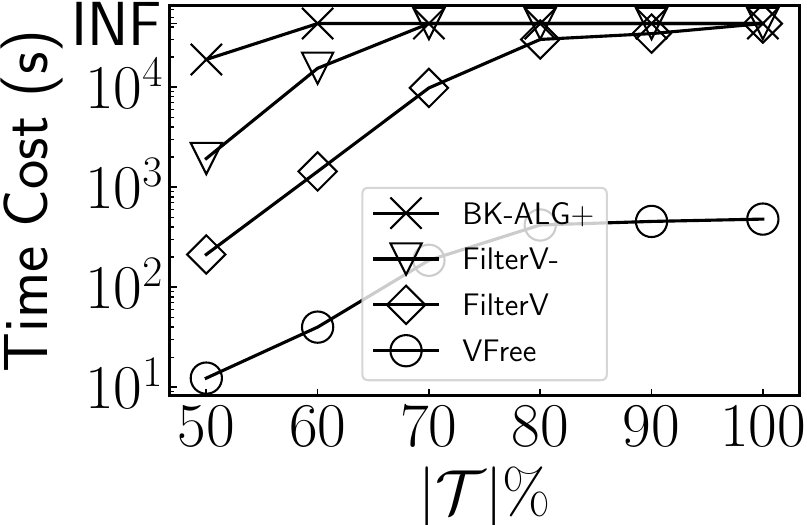}
        \caption{\footnotesize D16 ($|\mathcal{T}|$\%)} 
    \end{subfigure}
    \caption{Response time by varying $|\mathcal{T}|$ and $|\mathcal{T}|\%$}
    \label{fig:exp-varyt}
\end{figure}


\myparagraph{Exp-4: Response time by varying $|\mathcal{T}|$ and $|\mathcal{T}|$\%}
In this experiment, 
    we report the response time of \bkalgp, \filtervr, \filterv and \vfree by varying the settings of time span on D14, D15 and D16. 
    D16
    is a larger graph (with over 256 million edges) employed to further demonstrate the performance of proposed techniques, and default parameters for D16 are $\tau_U=$ 10, $\tau_V=$ 15 and $\lambda=$ 10.
    We conduct two sets of experiments. 
    $i)$ For each dataset, we generate 6 temporal bipartite graphs by varying $|\mathcal{T}|$, i.e.,   
    setting different time scales.
    The results are shown in Figures~\ref{fig:exp-varyt}(a)-\ref{fig:exp-varyt}(c).
    As observed, \vfree outperforms the other algorithms by a significant margin and scales well.
    For example, when $|\mathcal{T}|=300$ in D16,  \bkalgp and \filtervr cannot return results within 12 hours.
    The response time of \filterv and \vfree are 4105.32 seconds and 46.86 seconds, respectively.
    The response time of all the algorithms decreases when $|\mathcal{T}|$ increases.
    This is because, as $|\mathcal{T}|$ increases, the number of edges in each snapshot decreases,
    leading to more space pruned based on the cohesive constraint. 
    $ii)$ For each dataset, we keep $|\mathcal{T}|$ unchanged as the default value, i.e., $66$ for D14, $67$ for D15 and $134$ for D16, and generate 6 temporal bipartite graphs by covering the edges in the first 50\%-100\% timestamps, i.e., $|\mathcal{T}|\%$. The results are shown in Figures~\ref{fig:exp-varyt}(d)-\ref{fig:exp-varyt}(f). 
    As observed, \vfree outperforms the other algorithms.
    The response time of all algorithms grows with the increase of $|\mathcal{T}|\%$ due to the larger search space.
    The above experiments demonstrate that our proposed algorithms are scalable towards different temporal settings.

\begin{figure}[t]
    \centering
    \begin{subfigure}{0.15\textwidth}
        \includegraphics[width=\textwidth]{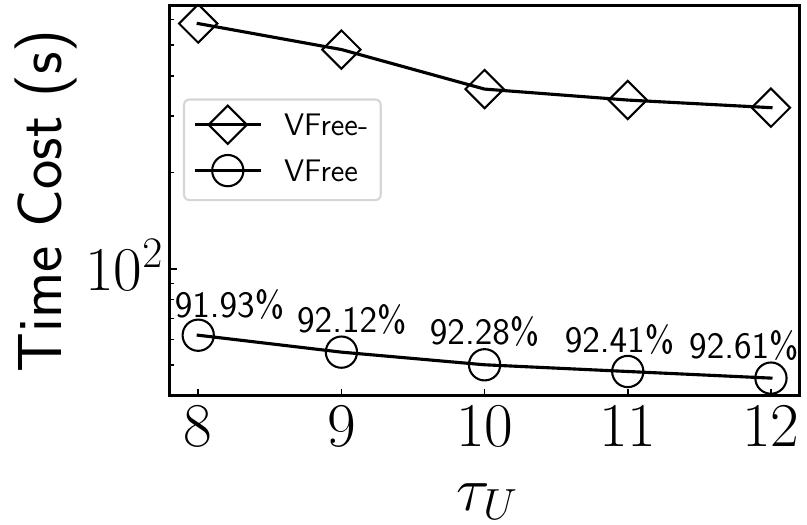}
        \caption{{\footnotesize D14 ($\tau_U$)}} 
    \end{subfigure}
        \begin{subfigure}{0.15\textwidth}
        \includegraphics[width=\textwidth]{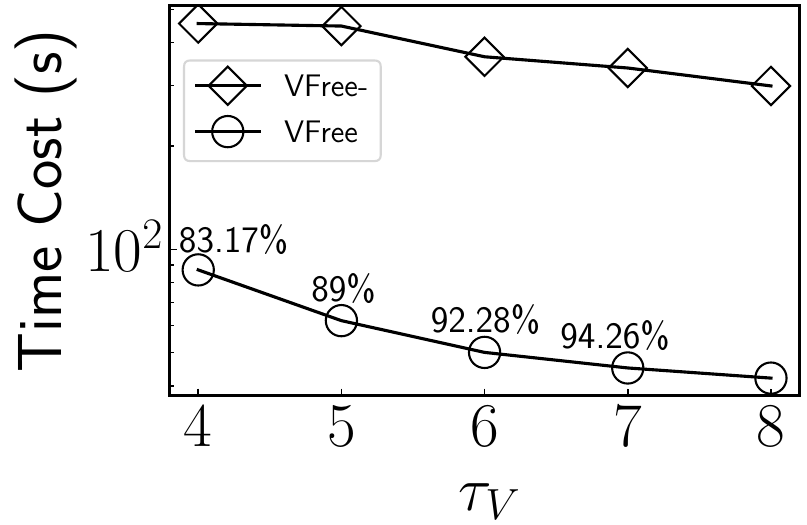}
        \caption{{\footnotesize D14 ($\tau_V$)}} 
    \end{subfigure}
    \begin{subfigure}{0.15\textwidth}
        \includegraphics[width=\textwidth]{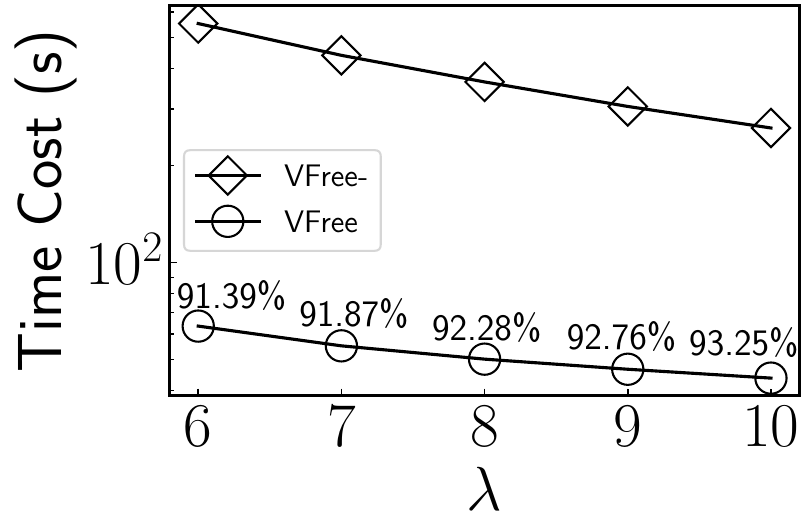}
        \caption{{\footnotesize D14 ($\lambda$)}}
    \end{subfigure}
    \\[0.2cm]
    \begin{subfigure}{0.15\textwidth}
        \includegraphics[width=\textwidth]{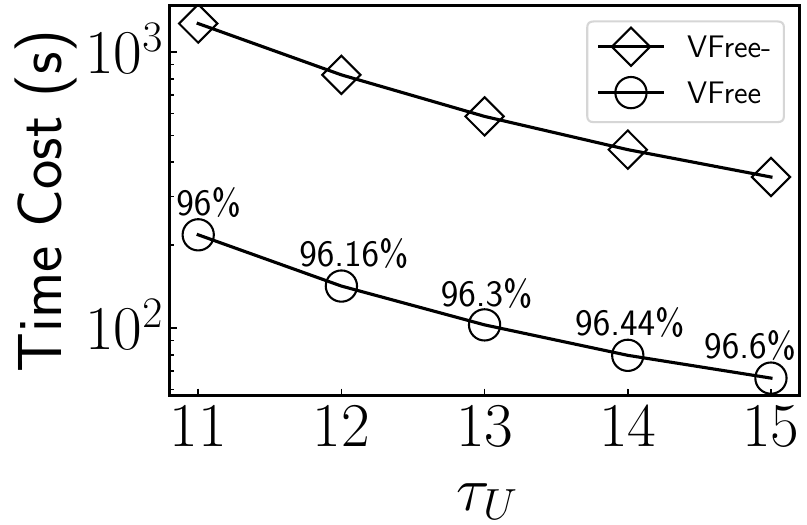}
        \caption{{\footnotesize D15 ($\tau_U$)}} 
    \end{subfigure}
    \begin{subfigure}{0.15\textwidth}
        \includegraphics[width=\textwidth]{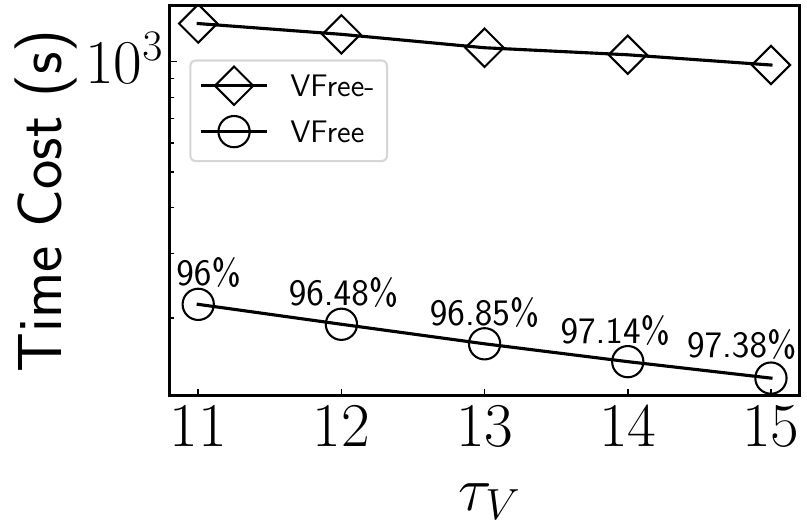}
        \caption{{\footnotesize D15 ($\tau_V$)}} 
    \end{subfigure}
    \begin{subfigure}{0.15\textwidth}
        \includegraphics[width=\textwidth]{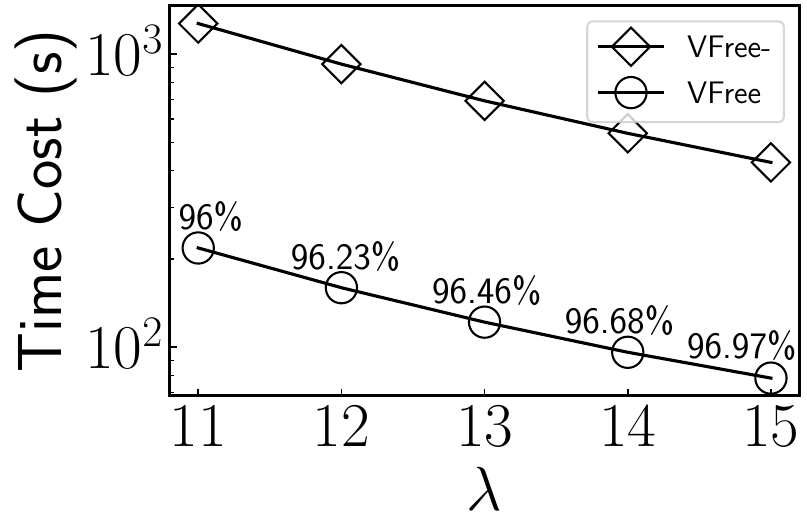}
        \caption{{\footnotesize D15 ($\lambda$)}}
    \end{subfigure}
    \caption{{Evaluation of the graph filtering technique}}
    \label{fig:exp-5-2}
\end{figure}

\begin{figure}[t]
    \centering
    \begin{subfigure}{0.15\textwidth}
        \includegraphics[width=\textwidth]{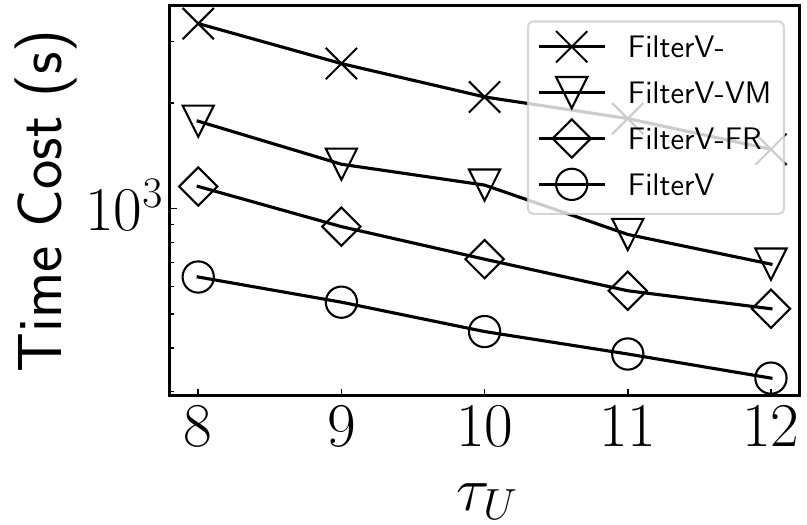}
        \caption{{\footnotesize D14 ($\tau_U$)}} 
    \end{subfigure}
    \begin{subfigure}{0.15\textwidth}
        \includegraphics[width=\textwidth]{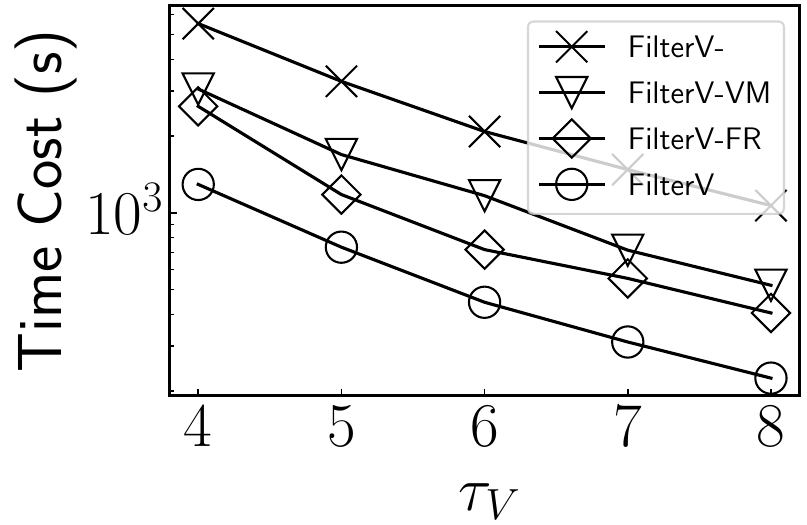}
        \caption{{\footnotesize D14 ($\tau_V$)}} 
    \end{subfigure}
    \begin{subfigure}{0.15\textwidth}
        \includegraphics[width=\textwidth]{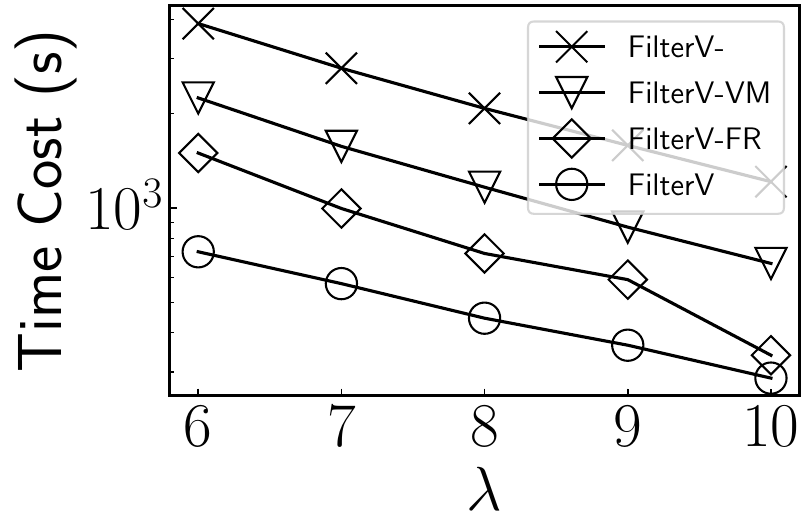}
        \caption{{\footnotesize D14 ($\lambda$)}}
    \end{subfigure}
    \\[0.2cm]
    \begin{subfigure}{0.15\textwidth}
        \includegraphics[width=\textwidth]{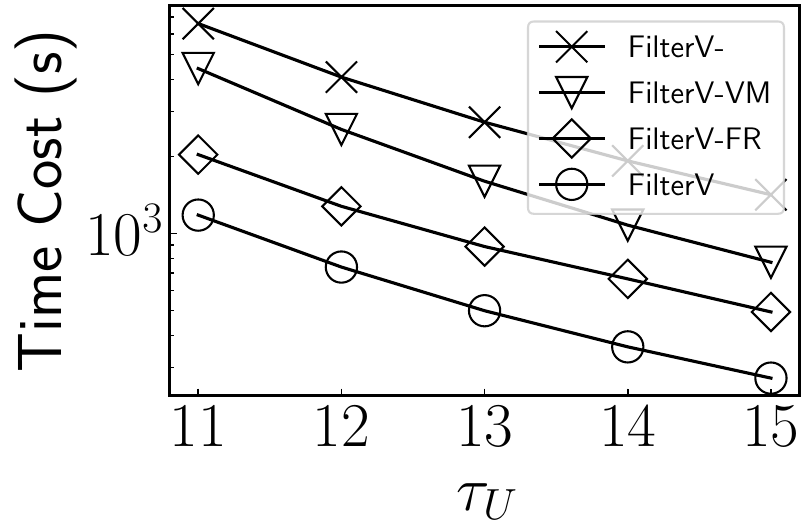}
        \caption{{\footnotesize D15 ($\tau_U$)}} 
    \end{subfigure}
    \begin{subfigure}{0.15\textwidth}
        \includegraphics[width=\textwidth]{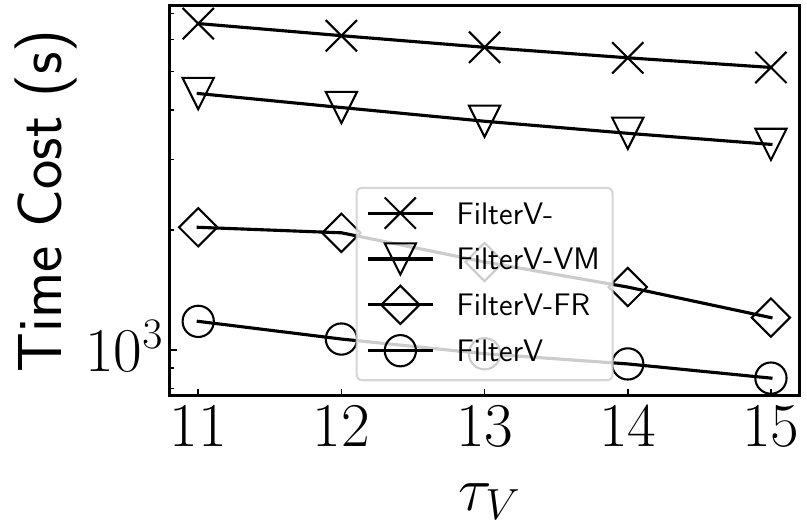}
        \caption{{\footnotesize D15 ($\tau_V$)}} 
    \end{subfigure}
    \begin{subfigure}{0.15\textwidth}
        \includegraphics[width=\textwidth]{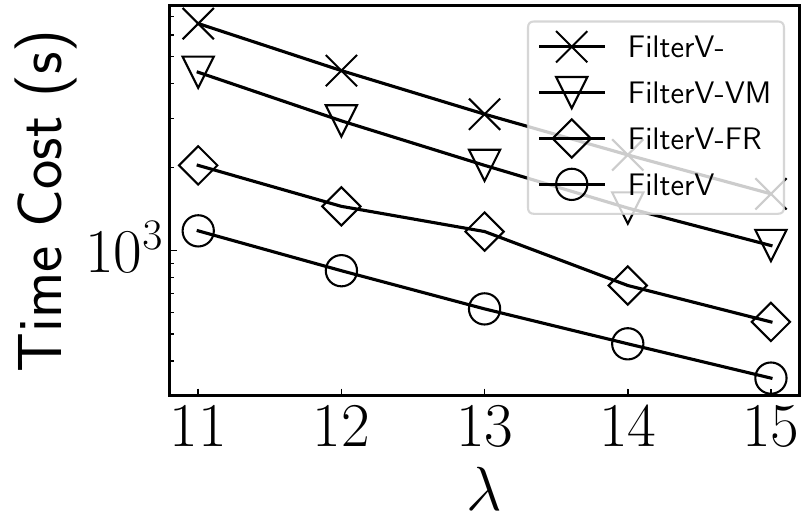}
        \caption{{\footnotesize D15 ($\lambda$)}}
    \end{subfigure}
     
    \caption{{Evaluation of the candidate filtering rule and the verification method}}
    \label{fig:exp-6}
\end{figure}

\myparagraph{Exp-5: Evaluation of graph filtering technique}
{To evaluate the performance of the graph filtering technique, i.e., $(\tau_V, \tau_U, \lambda)$-core based pruning, we conduct the experiments on the two largest datasets D14 and D15 by varying $\tau_U$, $\tau_V$ and $\lambda$, respectively. 
For each experiment, settings for other unchanged parameters follow the default values.
The results are shown in Figure~\ref{fig:exp-5-2}, where the two lines denote the response time of \vfreer and \vfree, and the value above the line of \vfree is the corresponding percentage of edges pruned. 
As shown, compared to the original graph, the graph filtering technique can significantly reduce the number of edges, which validates the effectiveness of the proposed $(\tau_V, \tau_U, \lambda)$-core model. 
{For instance, in Figure~\ref{fig:exp-5-2}(f), 96.97\% edges can be pruned when $\lambda=15$.} 
In most cases, the graph filtering technique can prune more than 90\% of the edges in the graph.
Furthermore, with the increase of parameters, more edges can be pruned due to the tighter constraint.
In addition, the algorithms run faster with the increase of parameters.
Compared with \vfreer, \vfree can achieve {up to 9x} speedup due to the graph filtering technique.
{For example, when $\tau_U=8$ in Figure~\ref{fig:exp-5-2}(a), \vfreer needs 583 seconds to return the result, but \vfree only needs 62 seconds.}}

\myparagraph{Exp-6: Evaluation of the candidate filtering rule and the verification method}
In this experiment, we evaluate the performance of the candidate filtering rule (\ie Lemma \ref{lemma-a2-filter}) and the verification method (\ie Algorithm \ref{alg:fc}).
{In Figure \ref{fig:exp-6}, We report the response time of \filterv, \filtervfr, \filtervvm and \filtervr on D14 and D15  by varying $\tau_U$, $\tau_V$ and $\lambda$, respectively.
The results demonstrate that our filtering rule and validation method can significantly improve the performance of algorithm under all the parameter settings.
For example, in D14 with parameters (10,6,8), 
{\filtervfr, \filtervvm and \filtervr take 716 seconds,
1166 seconds and 2081 seconds to return the result, respectively.}
\filterv can return the result within 445 seconds, which verifies the advantage of proposed techniques.}

\begin{figure}[t]
    \centering
    \begin{subfigure}{0.15\textwidth}
        \includegraphics[width=\textwidth]{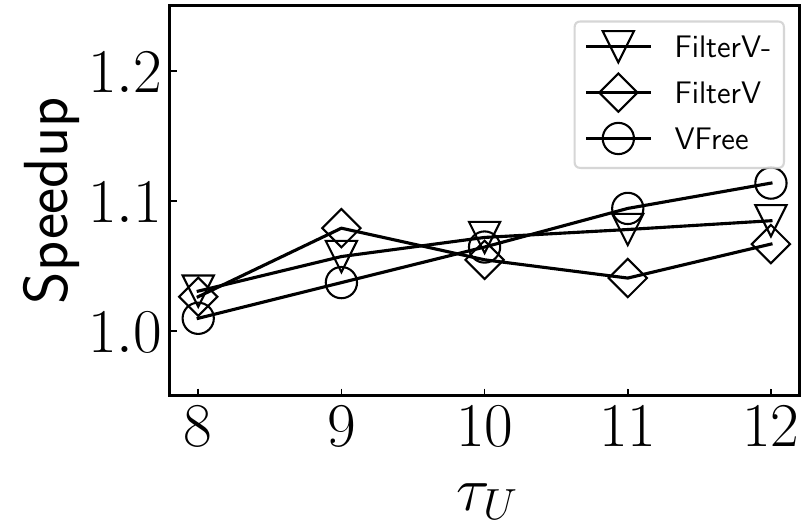}
        \caption{\footnotesize D14 ($\tau_U$)} 
    \end{subfigure}
    \begin{subfigure}{0.15\textwidth}
        \includegraphics[width=\textwidth]{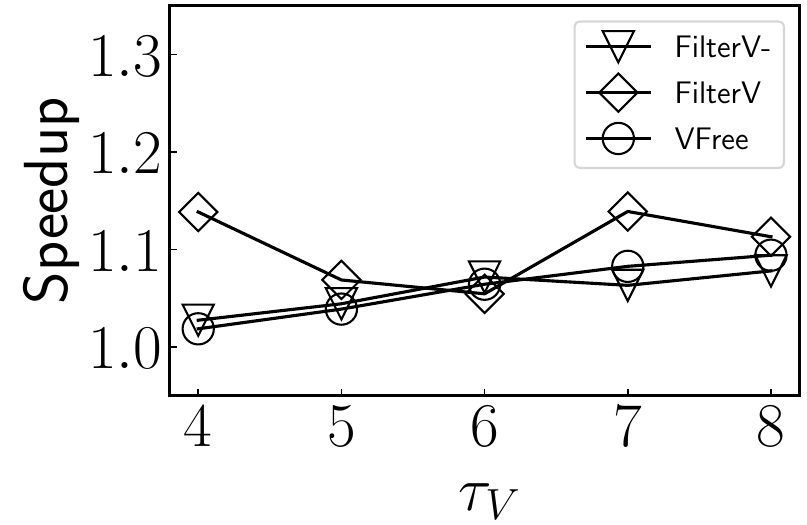}
        \caption{\footnotesize D14 ($\tau_V$)} 
    \end{subfigure}
    \begin{subfigure}{0.15\textwidth}
        \includegraphics[width=\textwidth]{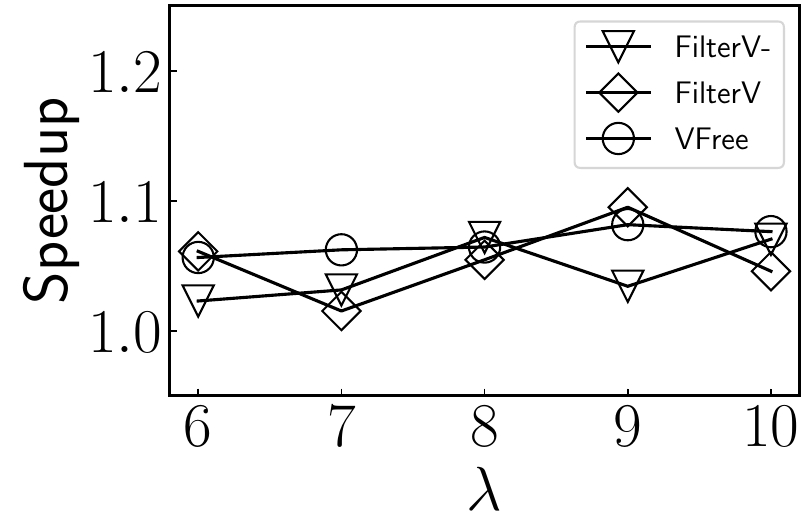}
        \caption{\footnotesize D14 ($\lambda$)}
    \end{subfigure}
    \\[0.2cm]
    \begin{subfigure}{0.15\textwidth}
        \includegraphics[width=\textwidth]{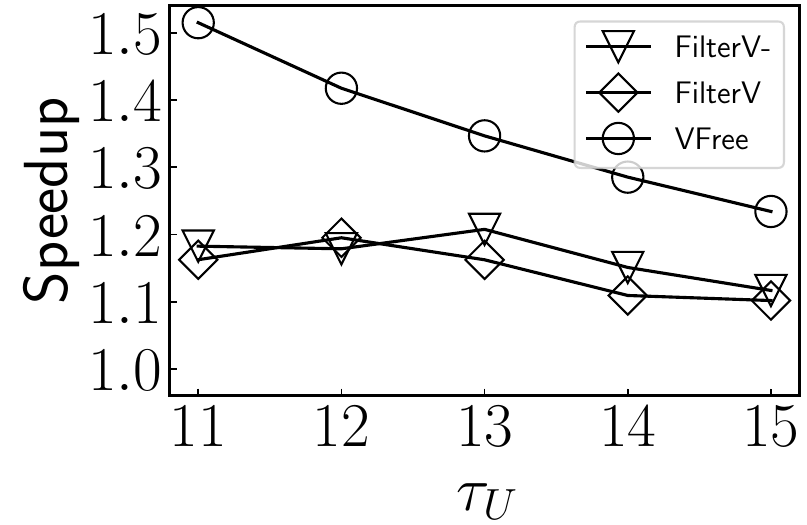}
        \caption{\footnotesize D15 ($\tau_U$)} 
    \end{subfigure}
    \begin{subfigure}{0.15\textwidth}
        \includegraphics[width=\textwidth]{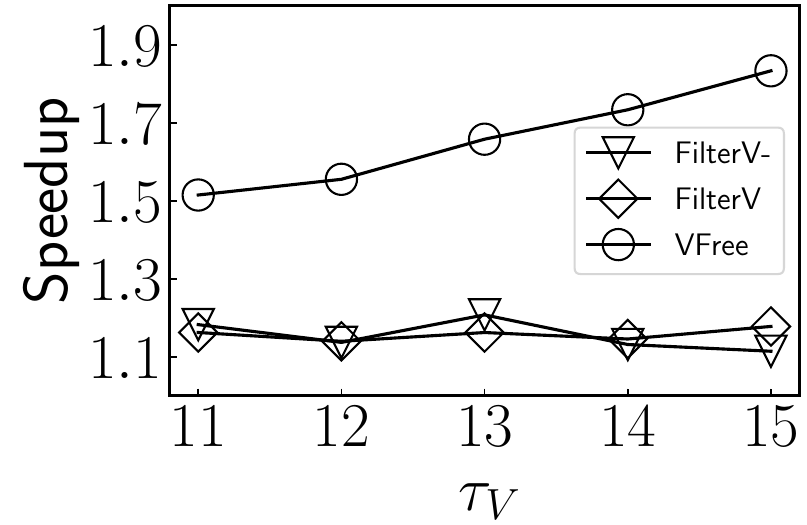}
        \caption{\footnotesize D15 ($\tau_V$)} 
    \end{subfigure}
    \begin{subfigure}{0.15\textwidth}
        \includegraphics[width=\textwidth]{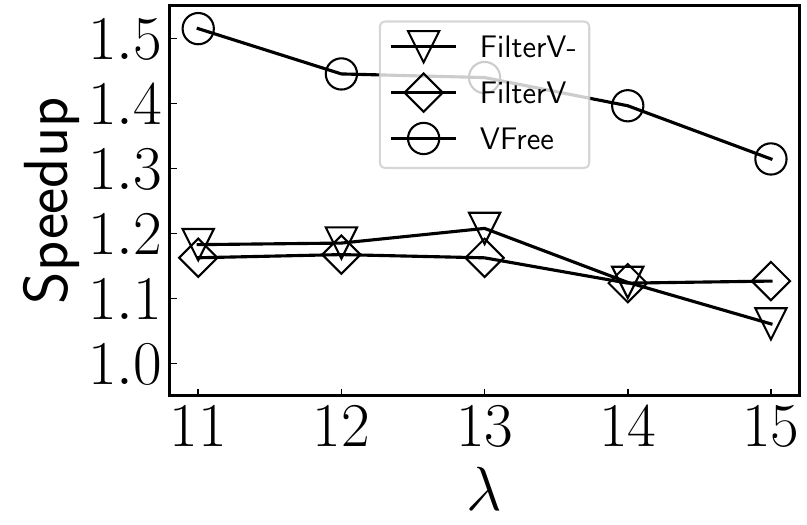}
        \caption{\footnotesize D15 ($\lambda$)}
    \end{subfigure}
     
    \caption{Evaluation of the vertex ID reorder technique}
    \label{fig:exp-new7}
\end{figure}

\myparagraph{Exp-7: Evaluation of the ID reorder technique}
To evaluate the impact of the ID reorder technique in \vfree, we conduct the experiments on D14 and D15 by varying $\tau_U$, $\tau_V$ and $\lambda$, respectively.
For \filtervr, \filterv and \vfree, we report the \textit{speedup ratio} of ID reorder technique, which is calculated by $\frac{\text{response time of algorithms without ID reorder}}{\text{response time of algorithms with ID reorder}}$.
The results are shown in Figure~\ref{fig:exp-new7}.
Note that, we omit \bkalgp here, since it cannot finish on D14 and D15 within 12 hours
even with the ID reorder technique.
As observed, the ID reorder technique can improve the efficiency of algorithms. 
For instance, the speedup for \vfree is up to 1.8x.

\subsection{Effectiveness Evaluation}

\exppara{Exp-8: Multimorbidity detection}
To demonstrate the effectiveness of proposed model, we conduct a case study on D1 (MIMIC-III) for potential multimorbidity detection. 
    MIMIC-III is a real clinical dataset representing relationships between patients and health conditions, where the timestamp denotes the time of diagnosis \cite{johnson2016mimic,johnson2016original,johnson2016original3}.
    By applying our model \mrg, we can model the situation where multiple health conditions appear for different patients at multiple times simultaneously.
    The partially identified \mrgs are shown in Table~\ref{tab:case1}. 
    For instance, `SEPSIS' and `PNEUMONIA' are highly correlated, since pneumonia is a common cause of sepsis.
    Moreover, 
    we compare \mrg with two variants based on existing models,~i.e., $i)$~maximal frequent $(\tau_U,\tau_V)$-biclique (\textsf{MFB}) and $ii)$ maximal static group (\textsf{MSG}).
    \textsf{MFB} is the maximal $(\tau_U,\tau_V)$-biclique with frequency at least $\lambda$, i.e., appearing in at least $\lambda$ snapshots.
    \textsf{MSG} is the maximal group included in $(\tau_U,\tau_V)$-biclique of the corresponding static graph.
  As shown in Table~\ref{tab:case1}, these two models cannot obtain practical results.
    Specifically, we cannot find any results by applying \textsf{MFB} model due to the tight constraint.
    For \textsf{MSG}, the identified groups may be too large to provide practical information due to neglect of temporal aspect.

\begin{table}
\caption{Case study on D1 ($\tau_U=\tau_V=2$ and $\lambda=6$)}
\label{tab:case1}
\centering
  \footnotesize
\begin{tabularx}{\linewidth}{|m{0.7cm}<{\centering}|m{7.03cm}|}
    \hline
    \centering \textbf{Model} & \multicolumn{1}{c|}{\textbf{Partial results}} \\
    \hline
        \hline
    \centering \mrg & $\{$SEPSIS, PNEUMONIA$\}$, $\{$GASTROINTESTINAL BLEED, LOWER GI BLEED$\}$, $\{$ASTHMA, COPD EXACERBATION, PNEUMONIA$\}$, $\{$UPPER GI BLEED, LOWER GI BLEED$\}$, $~\dots$ \\
    \hline
    \centering \textsf{MSG} &  $\{$CHRONIC OBST PULM DISEASE, CHRONIC~OBSTRUCTIVE PULMONARY, RESPIRATORY FAILURE, PNEUMONIA, COPD EXACERBATION, ASTHMA$\}$, 
    $\{$HYPERTENSIVE EMERGENCY, HYPERTENSIVE URGENCY, ABDOMINAL PAIN, DIABETIC KETOACIDOSIS$\}$, $~\dots$ \\
    \hline
    \centering \textsf{MFB} & {N/A}\\
    \hline
\end{tabularx}
\end{table}

\begin{figure}[t]
    \centering
    \begin{subfigure}{0.15\textwidth}
    \includegraphics[width=\textwidth]{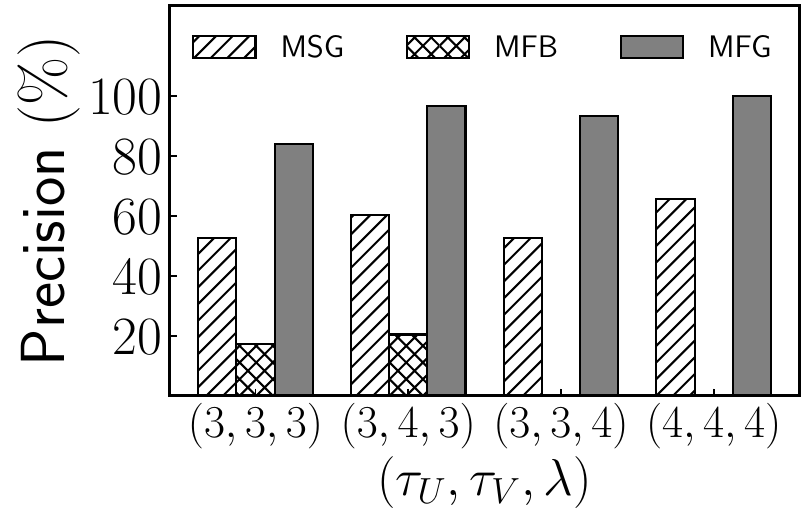}
        \caption{\small Precision} 
    \end{subfigure}
        \begin{subfigure}{0.15\textwidth}
        \includegraphics[width=\textwidth]{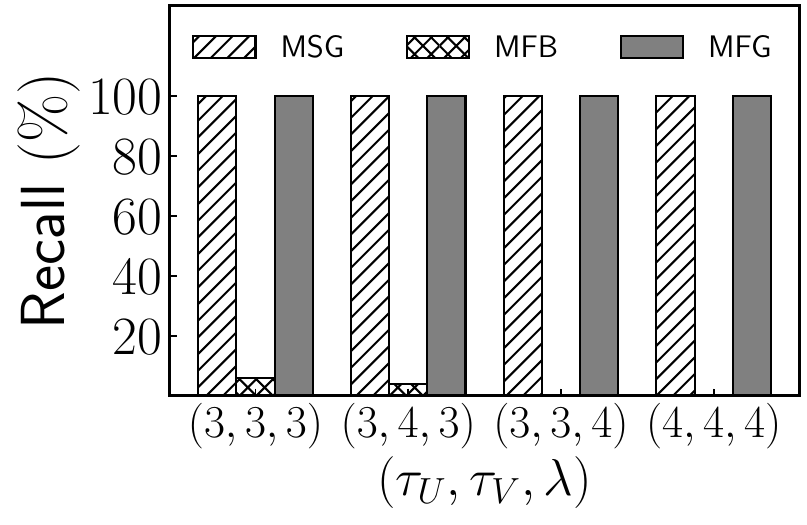}
        \caption{\small Recall} 
    \end{subfigure}
    \begin{subfigure}{0.15\textwidth}
        \includegraphics[width=\textwidth]{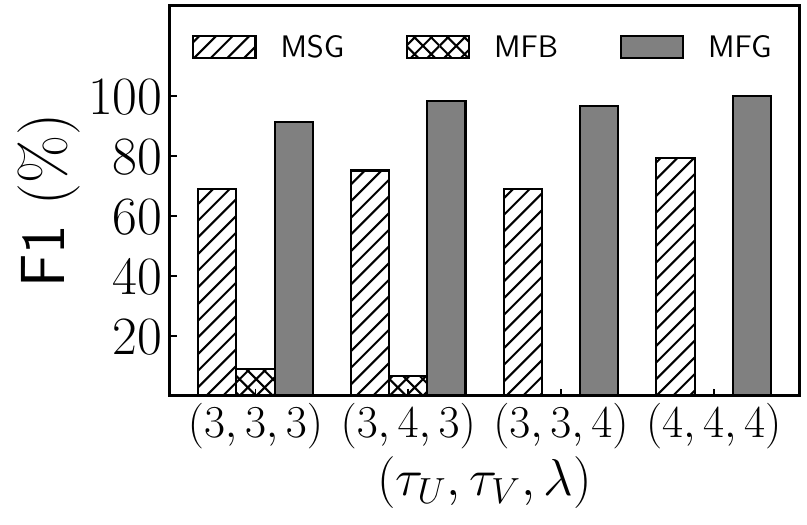}
        \caption{\small F1 score} 
    \end{subfigure}
    \caption{Case study on D9}
    \label{fig:exp-fd}
\end{figure}

\begin{figure}[t]
    \centering
    \begin{subfigure}{0.15\textwidth}
    \includegraphics[width=\textwidth]{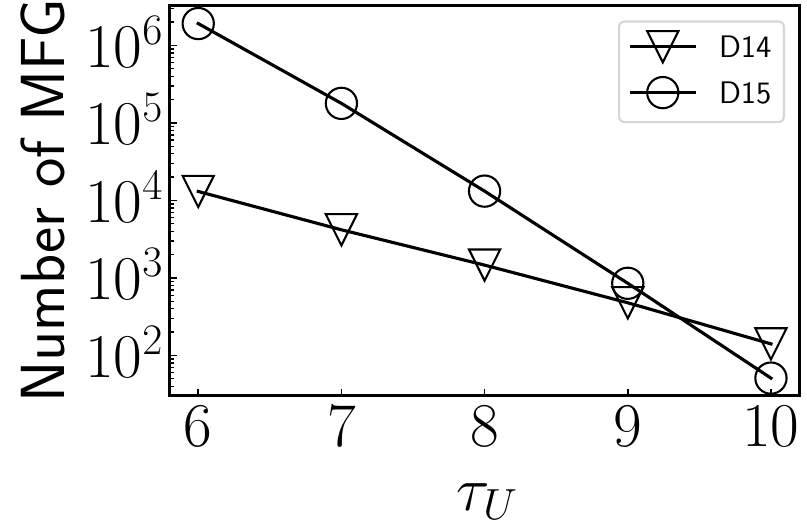}
        \caption{\footnotesize D14, D15 ($\tau_U$)} 
    \end{subfigure}
        \begin{subfigure}{0.15\textwidth}
        \includegraphics[width=\textwidth]{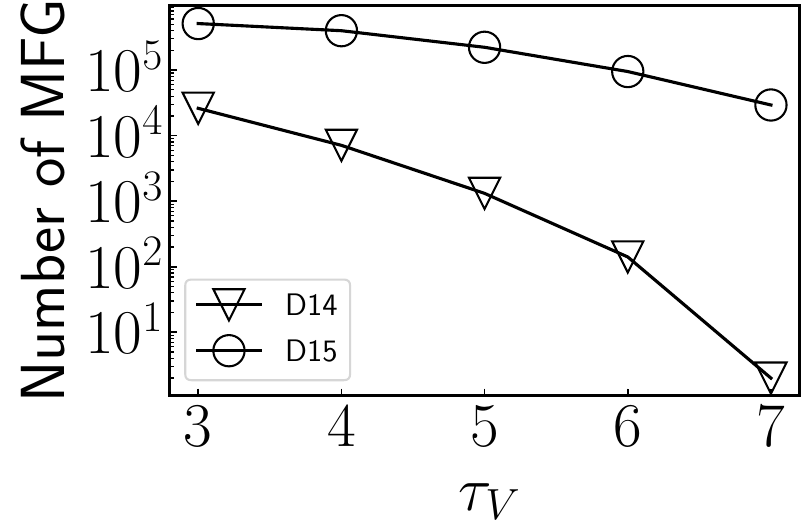}
        \caption{\footnotesize D14, D15 ($\tau_V$)} 
    \end{subfigure}
    \begin{subfigure}{0.15\textwidth}
        \includegraphics[width=\textwidth]{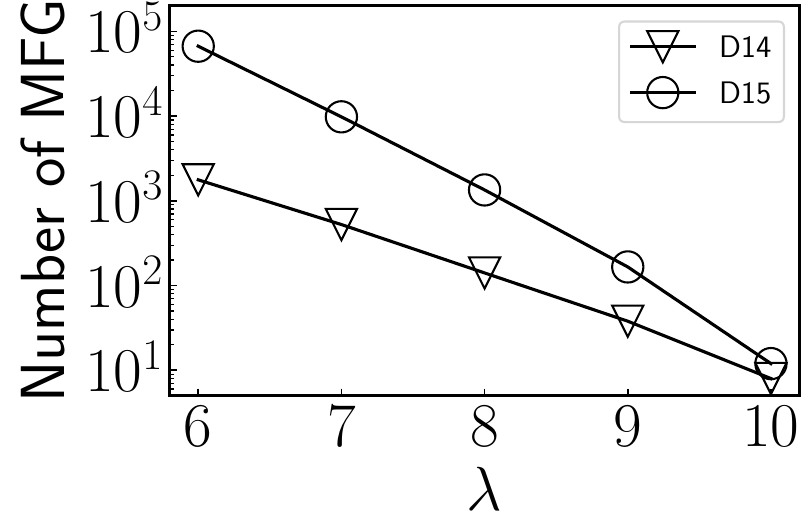}
        \caption{\footnotesize D14, D15 ($\lambda$)} 
    \end{subfigure}
    \caption{Number of \mrgs by varying parameters}
    \label{fig:exp-number}
\end{figure}


\myparagraph{Exp-9: Fraud detection}
In this case study, we demonstrate the application for a fraud detection task in face of the camouflage attack~\cite{hooi2016fraudar} on D9 (Amazon Ratings) compared with \textsf{MFB} and \textsf{MSG}.
Considering a practical scenario with a random camouflage attack on D9, we randomly select five timestamps and choose 2K products at each selected timestamp. 
    Specifically, at each of the selected timestamps, we introduce a fraudulent block containing 1K fake users, 1.5K fake comments and 0.5K camouflage comments to the data.
    The same fake users are used across selected timestamps.
Besides, at each selected timestamp, fake comments (resp. camouflage comments) are generated randomly between these injected fake users and the chosen products (resp. real products).
We categorize all users identified within the detected subgraphs as fake items and all others as real one, and report the precision, recall and F1 score in Figure~\ref{fig:exp-fd}.
    As observed, \mrg can achieve better results compared to the others, which demonstrates the effectiveness of our proposed model. 
    For \textsf{MFB}, it cannot find results under some settings, and we omit the corresponding value in the figure.
    For \textsf{MSG}, it usually has high recall but low precision since it
    would find large groups involving many users.

\myparagraph{Exp-10: Number of \mrgs}
In this experiment, we report the number of returned \mrgs on D14 and D15 by varying different parameters. 
The results are shown in Figure~\ref{fig:exp-number}. 
The settings for other unchanged parameters follow the default values. 
The number of \mrgs decreases with the increase of three parameters due to the tighter constraint required.
As shown, even with large parameters, we can still find \mrg patterns, which demonstrates the applicability of our model in different settings for real scenarios.
\section{Related Work}
\label{sec:rel}

In the literature, many subgraph models have been proposed for temporal unipartite graphs (\eg~\cite{ma2017fast,qin2019mining,qin2022mining,li2023persistent}).
Ma \etal~\cite{ma2017fast} study the connected temporal subgraph whose aggregated graph has the maximum cohesive density on temporal weighted graphs.
They develop algorithms based on the assumption that the weights of all edges
are increasing or decreasing in the same direction.
Li \etal~\cite{li2018persistent} define a novel $k$-core based model to capture the persistence of a community.
They first propose a novel temporal graph reduction method and then develop a novel branch and bound algorithm with several powerful pruning rules to find the result.
\cite{qin2019mining} leverages the clique model and investigates the $\sigma$-periodic $k$-clique enumeration problem.
The authors first prune the input temporal graph based on two novel relaxed periodic clique models.
Then, they propose a graph transformation
technique and efficiently enumerate the results on the transformed graph.
Qin~\etal~\cite{qin2020mining} propose a novel concept named $(\mu,\tau,\epsilon)$-stable core, to characterize the stable core nodes of the clusters.
They call a node $u$ is $(\mu,\tau,\epsilon)$-stable core if it has no less than $\mu$ neighbors that are simultaneously similar to itself in at least $\tau$ snapshots of the temporal graph.
Using the similar ideas of \cite{qin2019mining}, they first propose the weak and strong cores to significantly prune the unpromising nodes, and then identify all the stable clusters from the remaining graph.
There are also some studies that focus on bursting subgraph mining, 
which are established on the time-interval based constraint.
In \cite{qin2022mining}, 
Qin \etal study the problem of mining bursting cores, where the $(l,\delta)$-maximal bursting core model is developed.
They propose a novel dynamic programming algorithm to speedup the calculation of the segment density.
Zhang \etal \cite{zhang2023discovering} propose a frequency bursting pattern in temporal graphs. It tries to model the interactive behaviors that accumulate their frequencies the fastest.
Chu \etal \cite{chu2019online} aim to find the top-$k$ density bursting subgraphs, each of which is the subgraph that accumulates their density at the fastest speed in the temporal graph.
In \cite{lou2021time}, a new metric named $\mathbb{T}$-cohesiveness is proposed by jointly considering the time and topology dimensions.
For temporal bipartite graphs, 
Li \etal \cite{li2023persistent} study the community search problem, and a persistent community model is proposed based on $(\alpha,\beta)$-core, which has different semantics from ours. 
If we consider a temporal graph as a set of snapshots, the problems of frequent subgraph mining and multi-layer cohesive subgraph mining are correlated. 
Specifically, the goal of frequent subgraph mining is identifying frequently appeared subgraphs from a collection of graphs (e.g.,~\cite{jiang2013survey,liu2022satmargin}). 
As for multi-layer graph mining, many models are proposed to pinpoint structures, such as the $d$-coherent core and firm truss~\cite{zhu2018diversified,hashemi2022firmcore,behrouz2022firmtruss}.
Besides the above studies, there are some works focusing on dynamic graphs~(e.g., \cite{DBLP:journals/pvldb/TangLHGXL22,ditursi2017local,liu2019finding}).
For instance, Tang~\etal~\cite{DBLP:journals/pvldb/TangLHGXL22} propose a novel $(\theta,k)$-core reliable community in the weighted dynamic networks.
That is, a $k$-core with each edge weight no less than the weight threshold $\theta$ spans over a period of time.
To find the most reliable local community with the highest reliability score, they first filter the unpromising edges from the graph.
Then they develop an index structure and devise an index-based dynamic programming search algorithm.
Although many studies have been conducted over temporal or dynamic graphs, it is non-trivial to extend the existing solutions to our problem.

As the most cohesive structure in bipartite graphs, the biclique model has attracted significant attention.
The problem of maximal biclique enumeration has been widely studied in static bipartite graphs~\cite{abidi2020pivot,chen2022efficient,zhang2014finding,alexe2004consensus,liu2006efficient}.
Most of the existing biclique enumeration algorithms conduct the search by expanding the vertices from one side. 
Then, we can intersect their neighbors to form the corresponding biclique
\cite{abidi2020pivot,chen2022efficient,zhang2014finding}.
In \cite{gely2009enumeration}, the authors reduce the maximal biclique enumeration problem to the maximal clique enumeration problem.
In \cite{zhang2014finding}, Zhang \etal remove the unpromising candidates from the search branches by employing the branch-and-bound framework.
In \cite{abidi2020pivot}, Abidi \etal develop a novel pivot-based technique to block non-maximal search branches. 
In~\cite{chen2022efficient}, the authors accelerate the process of maximal biclique enumeration by proposing the concepts of unilateral coreness for individual vertices, unilateral order for each vertex set and unilateral convergence for a large sparse bipartite graph. 
Generally, in the unilateral core, for all vertices on one side of it, the number of their two-hop neighbors within must be no less than $k$.
The concepts of unilateral coreness, order and convergence are proposed based on the concept of unilateral core.
However, these concepts are orthogonal to our model and have different semantics.
In~\cite{yao2022identifying}, the authors propose a dense bipartite subgraph model, which considers similarity between vertices from the same side on static bipartite graphs.
To the best of our knowledge, although there are a few works considering the one-layer properties in bipartite graphs, no existing works study the unilateral frequency model.
There are also some studies that consider the biclique problem in various bipartite graphs, such as signed bipartite graph~\cite{sun2022maximal,sun2023maximum}.  
As we can see, few studies consider the case of temporal bipartite graphs. 
Moreover, due to the unilateral property of our model, the existing studies cannot be extended to solve our problem efficiently.

\section{Conclusion}
\label{sec:conc}

Temporal bipartite graph is an important data structure to model many real-world applications. 
To analyze the properties of temporal bipartite graphs, 
in this paper, we propose a novel model, named maximal $\lambda$-\frequent group, by considering both unilateral cohesiveness and temporal aspect. 
We first introduce a filter-and-verification method by extending the BK framework. 
Novel filtering techniques and array-based checking method are developed.
To further improve the performance, a verification-free approach is proposed based on advanced dynamic counting strategy, which can significantly reduce the cost of valid candidate set computation and avoid explicit maximality verification. Experiments over {15 real-world datasets} confirm the efficiency and effectiveness of proposed techniques and model.


\clearpage

\bibliographystyle{ACM-Reference-Format}
\bibliography{ref}

\end{document}